
\RequirePackage{silence}

\WarningFilter{revtex4-2}{Failed to recognize}
\WarningFilter{revtex4-2}{Failed to patch the float package}

\documentclass[aps,prx,twocolumn,10pt,nofootinbib,superscriptaddress]{revtex4-2}

\usepackage[utf8]{inputenc}
\usepackage[T1]{fontenc} 
\usepackage[english]{babel} 

\usepackage[table,usenames,dvipsnames]{xcolor}

\usepackage{amsmath}
\usepackage{amssymb,mathtools,amsthm}
\usepackage{dsfont} 
\usepackage{booktabs}
\usepackage{qcircuit}

\usepackage[printonlyused,withpage]{acronym}
\makeatletter
\AtBeginDocument{%
  \renewcommand*{\AC@hyperlink}[2]{%
    \begingroup
      \hypersetup{hidelinks}%
      \hyperlink{#1}{#2}%
    \endgroup
  }%
}
\makeatother

\makeatletter
\def\l@subsubsection#1#2{}
\makeatother

\usepackage{paralist}
\usepackage{enumitem}

\newcommand{\func}[1]{{\ensuremath{\mathsf{#1}}}}

\newcommand{\alg}{\func{mGST}}
\newcommand{\pyGSTi}{\func{pyGSTi}}

\usepackage{xifthen}

\usepackage{import}

\usepackage{scrhack}
\usepackage[linesnumbered,lined,ruled,commentsnumbered]{algorithm2e}
\usepackage[noend]{algpseudocode}

\usepackage[colorlinks=true,
      hyperindex, 
        linkcolor = NavyBlue,
        anchorcolor = hyperrefblue,
        citecolor = Green,
        filecolor = hyperrefblue,
        urlcolor = NavyBlue,
      breaklinks=true]{hyperref}

\usepackage{bookmark} 

\providecommand{\nat}{Nature}

\providecommand{\pra}{Phys.\ Rev.\ A}
\providecommand{\prb}{Phys.\ Rev.\ B}

\providecommand{\prl}{Phys.\ Rev.\ Lett.}

\providecommand{\njp}{New J.\ Phys.}

\newtheorem{thm}{Theorem}[section]
\newtheorem{lem}[thm]{Lemma}

\newcommand{\ngates}{n}
\newcommand{\nsamples}{m}
\newcommand{\seqlength}{\ell}
\newcommand{\Lo}{\mathcal{L}}
\newcommand{\vvec}{\operatorname{vec}}
\DeclareMathOperator{\St}{St}



\newtheorem*{problem}{Problem}
\theoremstyle{definition}


\newcommand{\e}{\ensuremath\mathrm{e}}
\renewcommand{\i}{\ensuremath\mathrm{i}}
\newcommand{\rmd}{\ensuremath\mathrm{d}}

\DeclareMathOperator{\LandauO}{\mathrm{O}}

\DeclareMathOperator{\Tr}{Tr}
\renewcommand{\Re}{\operatorname{Re}}

\DeclareMathOperator{\Id}{Id}

\newcommand{\fro}{\mathrm{F}}




\renewcommand{\L}{\operatorname{L}}



\DeclareMathOperator{\GL}{GL} 

\newcommand{\CC}{\mathbb{C}}


\newcommand{\EE}{\mathbb{E}}



\newcommand{\mc}[1]{\mathcal{#1}}

\newcommand{\GG}{\mc{G}}

\newcommand{\K}{\mc{K}}





\newcommand{\argdot}{{\,\cdot\,}}
\renewcommand{\vec}[1]{\boldsymbol{#1}}


\newcommand{\myleft}{\mathopen{}\mathclose\bgroup\left}
\newcommand{\myright}{\aftergroup\egroup\right}

\newcommand{\abs}[1]{\left\vert #1 \right\vert} 

\newcommand{\norm}[1]{\left\Vert #1 \right\Vert} 
%

%
\newcommand{\pnorm}[2][p]{\norm{#2}_{#1}} 

\newcommand{\iiiNorm}[1]{{\left\vert\kern-0.25ex\left\vert\kern-0.25ex\left\vert #1 
    \right\vert\kern-0.25ex\right\vert\kern-0.25ex\right\vert}}
%

%

%

%
\newcommand{\fnorm}[1]{\norm{#1}_\fro} 


\newcommand{\braket}[2]{\left\langle #1 \middle| #2 \right\rangle}

\newcommand{\ket}[1]{\left.\left|{#1}\right.\right\rangle}

\newcommand{\bra}[1]{\left.\left\langle{#1}\right.\right|}

\newcommand{\ketr}[1]{\left.\left|{#1}\right.\right)}
\newcommand{\brar}[1]{\left.\left({#1}\right.\right|}

\newcommand{\braketr}[2]{\left.\left(#1 \middle| #2 \right.\right)}

\newcommand{\ketbra}[2]{\ket{#1} \!\! \bra{#2}}

\newcommand{\sandwich}[3]
  {\left\langle  #1 \right| #2 \left| #3 \right\rangle}

\DeclarePairedDelimiter{\obra}{(}{\vert}
\DeclarePairedDelimiter{\oket}{\vert}{)}

\DeclarePairedDelimiterX\obraket[2]{(}{)}%
  {#1\kern0.15ex\delimsize\vert\kern0.15ex\mathopen{}#2}

\DeclarePairedDelimiterX\oketbra[2]{\vert}{\vert}%
  {#1\kern0.15ex\delimsize)\delimsize(\kern0.15ex\mathopen{}#2}

\DeclarePairedDelimiterX\osandwich[3]{(}{)}%
  {#1\,\delimsize\vert\kern0.15ex\mathopen{}#2\kern0.15ex\delimsize\vert\kern0.15ex\mathopen{}#3}











\usepackage{tikz}
\usetikzlibrary{decorations.pathreplacing,calc}

\newcommand{\tikzmark}[1]{\tikz[overlay,remember picture] \node (#1) {};}

\newcommand*{\AddNote}[4]{%
    \begin{tikzpicture}[overlay, remember picture]
        \draw [decoration={brace,amplitude=0.5em},decorate]
            ($(#3)!(#1.north)!($(#3)-(0,1)$)$) --  
            ($(#3)!(#2.south)!($(#3)-(0,1)$)$)
                node [align=center, text width=1.6cm, pos=0.5, anchor=west] {#4};
    \end{tikzpicture}
}%

\makeatletter 
 \hypersetup{pdftitle = {Compressive gate set tomography},
       pdfauthor = {Raphael Brieger, Ingo Roth, Martin Kliesch},
       pdfsubject = {Quantum physics}, 
       pdfkeywords = {quantum, state, preparation, process, measurement, tomography, 
              characterization, computer, benchmark,
              SPAM, robust, stable, 
              matrix product state, MPS, GST, gate, set, compressive, compressed, sensing,
              structured, TT, tensor, tensor train, completion, reconstruction, 
              gradient descent, second order, 
              alternating, empirical, risk, minimization, least squares, total variation, distance, mean variation, error
              saddle-free, Newton, method, saddle, point, Hessian, 
              manifold, Riemannian, constrained, optimization, 
              gauge, freedom, CPT, completely, positive, trace, preserving, constraint             
              }
      }
\makeatother

\newcommand{\hhu}{%
  Institute for Theoretical Physics,
  Heinrich Heine University D{\"u}sseldorf, 
  Germany%
}
\newcommand{\tuhh}{
Institute for Quantum-Inspired and Quantum Optimization, 
Hamburg University of Technology, 
Germany
}
\newcommand{\fu}{%
  Dahlem Center for Complex Quantum Systems,
  Freie Universit\"{a}t Berlin,
  Germany%
}
\newcommand{\AbuDhabi}{%
Quantum Research Centre, Technology Innovation Institute, Abu Dhabi, UAE
}

\definecolor{martin}{rgb}{0,.4,1}

\definecolor{ingo}{rgb}{.1,.5,.1}

\definecolor{rb}{HTML}{018B86}



\begin{document}
\title{Compressive gate set tomography}

\author{Raphael Brieger}
\email{brieger@hhu.de}
\affiliation{\hhu}

\author{Ingo Roth}
\affiliation{\AbuDhabi}
\affiliation{\fu}

\author{Martin Kliesch}
\email{martin.kliesch@tuhh.de}
\affiliation{\hhu}
\affiliation{\tuhh}

\begin{abstract}
Flexible characterization techniques that provide a detailed picture of the experimental imperfections under realistic assumptions are crucial to gain actionable advice in the development of quantum computers. Gate set tomography self-consistently extracts a complete tomographic description of the implementation of an entire set of quantum gates, as well as the initial state and measurement, from experimental data. 
It has become a standard tool for this task but comes with high requirements on the number of sequences and their design, making it experimentally challenging already for only two qubits.

In this work, we show that low-rank approximations of gate sets can be obtained from significantly fewer gate sequences and that it is sufficient to draw them at random. This coherent noise characterization however still contains the crucial information for improving the implementation. To this end, we formulate the data processing problem of gate set tomography as a rank-constrained tensor completion problem. We provide an algorithm to solve this problem while respecting the usual positivity and normalization constraints of quantum mechanics. 
For this purpose, we combine methods from Riemannian optimization and machine learning and develop a saddle-free second-order geometrical optimization method on the complex Stiefel manifold. 
Besides the reduction in sequences, we demonstrate numerically that the algorithm does not rely on structured gate sets or an elaborate circuit design to robustly perform gate set tomography. 
Therefore, it is more flexible than traditional approaches. 
We also demonstrate how coherent errors in shadow estimation protocols can be mitigated using estimates from gate set tomography.
\end{abstract}

\maketitle

\tableofcontents

\section{Introduction}
The precise characterization of digital quantum devices is crucial for several reasons:
(i) to obtain `actionable advice' on how imperfections 
on their implementation can be reduced, e.g.\ by experimental control, 
(ii) to tailor applications to unavoidable device errors so that their effect can be mitigated, and 
(iii) to benchmark 
the devices
for the comparison of different physical platforms and implementations. 
There is already a wide variety of protocols to characterize components of a digital quantum computing device  
with a trade-off between the information gained about the system and the associated resource requirements and assumptions of the scheme \cite{Eisert2020QuantumCertificationAnd, Kliesch2020TheoryOfQuantum}. 

One particular important requirement for practical characterization protocols for quantum gates is their robustness against errors in the \ac{SPAM}. 
There are two general approaches that \ac{SPAM}-robustly characterize the implementation of entire \emph{gate sets} of a quantum computer. 
On the low complexity side there is \ac{RB} \cite{EmeAliZyc05,KniLeiRei08,Magesan2012} and variants thereof \cite{Helsen20AGeneralFramework}, that typically aim at determining a single measure of quality for an experiment, though with the exception of \ac{RB} tomography protocols \cite{KimSilRya14,KimLiu16,RotKueKim18,Flammia2019EfficientEstimation}. 
Yet for the targeted improvement of individual quantum operations, 
protocols which provide more detailed information beyond mere benchmarking are crucial. 

This is the motivation of self-consistent \ac{GST} \cite{MerGamSmo13,Stark2012Self-consistentTomography,Stark2012SimultaneousEstimation,BluGamNie13,BluGamNie17,Gre15,Nielsen20ProbingQuantumProcessor,Nielsen2020GateSetTomography}. 
\ac{GST} estimates virtually all parameters describing a noisy implementation of a quantum computing device simultaneously from the measurements of many gate sequences. 
This comprises tomographic estimates for all channels implementing the gate set elements, the initial state(s) and the measurement(s). 
The full tomographic information can then be used to compute arbitrary error measures for verification and to provide advice on error mitigation and device calibration ~\cite{BluGamNie17,dehollain2016optimization,2021PRXQ....2d0338R,2019SciA....5.5686S,2020NatCo..11..587Z,2020arXiv200701210C}.
Concomitant with the massive amount of inferred information and minimal assumptions, 
these protocols come with enormous resource requirements 
in terms of the necessary number of measurement rounds 
and 
the time and storage consumption of the classical post-processing. 
Standard \ac{GST}, as described, e.g.\ by Nielsen et al.\ \cite{Nielsen2020GateSetTomography}, uses many carefully designed gate sequences in the experiment and a sophisticated and challenging 
data processing pipeline in post. 
To arrive at physically interpretable estimates, i.e.\ \ac{CPT} maps, additional post-processing is required. 
The massive amount of 
specific data consumed by standard \ac{GST} 
limits its practical applicability already for two-qubit gate sets. 
The focus of Nielsen et al.\ \cite{Nielsen2020GateSetTomography} and their implementation \pyGSTi\ lies on so-called \emph{long sequence \ac{GST}}, a method to improve an initial \ac{GST} estimate by using gate sequences in which a building block is repeated many times. 
The resulting error amplification of the building block is then used to significantly improve the accuracy of the \ac{GST} estimate at the cost of larger measurement effort. 
In our work we focus on \emph{short sequences} and the problem of finding an initial estimate without assuming any prior knowledge on the gate set and minimal experimental requirements.

The most important diagnostic information for a quantum computing device is often already 
contained in a low-rank approximation of the processes, states and measurements.  
Coherent errors are typically the ones that can be corrected by experimental control and are of interest for refining calibration models. 
The strength of incoherent noise on the other hand is arguably well-captured by average error measures as provided by \ac{RB} outputs. 
Moreover, current fault tolerance thresholds often rely on 
worst-case error measures for which no good direct estimation technique exists \cite{knill05,Aliferis2006a,Aliferis2006a,Cross07ComparativeCode,2009PhRvA..79a2332A} and coherent errors 
in particular hinder their indirect inference from average error measures \cite{KueLonFla15,Sanders2015,Wal15}. 
For standard state and process tomography, it was realized that 
low-rank assumptions can crucially reduce the sample complexity, the required number of measurements  and the post-processing complexity \cite{ShaKosMoh11a,FlaGroLiu12,KliKueEis16,KliKueEis19,BalKalDeu14,KueRauTer15,RodVeiBar14,RioGroFla16,SteRioCut16,RotKueKim18,KyrKalPar18} as well as improve the stability against imperfections in the measurements \cite{RothEtAl:2020:SDD} using compressed sensing techniques \cite{gross_quantum_2010,Gro11,FouRau13}. 

In this work, we take a fresh look at the data processing problem of \ac{GST} from a compressed sensing perspective and regard it as a highly-structured tensor completion problem.
We develop a reconstruction method, called \alg, that 
exploits the geometric structure of \ac{CPT} maps with low Kraus ranks. 
In numerical simulations we demonstrate that our structure-exploiting \alg\ approach (i)~allows for 
maximal flexibility in the design of gate sequences, so that standard \ac{GST} gate sequences and random sequences work equally well, and 
(ii)~obtains low-rank approximations of the implemented gate set from a significantly reduced number of sequences and samples. 
This allows us to successfully perform \ac{GST} with gate sets and sequences that are not amenable to the standard \ac{GST} implementation \pyGSTi\ \cite{Nielsen20pyGSTi,Nielsen20ProbingQuantumProcessor}.
As one example, while the sequence design of \pyGSTi\ uses at least $907$ specific sequences to reconstruct a two-qubit gate set, we numerically demonstrate low-rank reconstruction from $200$ random sequences of maximal length $\ell=7$ with runtimes of less than an hour on a standard desktop computer. 
Thus, compressive \ac{GST} significantly lowers the experimental resource requirements for maybe the most prominent use-cases of \ac{GST} making it a tool that can be more easily and routinely applied. 
At the same time, for the default gate sets and sequences from the standard \ac{GST} implementation, the novel algorithm matches state-of-the-art results. 
The runtime and storage requirements of \alg\ still scale exponentially in the number of qubits as does the amount parameters of the gate set it identifies.  
This limits the feasibility of the classical post-processing of compressive \ac{GST} to gate sets acting on only a few qubits without further assumptions. 
Nonetheless, we demonstrate that coherent errors and depolarizing noise 
of a 3-qubit gate set can be completely characterized, from as little as $128$ sequences of length $\ell=7$ on desktop hardware in a few hours.  

To give a novel example of how information about coherent errors can be used, we simulate a 10 qubit system and perform GST on neighboring 2-qubit pairs. The resulting gate set estimates then allow us to calibrate the post-processing step of the shadow estimation protocol \cite{Huang2020Predicting}, which is widely used for the sample efficient estimation of observables. More concretely, we find in simulations with moderate coherent errors that shadow estimates of the ground state energy of a 10 qubit Heisenberg Hamiltonian are heavily biased when knowledge of the noisy gate implementation is limited. Information from GST on 2 qubit pairs allows us to reduce this bias by about an order of magnitude. 

Our \alg\ reconstruction method relies on manifold optimization over complex Stiefel manifolds
\cite{Abrudan2009ConjugateGradient,edelman1998geometry,manton2002optimization,sun2019escaping,bortoloti2020damped,wisdom2016full,boumal2019global} in order to include the low-rank \ac{CPT} constraints. 
Such constraints emerge in several optimization problems \cite{helmke2012optimization,1166684,elden1999procrustes,BRIDGES2001219} with applications in machine learning \cite{wisdom2016full,Lotte_2018}, quantum chemistry \cite{CHIUMIENTO20121866}, signal processing in wireless communication \cite{7397861,8288677} and more recently in the quantum information literature, see e.g.\ \cite{luchnikov2020riemannian,Hangleiter21PreciseHamiltonianIdentification,HangleiterEtAl:2020:Easing}. 
In order to deal with the non-convex optimization landscape we adopt a second order saddle-free Newton method \cite{Dauphin2014IdentifyingAndAttacking} to this setting. 
This involves the derivation of an analytic expression for geodesics, as well as an expression for the Riemannian Hessian in the respective product manifolds.
Another important motivation for phrasing \ac{GST} as a randomly subsampled tensor completion problem is to bring it closer to potential analytical recovery guarantees common for related tensor completion problems \cite{SuessPhDThesis,ImaMaeHay17,RauSchSto16,GhaPlaYil17,Ashraphijuo17CharacterizationOfDeterministic,RauSto15,HuaMuGol14,liu2020tensor}, opening up a new research direction. 

Finally, being able to perform \ac{GST} from random sequences enables one to use the same type of data for different increasingly refined characterization tasks from filtered \ac{RB} \cite{Helsen20AGeneralFramework}, \ac{XEB} \cite{BoiIsaSme16} and \ac{RB} tomography \cite{KimSilRya14,KimLiu16,RotKueKim18} to \ac{GST}. 
Unifying these approaches, random gate sequences can be regarded as the `classical shadow' of a gate set from which many properties can be estimated efficiently \cite{HelsenEtAl:2021:Estimating}.  
Compressive gate set tomography provides more detailed diagnostic information and only requires to further increase the amount of data without changing the experimental instructions.

With randomized linear \ac{GST} \cite{Gu2021RandomizedLinearGate} and fast Bayesian tomography \cite{2021arXiv210714473E} 
related alternatives to tackle the \ac{GST} data processing problem 
have been proposed. 
Here, the gates are assumed to be well-approximated by an a priori known unitary followed by a noise channel that is either linearized around the identity \cite{Gu2021RandomizedLinearGate} or around a prior noise estimate \cite{2021arXiv210714473E}. 
This allows for a treatment of the outcome probabilities as approximately linear functions. 
The resulting scheme already works for random sequence data but comes at the expense of much stronger assumptions compared to the compressed sensing approach taken with \alg. 

The rest of the paper is structured into three parts. 
In the following section, we 
formalize the data processing problem of \ac{GST} as a constrained reconstruction problem. 
In Section~\ref{sec:data processing via Riemannian optimization}, we formulate the data processing problem as a geometric optimization task and derive 
the \alg\ algorithm. 
In Section~\ref{sec:Numerical analysis} we demonstrate the performance of the novel algorithm in numerical simulations and compare our results with the standard \ac{GST} processing pipeline of \pyGSTi.

\section{The data processing problem of gate set tomography}

In \ac{GST} a quantum computing device is modeled as follows. 
The device is initialized with a state $\rho \in \mathcal{S} \coloneqq \{\sigma \in \mathcal{H}: \sigma \succeq 0, \Tr[\sigma] = 1\}$ on a finite dimensional Hilbert space $\mathcal{H} = \mathbb{C}^d$. 
Subsequently, a sequence of noisy operations from a fixed gate set $(\GG_i)_{i \in [n]}$ can be applied, where we use the notation $[n] \coloneqq \{1,2,3,\dots,n\}$. 
The noisy operations $\GG_i: \L(\mathcal{H}) \rightarrow \L(\mathcal{H})$ are \ac{CPT} maps on $\L\left(\mathcal{H}\right)$, the set of linear operators on $\mathcal{H}$. 
We define $[n]_\seqlength^* \coloneqq \bigcup_{k=0}^{\seqlength}[n]^k$ such that $\vec i \in [n]_\seqlength^*$ defines a gate sequences with length of at most $\seqlength$ 
 and associated \ac{CPT} map $\GG_{\vec i} \coloneqq \GG_{i_{\seqlength}}\circ\dots\circ\GG_{i_1}$, the concatenation of the gates in the sequence $\vec i$.
In the end, a measurement is performed described by a \ac{POVM} with elements $(E_j)_{j \in [n_E]}$, satisfying $\sum_j E_j = \mathds{1}$ and $0 \preceq E_j \preceq \mathds{1}$ for all $j \in [n_E]$.
The full description of the noisy quantum computing device is, thus, given by the triple 
\begin{equation}\label{eq:def:X}
  \mc X = ((E_j)_{j \in [n_E]}, (\GG_i)_{i \in [n]}, \rho)
\end{equation}
of a quantum state, a physical gate set and a \ac{POVM}.
Let $I \subseteq [n]_\seqlength^*$ be the set of accessible gate sequences with $n_{\mathrm{seq}} \coloneqq |I|$ denoting the number of sequences.
The probability of measuring outcome $j$ upon applying a gate sequence $\vec i \in I$ is
\begin{equation} \label{eq:pijX}
   p_{j|\vec i}(\mc X) = \Tr [E_j \, \GG_{\vec i}(\rho)] \, .
\end{equation}
By $(\vec p_{\vec i}(\mc X))_j \coloneqq p_{j|\vec i}(\mc X)$ we denote the corresponding vector and, moreover, often omit the argument $\mc X$. 
While being a fairly general description, this gate set model relies on a couple of assumptions:
\begin{compactenum}[(i)] 
\item the physical system needs to be well-characterized by a Hilbert space of fixed dimension, 
\item the system parameters need to be time independent over different experiments, and
\item a gate's action is independent of the gates applied before and after (Markovianity). 
\end{compactenum}

There exist multiple descriptions of quantum computing devices within the gate set model that yield the same measurement probabilities on all sequences. 
Below, we provide a more detailed description of this freedom in terms of \emph{gauge transformations}. 
These are linear transformations that, when simultaneously applied to all gates, input state and \emph{POVM} elements, leave the measurement statistics \eqref{eq:pijX} invariant. 

The task of \ac{GST} is to infer the device's full description \eqref{eq:def:X} from measured data.  
To this end, one estimates the output probabilities for a set of different sequences $I \subset [n]_l^\ast$ by repeatedly performing the measurements of the corresponding sequences. 
Thus, we can state the \emph{data-processing problem of \ac{GST}} as follows. 

\begin{problem}[\ac{GST} data-processing]
Let $\mc X$ be a gate set and $I \subset [n]_l^\ast$ a set of sequences. 
Given empirical estimates $\{y_{j|\vec i}\}_{\vec i \in I, j \in [n_E]}$ of 
$\{p_{j|\vec i}(\mc X)\}_{\vec i \in I, j \in [n_E]}$, find the device description $\mc X = ((E_j)_{j \in [n_E]}, (\GG_i)_{i \in [n]}, \rho)$ up to the gauge freedom.
\end{problem}

Note that \ac{GST} aims at solving an \emph{identification problem}. 
That is, for sufficiently much data, find the unique device description of the device compatible with the data. 
In particular, the input data $\{\hat p_{j|\vec i}\}_{\vec i \in I, j}$ is required to uniquely single out the device description. 
This is related but distinct from the corresponding \emph{learning task} to find a description that generalizes on unseen data.

\subsection{Compressive gate set description} 
\label{sec:phys. constr.}

At the heart of our approach is to capture this data-processing problem as a highly structured \emph{tensor completion problem}. 
The structure allows us to reduce the required size and structural assumption of the set $I$, in order to determine $\mc X$.
It is instructive to visualize the problem with tensor network diagrams. 
The gate set can be viewed as a tensor of five indices and the action of gate $i$ on the initial state $\rho$ can be visualized as \vspace{.5\baselineskip}
    \begin{equation}
    \GG_i(\rho) = \quad 
      \begin{minipage}{.3\linewidth}
      \vspace{-\baselineskip}
      \includegraphics{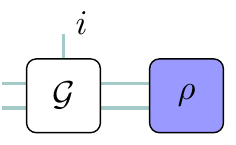}
      \end{minipage}
      \begin{minipage}{2pt}
      $\, $\\[.3em]
      ,\end{minipage}
    \end{equation}
where each leg represents an open index and the joining of legs represents summation of the corresponding indices; 
see e.g.\ \cite[Chapter~5.1]{montangero2018introduction} for more information on the tensor network notation. 
Neglecting the finite statistics in estimating the probabilities, 
the \ac{GST} data-processing problem can, thus, be rephrased as the 
problem of completing the translation-invariant \emph{\ac{MPS}} \cite{FanNacWer92,Rommer97ClassOfAnsatz,Verstraete04DMRG} or \emph{tensor-train} \cite{Oseledets09Breaking}
\begin{align}
p_{j|\vec i} &= \Tr [E_{j} \, \GG_{i_l}\circ \dots \circ \GG_{i_2}\circ\GG_{i_1}(\rho)] \label{eq:probabilities}
=
\\[.2em] \nonumber
& \phantom{=}
  \includegraphics{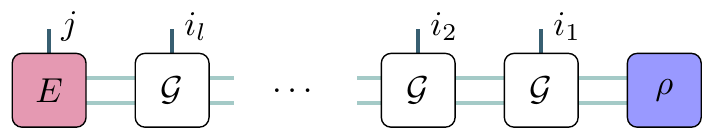}
\end{align}
from access to a couple of its entries. 
By the following assumptions one can introduce more structure. 
First, we assume the elements of the device description $\mc X$ to satisfy the physicality assumptions regarding normalization and positivity. 
Second, the assumption that they have low-rank approximations yields additional compressibility of $\mc X$.  

In more detail, physically implementable gate sets are \acl{CPT} if and only if they admit a Kraus decomposition \cite{KrausEtAl:1983:States}, i.e.\ the $i$-th gate implementation can be written as 
\begin{equation}
  \GG_i(\rho) = \sum_{l=1}^{r_K} \K_{il}\rho \K_{il}^{\dagger} 
\end{equation}
for each $i\in [n]$, 
where $r_K$ is the (maximum) Kraus rank of the \ac{CPT} maps $\{\mc G_i\}$. 
We use the notation that $\K$ denotes the tensor containing all Kraus operators of all gates and $\K_i$ contains the Kraus operators for gate $i$.
In terms of tensor network diagrams the decomposition is represented as 
\begin{equation}
  \begin{minipage}{.44\linewidth}
  \includegraphics{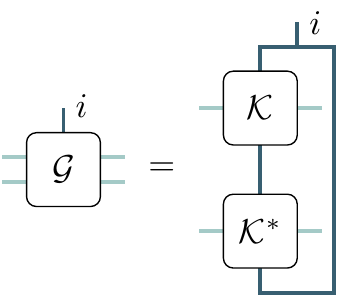}
  \end{minipage}
  \begin{minipage}{2pt}
  $\ $\\[-.2em]
  .\end{minipage}
\end{equation}
Moreover, the trace preservation constraints 
\begin{equation}\label{eq: Kraus constraint}
   \sum_{l=1}^{r_K} \K_{il}^{\dagger} \K_{il} = \mathds{1} \qquad \forall i
\end{equation} 
require $\K_i$ viewed as a matrix in $\mathbb{C}^{r_K d\times d}$ to be an isometry, i.e.\ 
\begin{equation}
  \begin{minipage}{.34\linewidth}
  \includegraphics{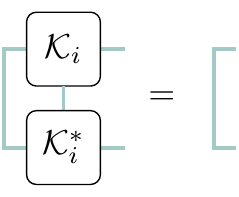}
  \end{minipage}
  \begin{minipage}{2pt}
  $\ $\\[-.2em]
  .\end{minipage}
\end{equation}
Constraints on a low Kraus rank $r_K$ can be naturally enforced in this parametrization by reducing the row dimension of $\K_i$. 
The initial state and \ac{POVM} elements are constrained to be positive matrices, which we hence parameterize as 
\begin{align}\label{eq:def:AB}
  E_j = A_j^{\dagger}A_j, \qquad \rho = BB^{\dagger}
\end{align}
with $A_{j} \in \mathbb{C}^{r_E \times d}$ and $B \in \mathbb{C}^{d \times r_{\rho}}$, where $r_E$ and $r_{\rho}$ are the matrix ranks. 
For the matrices $A_j$ to form a valid \ac{POVM}, they have to satisfy a similar condition to the Kraus operators, 
\begin{equation} \label{eq:POVM constr.}
  \sum_{j=1}^{r_E} A_j^{\dagger}A_j = \mathds{1} \, ,
\end{equation}
while the initial state is of unit trace if
\begin{equation} \label{eq:state constr.}
  \fnorm{B} = 1\,.
\end{equation}

With the physicality constraints incorporated, measurement outcome probabilities are given in terms of tensor network diagrams as
\begin{align}
p_{j|\vec i} &= \Tr [E_{j} \, \GG_{i_l}\circ \dots \circ\GG_{i_2}\circ\GG_{i_1}(\rho)] 
=
\\[.2em] \nonumber
& \phantom{=}
  \begin{minipage}{.87\linewidth}
  \includegraphics{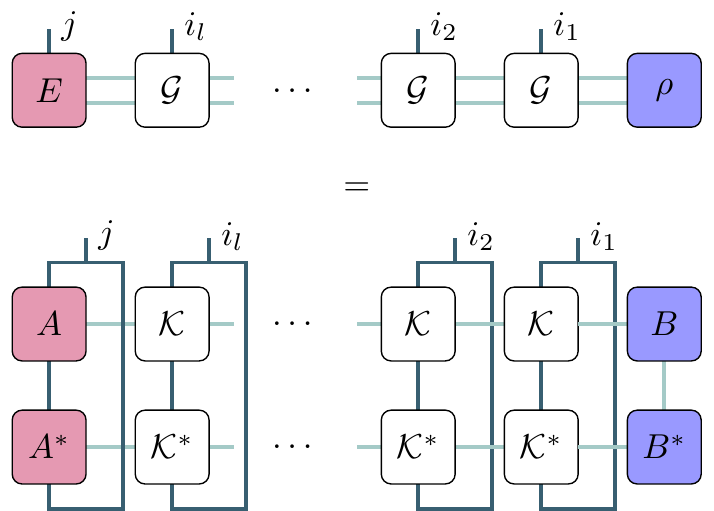}
  \end{minipage}
  \begin{minipage}{2pt}
  $\ $\\[-1em]
  .\end{minipage}
\end{align}

Thus, we arrive at a compressive device description 
$\mathcal{X}_c = (A, \mc K, B)$ 
that considerably reduces the amount of parameters compared to the triple 
$\mc X = ((E_j)_{j \in [n_E]}, (\GG_i)_{i \in [n]}, \rho)$
when choosing small dimensions $r_\rho$, $r_K$, and $r_E$. 
Correspondingly, we can adapt the \ac{GST} data processing task 
to demand only a compressive device description. 
\begin{problem}[Compressive \ac{GST} data processing]
Let $\mc X$ be a gate set and $I \subset [n]^\ast_l$ a set of sequences. 
Given empirical estimates 
$\{y_{j|\vec i}\}_{\vec i \in I,j \in [n_E]}$ of 
$\{p_{j|\vec i}(\mc X)\}_{\vec i \in I,j \in n_E}$ 
and ranks $r_\rho$, $r_K$, and $r_E$, find the compressive device description 
$\mc X_c = (A, \mc K, B)$ of dimension  $r_\rho$, $r_K$ and $r_E$ respectively, so that the normalization constraints \eqref{eq: Kraus constraint}, \eqref{eq:POVM constr.}, and \eqref{eq:state constr.} are satisfied. 
\end{problem}

As before, the set of sequences needs to be large enough so that this identification problem is well-defined. 
Again a desired compressive device description $\mc X_c$ can only be determined up to gauge freedom. 
Note that for the identification problem to be well-defined, it is not required that the true gate set $\mc X$ that generated the data is of low-ranks itself.  
As one usually aims to implement unit rank states, unitary gates, and basis measurements, i.e., for $r_\rho=r_K = r_E = 1$, it can be expected however that a compressive device description  $\mc X_c$ is often also a good approximation to the true gate set. Moreover, coherent errors are arguably the most relevant, since they give actionable advice on error mitigation and are complementary to the incoherent error measures provided by randomized benchmarking experiments.
By choosing the ranks in $\mc X_c$, a problem specific decision can be made that balances the information gained with the computational and sample complexity of model reconstruction. \\
Since the $p_{j|\vec i}(\mc X)$ are high degree polynomials in the gate set parameters, the compressive \ac{GST} data processing problem is different from compressed sensing for standard state and process tomography, where the map from the model parameters to the outcome probabilities is linear. \\
Next, we discuss another unique problem of \ac{GST}, the gauge freedom, more explicitly and introduce relevant error measures for gates sets $\mc X$.

\subsection{Gauge freedom and gate set metrics}
\label{sec:gauge_freedom}

So far we have not made explicit what `finding a device description' actually means. What is well studied in the \ac{GST} and \ac{RB} literature
\cite{Gre15,BluGamNie13,Lin19OnTheFreedom,Rudnicki2018gaugeinvariant,proctor2017WhatRandomizedBenchmarking,Helsen20AGeneralFramework}, is
that without additional prior assumptions, there is a freedom in representing a device in the gate set model. 
In particular, this freedom needs to be considered when defining a metric for gate sets \cite{Lin19OnTheFreedom} w.r.t.\ which we want to recover the device description. 

\emph{Gauge freedom} refers to the following observation. 
The observable measurement probabilities $p_{j|\vec i}$ of the form \eqref{eq:probabilities} are invariant under the transformation 
\begin{align}
\rho &\mapsto \mc T^{-1}(\rho)\\
\GG_i &\mapsto \mc T^{-1}\circ\GG_i\circ \mc T \qquad \forall i\\
E_j &\mapsto \mc T^\dagger(E_j)  \qquad \forall j
\end{align}
for any invertible super operator $\mc T: \L(\mathcal{H}) \rightarrow \L(\mathcal{H})$ 
, where $\mc T^\dagger$ denotes the adjoint of $\mc T$ w.r.t.\ the Hilbert-Schmidt inner product. 
This invariance is also the well-known \emph{gauge freedom of \ac{MPS}} \cite{Sch11}. 

If the gate set is universal and the initial state is pure then the gauge transformation $\mc T$ has to be either a unitary or anti-unitary channel. 
This statement can be seen as follows. 

In our case the $\GG_i$ are constrained to be \ac{CPT}. 
Hence, $\GG \mapsto \mc T^{-1}(\GG) \mc T$ has to map \ac{CPT} maps to \ac{CPT} maps. 
Similarly, $\mc T^{-1} \rho$ has to be a density operator and $\{\mc T^\dagger(E_j) \}_j$ a valid \ac{POVM}. 

A more explicit condition on $\mc T$ can be obtained by considering gauge action on entire sequences. 
For all sequences $\vec i$, we have
\begin{align*}
  p_{j|\vec i} = \Tr[E_j \mc T \circ \mc T^{-1}\circ \GG_{\vec i}(\rho)]
\end{align*}
where $\GG_{\vec i}(\rho)$ is a positive operator if the gates $\GG_i$ are \ac{CPT}. 
Now $\mc T^{-1}\GG_{\vec i} (\rho)$ has to be positive as well for all sequences $\vec i$. 
Thus if the gate set is universal, the map $\mc T^{-1}$ has to be positive and trace preserving for all states. 
An analogous statement can be made for $\mc T^\dagger$, by considering that
$\mc T^\dagger\GG_{\vec i}^\dagger(E_j)$ has to be positive-definite for all \ac{POVM} elements $E_j$. 
This implies that $\mc T$ has to be a positive map as well. 

It has been shown that any positive invertible map $\mc T$ with a positive inverse can be written either as 
$\mc T(\rho) = P U \rho\, U^{\dagger} P^\dagger$ or $\mc T(\rho) = P U \rho^T U^{\dagger} P^\dagger$ for 
$U \in \GL(d,\CC)$ and $P\succeq 0$  
\cite[Theorem~2]{schneider1965positive}. 
The condition that $\mc T$ needs to be trace preserving then yields $P =\mathds{1}$, as can be seen from the Kraus decomposition. 
Hence, $\mc T$ is indeed  either a unitary or anti-unitary channel. 

We note that the map $\rho \mapsto U \rho^T U^{\dagger}$ is positive but not completely positive. 
However, it has the property that $\mc T^{-1} \GG_i \mc T$ is \ac{CPT} whenever $\GG_i$ is \ac{CPT}. 
This can be seen by observing that the Choi matrix of $\mc T^{-1} \GG_i \mc T$ is given by 
$(U^*\otimes U) \mathrm{Choi}(\GG_i)^T(U^T \otimes U^{\dagger})$, which is positive definite for $\GG_i$ being \ac{CPT}. 

However, actual gate set implementations are noisy and hence 
not 
universal in the sense that they 
cannot
prepare any pure state. 
Therefore, in practice, the gauge freedom can be larger \cite{Nielsen2020GateSetTomography}.
  For instance, if all gate implementations are given by unital channels then an additional freedom exists: 
  depolarizing noise can be commuted through the circuit. 
  Therefore, it can be distributed arbitrarily among initial state, gates, and measurement. 
Meaningful distance measures for gate sets should have the same gauge freedom as the \ac{GST} data. 
The problem of finding gauge invariant distance has been studied by Lin et al.\ \cite{Lin19OnTheFreedom}. 
For individual gate sequences, any measure that compares only the ideal and observed outcome probabilities is naturally gauge invariant. The authors thus propose to use the total variation error, a natural error measure to compare probability distributions, for individual gate sequences. 
Let 
\begin{equation}\label{eq:output_probabilities}
  p_{j|\vec i}(E_j,\GG_{\vec{i}},\rho) \coloneqq \Tr[E_j\GG_{\vec{i}}(\rho)]
\end{equation}
denote the probabilities of measuring the j$th$ output of the \ac{POVM} with elements $E_j$ after applying the sequence $\vec i$ of gates in $\GG_{\vec{i}}$ to the state $\rho$. 
The \emph{total variation error} for sequence $\vec i$ between two gate sets 
$\hat{\mathcal{X}} = \left\{(\hat{E}_j),\hat{\GG},\hat \rho\right\}$ and $\mathcal{X}=((E_j),\GG,\rho)$ is defined as 
\begin{align}
\delta d_{\vec{i}}(\hat{\mathcal{X}},\mathcal{X})
\coloneqq
\frac{1}{2} \sum_{j}\left|\Tr\myleft[\hat{E}_j^{\dagger} \hat{\GG}_{\vec i}(\hat{\rho})\myright]-\Tr\myleft[E_j^{\dagger} \GG_{\vec i}(\rho)\myright]\right| \, .
\end{align}
The \emph{\ac{MVE}} is defined as \cite{Lin19OnTheFreedom}
\begin{align} \label{eq:MVE def}
\operatorname{MVE}_I(\hat{\mathcal{X}}, \mathcal{X}) 
\coloneqq
\EE_{\vec i \sim I}\bigl[ \delta d_{\vec{i}}(\hat{\mathcal{X}},\mathcal{X}) \bigr] 
\end{align}
w.r.t.\ a set of sequences $I$,
where $\vec i \sim I$ means that $\vec i$ is drawn uniformly from $I$. 
Often, we omit the subscript $I$ in the following. 
The \ac{MVE} corresponds to taking the natural worst case error measure over the measurement outcomes (the total variation distance) and averaging it over the available gate sequences. 
Often $I$ is chosen as the set of all gate sequences up to some length $\seqlength$. 
Then the expectation value \eqref{eq:MVE def} contains a sum over exponentially many terms. 
However, since they are all non-negative, they can be estimated sampling efficiently via Monte Carlo sampling \cite{Lin19OnTheFreedom}.

A closely related error measure is the \emph{\ac{MSE}}
\begin{equation} \label{eq:MSE def}
\mathcal{L}_{I}(\hat{\mathcal{X}},\mathcal{X})\coloneqq \EE_{\vec i \sim I} \sum_{j \in [n_E]} \left(\Tr\myleft[\hat{E}_j^{\dagger} \hat{\GG}_{\vec i}(\hat{\rho})\myright]-\Tr\myleft[E_j^{\dagger} \GG_{\vec i}(\rho)\myright]\right)^2 \, ,
\end{equation}
which averages the squared deviation over all sequences and POVM elements. 

\section{GST data processing via Riemannian optimization} \label{sec:data processing via Riemannian optimization}

In the previous section, we defined the compressive \ac{GST} data processing problem and introduced metrics for the quality of reconstruction.
We now turn to devising a concrete algorithm for the data processing problem. 
To this end, we formulate the reconstruction problem as a constraint optimization problem of a loss-function for the data fitting. 
A natural candidate for the loss-function is the \ac{MVE} restricted to the set of measured sequences. 
As a proxy we instead minimize the \ac{MSE} which depends smoothly on the gate set and is, therefore, more suitable for local optimization. 
In terms of the compressive device description, the \ac{MSE} \eqref{eq:MSE def} can be written as
\begin{equation} \label{eq:objective function}
\mathcal{L}_{I}(
A
, \K, B| \vec{y})
\coloneqq \frac{1}{|I|} \sum_{\vec i \in I} \sum_j \left(p_{j|\vec i}(
A
,\K,B)-y_{j|\vec i}\right)^{2} 
\end{equation}
where $y_{j|\vec i}$ is the empirical estimate of $\Tr[E_j \mc G_{\vec i}(\rho)]$.
Correspondingly, the compressive \ac{GST} data processing problem can be cast as the constraint optimization problem:
\begin{equation} \label{eq: estimation problem}
\begin{aligned}
\operatorname*{minimize}_{A, \mc K, B}
\quad &\mathcal{L}_{I}(A,\K,B | \vec y)
\\
\text{subject to} \quad  
&  \sum_{l=1}^{r_K} \K_{il}^{\dagger} \K_{il} = \mathds{1} \qquad \forall i \in [n], 
\\
& \sum_{j= 1}^{r_E} A_j^{\dagger} A_j = \mathds{1}\, ,
\\
&\fnorm{B}=1\, .
\end{aligned}
\end{equation}

The constraints restrict the objective variables to embedded matrix manifolds.  
Therefore, algorithms for the  optimization problem can be derived by generalizing standard optimization algorithms for functions on the Euclidean space to the geometric structure of these manifolds. 
 
\subsection{The complex Stiefel manifold} \label{sec: manifold theory}

In order to formulate our main reconstruction algorithm we need to understand the matrix manifold that encompasses the physicality constraints mentioned in Section \ref{sec:phys. constr.}. We start by summarizing the elementary properties of these manifolds, to then derive a parametrization of geodesics and the Riemannian Hessian, thereby extending what was previously done for their real counterparts in Ref.~\cite{edelman1998geometry}. For a comprehensive introduction to optimization on matrix manifolds we refer to the book by Absil, Mahony and Sepulchre \cite{Absil09}. 

Let $(K_l)_{l\in[r]}$ be the Kraus operators of a fixed gate. By stacking them along their row dimension to a new matrix $K \in \mathbb{C}^{dr \times d}$, we can write the \ac{CPT} constraint as $K^{\dagger} K = \mathds{1}$. 
In the following we set $D = dr$. 
The set 
\begin{equation}
  \St(D,d) \coloneqq \{K \in \mathbb{C}^{D \times d}: K^{\dagger}K=\mathds{1}_{d} \}
\end{equation}
is called the \emph{$D \times d$ complex Stiefel manifold}. 
This manifold is the set of isometries of the Euclidean space and contains the special cases of  the sphere $\St(D,1)$ and the unitary matrices $U(D) = \St(D,D)$. 
We regard it here as a submanifold of $\mathbb{C}^{dr \times d}$. 

The tangent space of $\St(D,d)$ at $K$ is given by 
\begin{equation}
  T_{K} \St(D,d) = \{\Delta \in \mathbb{C}^{D \times d}:\ K^{\dagger}\Delta = - \Delta^{\dagger}K\}\, .
\end{equation}
The canonical inner product of $\Delta_1,\Delta_2 \in T_{K} \St(D,d)$ can be defined as
\begin{align} \label{eq:canonical metric def}
\langle \Delta_1,\Delta_2\rangle_{K} = \Re\left\{\Tr(\Delta_1^\dagger \Gamma \Delta_2)\right\}
\end{align}
with $\Gamma = \mathds{1}-\frac{1}{2}KK^{\dagger}$. 
Another choice is the standard Hilbert-Schmidt inner product of the embedding matrix space.  
However, the advantage of the canonical inner product is that it weights all degrees of freedom on the tangent space equally.
The Stiefel $\St(D,d)$ together with the metric given by \eqref{eq:canonical metric def} is a Riemannian manifold. 
The \emph{normal space} is defined by
\begin{equation*}
\begin{split}
  N_K \St(D,d)
  = 
  \bigl\{\Delta_\perp \in \mathbb{C}^{D \times d}: \langle\Delta,&\Delta_{\perp}\rangle_K = 0 \\ 
  &\forall \Delta \in T_K\St(D,d)\bigr\}\, .
  \end{split}
\end{equation*}
The projector onto the normal space at position $K$ is given by 
\begin{equation}
  P_{N} (X) = K (K^{\dagger}X + X^{\dagger}K)/2
\end{equation}
for $X\in \CC^{dr \times d}$
and we can write the projector onto the tangent space at $K$ as 
\begin{equation}
  P_{T} (X) = X - P_{N}(X)\, .
\end{equation}

We wish to optimize the \ac{MSE} over $\St(D,d)$. 
In analogy to the optimization over $\operatorname{U}(n)$ in \cite{Abrudan2009ConjugateGradient}, we will move along geodesics, which are the locally length minimizing curves. 
In Appendix \ref{app:geodesic} we show that within $\St(D,d)$, a geodesic starting at $K_{t=0} \equiv K$ and going in the direction $\Delta \in T_{K} \St(D,d)$ can be written as 
\begin{equation} \label{eq:geodesic def}
K_t(K,\Delta) =
\begin{pmatrix} K & Q\end{pmatrix} \exp\Bigl[ t \begin{pmatrix} K^{\dagger} \Delta & -R^{\dagger} \\ R & 0\end{pmatrix} \Bigr]
\begin{pmatrix}\mathds{1} \\ 0\end{pmatrix} ,
\end{equation}
with 
$Q$,$R$ given by the QR decomposition of $(\mathds{1}-KK^{\dagger})\Delta$.
Note that $\dot{K}_t |_{t=0} = \Delta$. Often simpler curves that just satisfy $K_0 = K$ and $\dot{K}_t |_{t=0} = \Delta$ are used instead of the geodesic in order to save computation time \cite{Absil09}. 
However, computing the exponential of the $2d$-dimensional matrix in Eq. \eqref{eq:geodesic def} provides no bottleneck in our scenario as the inversion of the $2n d^2 r_K$-dimensional Hessian is more costly (see Section \ref{sec:runtime}).

In order to identify the Riemannian gradient and Hessian, we generalize results from the real case \cite{edelman1998geometry} to the complex case. 
Then we use the second order Taylor approximation of the objective function, 
which will be given below in terms of the \ac{MSE} \eqref{eq:MSE def} along geodesics (see Appendix \ref{app: complex riem newton}). 
The same treatment can be applied to the \ac{POVM} given by the matrices $A_j$ from the decomposition~\eqref{eq:def:AB}, where we define $A$ as the matrix obtained from stacking the $A_j$ along their row dimension. 
The physicality constraint on $A$ is then equivalent to $A \in \St(dn_E, r_E)$ with $n_E$ being the number of \ac{POVM} elements and $r_E$ their maximal rank. 
Finally, the constraint $\fnorm{B}^2=\vvec(B)^{\dagger}\vvec(B)=1$ on the initial state~\eqref{eq:def:AB} can also be captured by the Stiefel manifold via the requirement $\vvec(B)\in \St(d \, r_\rho,1)$, where $\vvec(B)\in \CC^{d\,r_\rho}$ is the vectorization of $B\in \CC^{d\times r_\rho}$. 

\subsection{The \texorpdfstring{\alg}{mGST} estimation algorithm}
\label{sec:estimation}
With a better understanding of the underlying manifold structure we can now formulate a concrete optimization approach to tackle the estimation problem \eqref{eq: estimation problem}. 
The least squares cost function \eqref{eq:MSE def} is a polynomial of order at most the sequence length squared in the parameters of $\GG$, with a highly degenerate global minimum due to the gauge freedom. 
In analogy to the alternating minimization techniques which are successful for matrix product state completion \cite{SuessPhDThesis,Grasedyck_2015_Variants,2016arXiv160905587W} we alternate between updates on $A,\K$ and $B$. Each update would naively be done via a local optimization approach such as gradient descent. However, we observe that following the gradient direction on the respective manifolds is problematic around saddle points, which are frequently encountered in our optimization problem. 
In principle the gradient direction points away from saddle points, yet the norm of the gradient can be arbitrarily small. There are different approaches in the literature to deal with this problem. For instance information about the curvature can be included \cite{Dauphin2014IdentifyingAndAttacking} or, if a saddle point is encountered, random update directions can be chosen to escape the area of vanishing gradient \cite{sun2019escaping,jin2021nonconvex}.
We find that the so-called \ac{SFN} method \cite{Dauphin2014IdentifyingAndAttacking} yields considerably better results than first order methods. 
There the update direction is given by $-\abs{H}^{-1}g$ with $H$ being the Hessian and $g$ the gradient and the absolute value $\abs{H}$ define by spectral calculus. 
An instructive way to see why this leads to a speedup is to write the Hessian $H$ as $H = \sum_i \lambda_i \ketbra{v_i}{v_i}$, where $v_i$ is the eigenvector to eigenvalue $\lambda_i$. 
The update direction of the \ac{SFN} method then reads $-|H|^{-1}g = -\sum_i |\lambda_i|^{-1} \ket{v_i}\braket{v_i}{g}$. 
Since the vectors $\ket{v_i}$ form a basis, this can be interpreted as a rescaling of $\ket g$ by $|\lambda_i|^{-1}$ in the directions $\ket {v_i}$. 
As with the standard Newton method, this leads to a large rescaling if the curvature in a particular direction is small, resulting in large steps even close to the saddle point. Taking the absolute value of the eigenvalues then ensures that saddle points are repulsive. 
For numerical stability it is beneficial to introduce a damping term that offsets the eigenvalues of $H$ that are very close to zero before the inversion. 

Algorithm~\ref{alg:SFN-update} describes a single step of the damped saddle-free Newton method with damping parameter $\lambda$ and is a generalization of the original \ac{SFN} method \cite{Dauphin2014IdentifyingAndAttacking} to manifolds.

\begin{algorithm}
\SetKwInOut{Input}{input}{}{}
\Input{Curve parametrization $Y_t(Y_0,\Delta)$, objective function $\mathcal{L}_I(Y_t)$, damping parameter $\lambda$}
Compute the gradient $\vec{G}$ and Hessian $H$ of $\mathcal{L}_I(Y_t)$ at~$Y_0$.
\\ 
Determine the update direction 
  \begin{align*}
    \begin{pmatrix} \vec{\Delta} \\  \vec{\Delta}^{*} \end{pmatrix} = 
    \left( |H| + \lambda \mathds{1}\right)^{-1} \begin{pmatrix} \,\, \vec{G}^* \\  \vec{G} \end{pmatrix} .
  \end{align*} \\
Determine the step size $\tau = \underset{t}{\mathrm{argmin}} \, \mathcal{L}_I\left(Y_t(Y_0,\vec{\Delta})\right)$. 
\KwRet{$Y_{\tau}(Y_0,\vec{\Delta})$}
\caption{SFN update} \label{alg:SFN-update}
\end{algorithm}

Algorithm~\ref{alg:SFN-update} is formulated in a way that is compatible with an update in Euclidean space as well as an update on the Stiefel manifold. 
In Euclidean space we update along the curve $Y_t(\argdot,\argdot): \mathbb{C}^{D \times d} \times \mathbb{C}^{D \times d} \rightarrow \mathbb{C}^{D\times d}$ with $Y_t(Y_0, \Delta) = Y_0 + t\, \Delta$ for an update direction $\Delta$. 
On the Stiefel manifold we have $Y_t(\argdot, \argdot): \operatorname{St}(D,d) \times \mathfrak{T} \rightarrow \operatorname{St}(D,d)$ with the curve given by the geodesic~\eqref{eq:geodesic def} and $\mathfrak{T}$ being the tangent bundle on $\operatorname{St}(D,d)$. 
The step size is determined by locally optimizing over the parameter $t$ using standard gradient free optimizers.
We derive an expression for the Hessian on the Stiefel manifold in Appendix~\ref{app: complex riem newton}. 
In Appendix~\ref{app:Euclidean gradient and Hessian}, we also provide a detailed discussion and expressions for the optimization in complex Euclidean space.

\begin{algorithm}
  \SetKwInOut{Input}{input}
  \Input{Data $\{\vec y_{j|\vec i}\}_{\vec i \in I,\ j \in [n_E]}$, batch size $\kappa$, Kraus rank $r_K$, initialization $(A^0, \K^0, B^0)$, stopping criterion}
  $i \leftarrow 0$ \\
  \Repeat{stopping criterion is met at $i = i_\ast$}{%
    Select batch $J \subset I$ of size $|J| = \kappa$ at random \\
    $A^{i+1} \leftarrow$ update $A^{i}$ with objective $\mathcal{L}_J(\argdot,K^i, B^i; \vec y)$ along geodesic on $\operatorname{St}(dn_E,d)$ \\
    $\K^{i+1} \leftarrow$ update $\K^{i}$ with objective $\mathcal{L}_J(A^{i+1}, \argdot, B^i; \vec y)$ along geodesic on $\operatorname{St}(r_Kd,d)^{\times \ngates}$ \\
    $B^{i+1} \leftarrow$ update $B^{i}$ with objective $\mathcal{L}_J(A^{i+1}, \K^{i+1}, \argdot; \vec y)$ along geodesic on $\operatorname{St}(d^2,1)$ \\ 
    $i \leftarrow i + 1$     
  }
  \Return $(A^{i_\ast}, \K^{i_\ast}, B^{i_\ast})$ 
  \caption{mGST\label{Alg:main}}
\end{algorithm}

Algorithm~\ref{Alg:main} describes the main \alg\ routine. 
It can be run with different choices of smooth objective functions, and we use the \ac{MSE} \eqref{eq:MSE def} by default. In our numerics we often find that optimizing the log-likelihood function after the \ac{MSE} can improve estimates (see Appendix \ref{app:MLE} for a discussion).

The algorithm alternates updates on $A$, $\K$ and $B$. Updates are performed using Algorithm~\ref{alg:SFN-update}
on the tangent spaces of the respective Stiefel manifolds.
In order to achieve good convergence, we run the optimization with \alg\ in two consecutive steps: 
we start from a random initialization and perform a coarse grained optimization with a small batch size $\kappa$, i.e.\ only using $\kappa$ many random gate sequences from $I$ for each update step. 
The batching of data results in lower computation time for the derivatives and adds a factor of randomness to the optimization, which avoids getting stuck at suboptimal points to a certain degree. 
We terminate the first optimization loop when the objective function $\mathcal{L}_{I}(A^{i},\K^{i},B^{i}|y)$ is smaller than an \emph{early stopping value} $\delta$, which is obtained from the data as follows. 

For a number of $\nsamples$ samples per sequence the outcome probabilities of each sequence for the true gate set are given by 
\begin{equation}\label{eq:y_and_k}
  y_{j|\vec{i}} = k_{j|\vec{i}}/\nsamples\, ,
\end{equation}
where $k_{j|\vec{i}}$ is the number of times outcome $j$ is measured upon applying the gate sequence $\vec i$. 
Due to Born's rule, $k_{j|\vec{i}}$ is
distributed according to the multinomial distribution $\mathrm{M}(\nsamples,(p_{1|\vec i}, \dots, p_{n_E|\vec i}))$ with probabilities $\{p_{j|\vec i}\}_j$ and $\nsamples$ trials. 
We estimate the expectation value of the objective function from the values $y_{j|\vec{i}}$.  
This provides us with a rough estimate for how low the objective function value can become, given the sample counts $k_{j|\vec{i}}$.  
Then we set the early stopping value to be twice that estimate, 
\begin{equation}
  \delta \coloneqq 2\, \EE_{\tilde{k}_{j|\vec{i}} \sim \mathrm{M}(\nsamples, (y_{j|\vec i}))}\frac{1}{|I|} \sum_{\vec i \in I} \sum_j \left(y^j_{\vec i}-\tilde{k}_{j|\vec i}/\nsamples\right)^{2} .
\end{equation}
Hence, we require the objective function on the full data set to be close to its expectation value for the measured probabilities $y_{j|\vec i}$ obtained from $\nsamples$ samples. 

While computationally inexpensive the mini-batch stochastic optimization does not converge
to an optimal point on the full data set $I$. 
In a second optimization loop, we initialize the \alg\ algorithm 
with the result from the first run and use all the data for the updates. 
Formally, we choose the batch size $\kappa = |I|$ and, thereby, make the random batch selection obsolete. 
We perform these more costly update steps until the change in objective function reaches a desired relative precision $\epsilon$, 
\begin{equation} \label{eq: relative precision epsilon}
  \mathcal{L}_{I}(A^{i},\K^{i},B^{i}|\vec y) - \mathcal{L}_{I}(A^{i-1},\K^{i-1},B^{i-1}|\vec y) \leq \delta\epsilon\, 
\end{equation}
or a maximal number of iterations is exceeded.

The first optimization run is initialized with a random gate set parameterized by $A^0,\K^0$ and $B^0$ (see Section~\ref{sec:phys. constr.}). 
For the random initialization we make use of the 
\emph{\ac{GUE}}. 
A matrix $H$ belongs to the \ac{GUE} if $H = (M + M^{\dagger})/2$, where 
$M$ is a \emph{complex Gaussian matrix}, i.e, real and imaginary part of each $M_{ij}$ are independently drawn from $\mathcal{N}(0,1)$, the normal distribution with zero mean and unit variance. 
In this case we write $H \sim \mathrm{GUE}$. 
For $A^0$ and each gate in $\K^0$ we take the first $d$ columns of $\e^{\i H}$ with $H \sim \mathrm{GUE}$ to obtain a random isometry $\K^0$. 
For $B_0$ we take a complex Gaussian matrix and normalize it such that 
$\Tr[B^{0\dagger}B^0] = 1$. 

Importantly, due to the nature of non-convex optimization, several initializations can be needed to converge to a satisfactory minimum.

\section{Numerical analysis} \label{sec:Numerical analysis}

\begin{figure*}[t] 
    \centering
    \includegraphics{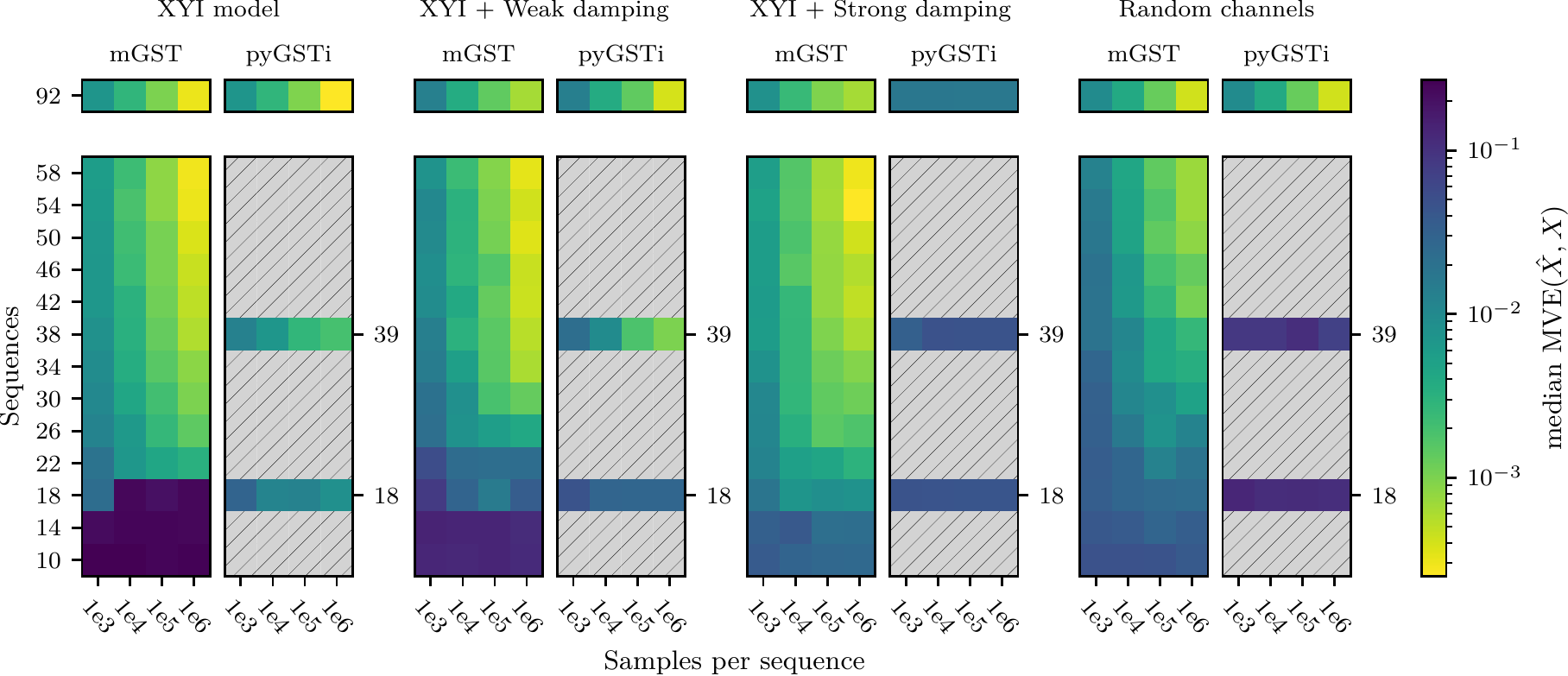}
  \caption{\Acf{MVE} comparison between \alg\ (with log-likelihood cost function and $r_K = 4$) and \pyGSTi\, showing the dependence on the number of sequences, the number of samples per sequence and the gate set on a single qubit. 
  The number of sequences used by \alg\  in the range 10-58 are drawn uniformly at random, while the sequences for \pyGSTi\ need to follow the \pyGSTi\ fiducial design and are limited to the fixed sequence counts (18 and 39). We choose independently drawn random fiducials for each instance.
  The 92 sequences used by both \alg and \pyGSTi\ are taken from the standard \pyGSTi\ sequence design for the XYI-model. All sequence are of length $\seqlength=7$.
  The \ac{MVE} depicted in each square is the median result for 10 instances, each with random statistical measurement noise and a random sequence drawn from the uniform distribution. 
  In the random channel scenario, a new random channel is used for each instance. The XYI-model is a simple unitary model used in the GST literature and the weak damping model consists of amplitude damping noise on each gate with $\Gamma = 0.94$, while the strong damping model uses $\Gamma = -0.6$. A complete description of the models used can be found in the main text.
  Each model has additional depolarizing noise of strength $p = 0.01$ on the initial state.
  }
  \label{fig: Phase transitions - Amplitude damping}
\end{figure*}

In this section, we evaluate the performance of \alg\ in different scenarios in numerical simulations. 
In particular, we compare its performance to the state-of-the-art implementation for gate set tomography, \pyGSTi~\cite{Nielsen2020GateSetTomography}, in the regimes where both methods can be applied. 

For \pyGSTi\ to be applicable one has to use structured gate sequences inspired by standard quantum process tomography. 
In Section \ref{sec:Gate set structure} we evaluate the performance of \alg\ and \pyGSTi\ on minimal measurement sequences and different models to find that \alg\ benefits from flexibility in the sequence design and a fully general model parametrization.
Section \ref{sec: Sample size dependence} numerically validates the expected inverse square-root scaling of the reconstruction error with the number of measurement samples per sequence for different noise regimes. 
Section \ref{sec:Sequence count dependence} numerically determines the required number of random sequences to accurately reconstruct simple and random gate set models with \alg\, for different Kraus ranks. In Section \ref{sec: 3Q unitary} we follow up with a numerical demonstration of unitary noise characterization for a three-qubit gate set using a priori knowledge in the initialization. 
Finally, in Sections \ref{sec:Calibration} and \ref{sec:runtime} we discuss the choice of initialization and hyperparameters, as well as the runtime of \alg.

For a model of $n$ gates reconstructed from $m$ measurements of sequence length $\seqlength$, we validate the performance of \alg\ by computing the \ac{MVE}~\eqref{eq:MVE def} over all possible $n^\seqlength$ sequences, or $10^4$ random sequences of length $\seqlength$ if $n^\seqlength > 10^4$. 
Usually $m \ll \mathrm{min}(n^\seqlength, 10^4)$ and the \ac{MVE} can be thought of as a generalization error on the predicted output probabilities of the gate set estimate. 
The gate sets studied in this section all use the same target initial state $\ketbra{0}{0}$ and computational basis measurement, although with different levels of noise applied to them. 
For instance, we often use global depolarizing noise, which acts on a quantum state $\rho$ as $\rho \mapsto (1-p) \rho + p\, \mathds{1}/d$. 
For the numerics presented here, we use a maximum of 100 reinitializations (if not stated otherwise). 
A discussion of the required number of initializations is given in Section \ref{sec:Calibration}.
A Python implementation of \alg\ and a short tutorial can be found on GitHub \cite{mGST2021}.

\subsection{Gate set and measurement structure} \label{sec:Gate set structure}
  We compare \alg\ and \pyGSTi\ for the minimal number of sequences doable with each method and for gate sets of different conditionings, without using the compression capabilities of \alg\ yet. 
  We find that \alg\ is more flexible in the sequence design and model parametrization, while generating estimators with lower mean variation errors in several regimes.
 
  The traditional strategy for \ac{GST}, akin to standard quantum process tomography, is to generate a frame for $\L(\mathcal{H})$, measure each gate in that frame and generate an estimate for each gate by applying the pseudo-inverse of the measurement operator.

  This is particularly important for the first reconstruction step in \pyGSTi\, where the sequences that generate the frame are called \emph{fiducials}. 
  The strategy of \pyGSTi\, is to obtain an initial estimate via the pseudo-inverse, followed up by local optimization of a particular cost function \cite{Nielsen2020GateSetTomography}. In contrast, we perform \alg\ using random initializations and, thereby, not rely on designated fiducial sequences.

  Figure~\ref{fig: Phase transitions - Amplitude damping} compares \alg\ to \pyGSTi\, focusing on the regime of very few 
  gate
  sequences, showing what is needed in terms measurement effort to obtain low mean variation errors for different gate sets. 
  In order to test \pyGSTi\ in the regime of low sequence counts, we replace the 5 standard fiducial sequences with  2 or 3 fiducial sequences drawn uniformly at random, thereby reducing the total sequence number from 92 to 18 or 39 sequences. Since \alg\, is compatible with any sequences design, we use between 10 and 58 random sequences for \alg\ to explore the low sequence count region. 

  The first gate set we study is the so-called \emph{XYI model}, the standard single qubit example in the \pyGSTi\ package \cite{Nielsen20pyGSTi}. 
  The XYI model consists of the identity gate, a $\pi/2$ X-rotation and a $\pi/2$ Y-rotation on the Bloch sphere, with initial state $\ketbra{0}{0}$ and measurement in the computational basis. 
  Results for the ideal XYI-model can be seen on the left in Figure~\ref{fig: Phase transitions - Amplitude damping}, with \alg\ and \pyGSTi\ performing identically for 92 sequences. Comparing the results for 18 sequences we find that \alg\ does not converge on more than 
  50\% of trials, which reflects in the median MVE being above $10^{-1}$, while pygsti achieves lower median errors. Comparing the 38 and 39 sequence medians however, we find that \alg\ yields lower error models than \pyGSTi. 

  The subsequent models analyzed successively deviate from the simple unitary XYI-model and highlight the versatility of 
  our
  manifold approach. 
  Since the full \ac{CPT} parametrization used in our optimization \eqref{eq: estimation problem} is agnostic to any special gate set properties we expect it to perform well for all possible \ac{CPT} maps as gate implementations. 
  For instance, for random and specific non-Markovian channels. 
  \pyGSTi\ on the other hand uses a parametrization of Lindblad type and is therefore based on a more limited model space. 

  To illustrate this comparison, 
  we perturb the XYI-model by adding amplitude damping noise to each gate. The amplitude damping channel can be written in terms of the Kraus operators $K_1 = \begin{pmatrix} 1 & 0 \\ 0 & \Gamma \end{pmatrix}$ and $K_2 = \begin{pmatrix} 0 & \sqrt{1-|\Gamma|^2} \\ 0 & 0\end{pmatrix}$, which arise e.g.\ from the Jaynes-Cummings model of a qubit system interacting with a quantized bosonic field \cite{2020npjQI...6....1G}. 
  How well \alg\ and \pyGSTi\ perform on a model with $\Gamma = 0.94$ can be seen in the center left block of Figure~\ref{fig: Phase transitions - Amplitude damping} (XYI + Weak damping). 
  We find generally similar performance, with \alg\ being a bit more accurate on 38 sequences, and a bit less accurate on 92 sequences with $10^6$ samples. 

  Increasing the interaction time between qubit and environment leads to memory effects and strong non-Markovianity of the amplitude damping channel at $\Gamma = -0.6$. This scenario is shown in the center right plot of Figure~\ref{fig: Phase transitions - Amplitude damping}, and we see that while the accuracy of \alg\ is the same as before, the model parametrization of \pyGSTi\ cannot fit the model with \acp{MVE} below $10^{-2}$, independent of the sequence or sample count. 

  For the last comparison (rightmost block in Figure~\ref{fig: Phase transitions - Amplitude damping}) we look at the performance for random full Kraus rank channels. 
  Each channel is constructed by drawing a Haar random $d^3 \times d^3$ unitary and then taking its first $d$ columns. The resulting $d^3 \times d$ matrix is an isometry and therefore constitutes a valid set of Kraus operators. Note that this construction is different from the previous construction of random channels via the Gaussian unitary ensemble. 
  The results show that \alg\ can reconstruct these models from low sequence counts, while \pyGSTi\ does not yield good estimators. Using the standard sequence design of 92 sequences, \alg\ and \pyGSTi\ have identical accuracy again, suggesting that random channels are typically well within the model space of \pyGSTi\ after all.
  These demonstrations show that \alg\ is indeed flexible in the sequence design with state-of-the-art performance for arbitrary gate set implementations.

\subsection{Number of samples per sequence} \label{sec: Sample size dependence}

\begin{figure}[t]
  \includegraphics{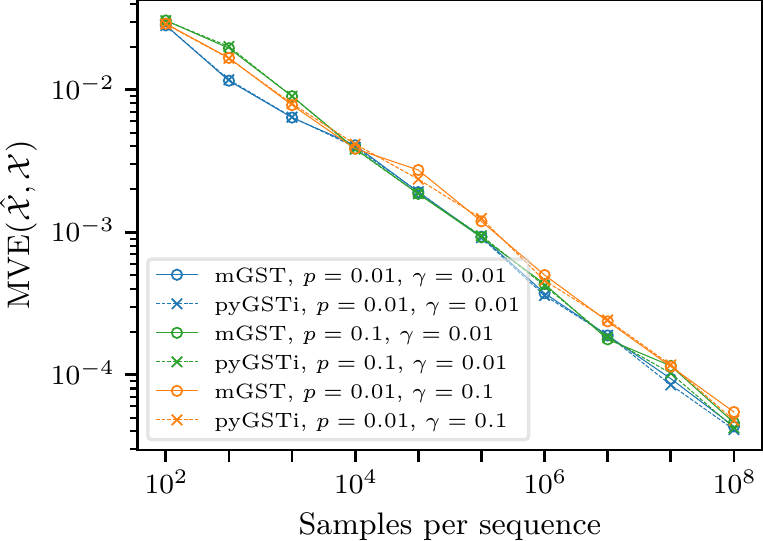}
  \caption[]{Reconstruction of the XYI gate set for different levels of depolarizing noise with strength $p$ 
  and unitary noise with strength $\gamma$ on each gate. 
  The unitary noise is given by $\e^{\i\gamma H}$ with $H \sim \mathrm{GUE}$. Additional depolarizing noise with $p = 0.01$ is applied before measurement. The \alg-algorithm is run on the log-likelihood cost function with $r_K = 4$ (max.), which is the same as for \pyGSTi. 
  As gate sequences we used again the standard \pyGSTi\ fiducial sequences, with the number of measurements per sequence between $10^2$ and $10^8$. 
  The lines connect data points of which each is the median over $10$ runs. 
  For each run a new random overrotation is drawn and new measurements are simulated. 
  The measurement sequences are the $92$ sequences provided by the \pyGSTi\ software, with a maximum sequence length of $\seqlength \leq 7$.
  }
  \label{fig:shot noise}
\end{figure}
The probability associated to every sequence is estimated from a finite number of samples. 
Here, we study the resulting effect on the reconstruction accuracy as measured by the \ac{MVE} more closely. 

For a high number $m$ of samples per sequence, each probability $y_{\vec{i}}^j$ in the objective function is estimated with an error of order $1/\sqrt{m}$.
Therefore, we expect the \ac{MVE} to also decrease as $1/\sqrt{m}$ if the algorithm converges to the global minimum. 
This scaling was observed to hold true for \pyGSTi~\cite{Nielsen2020GateSetTomography}. 
In order to be able to compare the scaling of \alg\ directly to the one of \pyGSTi, we use a standard \pyGSTi\ setting: 
The gate set is the XYI-model (with $\pi/2$-rotations) and the gate sequences are the standard \pyGSTi\ sequences for this model with a maximum sequence length of $\seqlength=7$. 

We add noise to the gate set by varying the amount of depolarizing noise with strength $p$ on each gate and also overrotating each gate by a random unitary. 
The random unitaries are given by $\e^{\i\gamma H}$ with $H \sim \mathrm{GUE}$. 
In particular, this means that $H$ can be bounded on average as follows.
We can write $H = (M_1 + M_1^T + \i(M_2 - M_2^T))/2$ with $M_i$ being independent Gaussian matrices. 
Next, we use Gordon's theorem for Gaussian matrices (see e.g.\ \cite[Theorem 5.32]{Ver12}), which tells us that $\EE \pnorm[\infty]{M_1} \leq 2\sqrt d$. 
The relevant magnitude of the random generator $H$ is then in expectation upper bounded as 
\begin{equation}
  \EE \norm{H}_{\infty} \leq 2\, \EE \norm{M_1}_{\infty} \leq 4 \sqrt{d}\, .
\end{equation}
State preparation and measurement are assumed to be noise-free in this setup, however for a fixed sequence length the depolarizing noise per gate is equivalent to a global depolarizing channel applied before measurement, since it commutes with the unitary gates. 
Figure~\ref{fig:shot noise} depicts the resulting \ac{MVE}-scaling of the reconstruction where data was generated using different numbers of samples per sequence~$m$. 

We observe that \alg\ follows the expected scaling in $m$, matching the scaling of \pyGSTi\ for different levels of unitary and depolarizing noise.

\begin{figure*}[t] 
    \begin{minipage}[b]{.4\paperwidth}
            \includegraphics{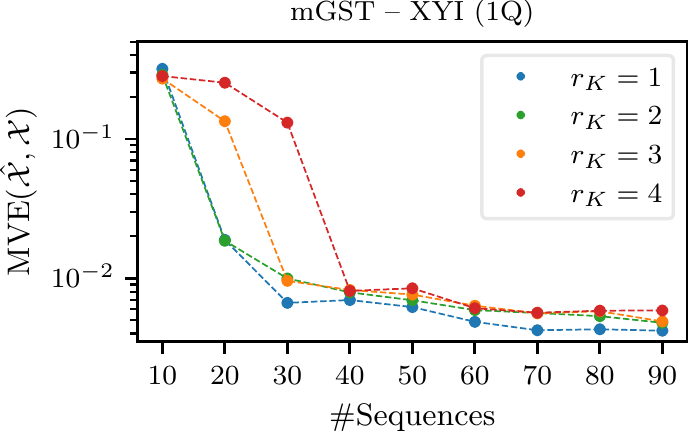} \\[.3cm]
            \includegraphics{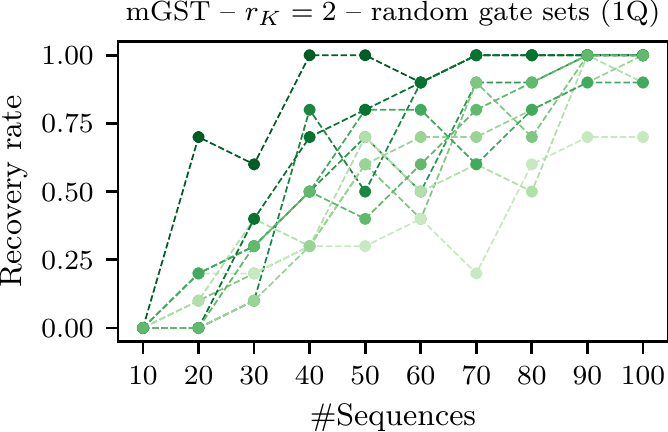}
    \end{minipage}  
    \begin{minipage}[b]{.4\paperwidth}
            \includegraphics{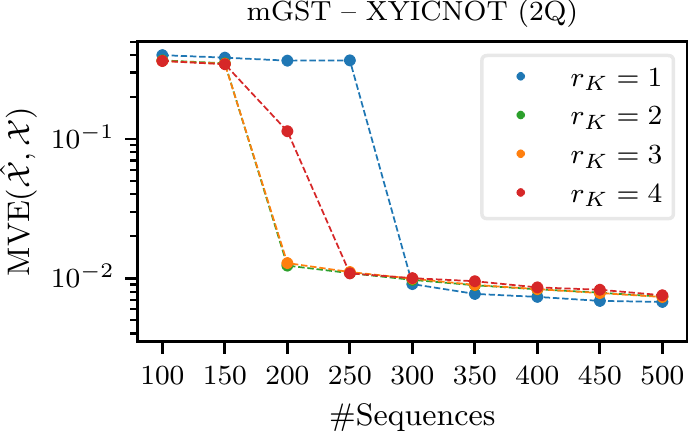} \\[.3cm]
            \includegraphics{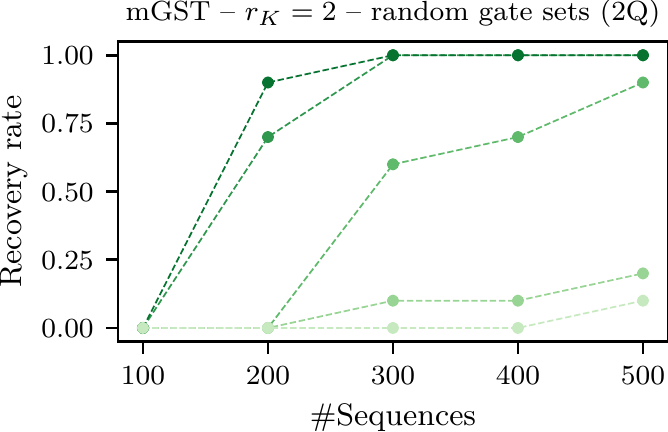}
    \end{minipage}
  \caption{\label{fig:Phase transitions}
  Median error of \alg\ run on the least squares cost function, plotted over the number of sequences for a single qubit model (\textbf{top left}) and a two-qubit model (\textbf{top right}). Each data point is the median over the results from $10$ different random sequences.  
  The measurement data for the XYI - and the XYICNOT gate set is taken from a noisy version with depolarizing noise of strength $p= 0.001$ on each gate, depolarizing noise with strength $p = 0.01$ before measurement, as well as independent random unitary rotations $\e^{\i\gamma H}$ with $\gamma = 0.001$ and $H \sim \mathrm{GUE}$ on each gate. 
  \newline
  On the \textbf{bottom left} the recovery rates for the reconstruction of
  different 
  models of $3$ Haar random unitaries are shown. 
  For each gate set the average over $10$ draws of random sequences is shown. 
  A gate set is classified as recovered if the \ac{MVE} falls below $0.03$. 
  The \textbf{bottom right} depicts the recovery rate for random two qubit 
  gate sets of the form $\mathcal{G} = \{\mathds{1}_4, \mathds{1}_2\otimes U_1, \mathds{1}_2 \otimes U_2, U_1 \otimes \mathds{1}_2, U_2\otimes \mathds{1}_2, U_{12}\}$ where $U_1$ and $U_2$ are Haar random single qubit unitaries and $U_{12}$ is a Haar random two qubit unitary.
  Additionally, each single qubit and two qubit gate
  contains depolarizing noise of strength $p = 0.001$ and depolarizing noise of strength $0.01$ is applied before measurement. 
  The recovery rate is averaged over $10$ random sequence draws. 
  For all 
  gate sets the 
  sequences are drawn uniformly at random with sequence length $\seqlength = 7$ and $m = 1000$ samples per sequences. 
  The maximum number of initializations are $80$, $33$, $17$ and $10$ for Kraus ranks $1, \dots,4$ respectively. They are chosen such that the maximal computation time is equal among different ranks. 
  } 
\end{figure*}

\subsection{Number of sequences} \label{sec:Sequence count dependence}

The arguably most challenging experimental requirement of \ac{GST} is the number of measurement settings (sequences) that are required for a successful gate reconstruction. 
One of the main motivations of compressive \ac{GST} is to employ structure constraints, 
i.e.\ to reduce
the number of degrees of freedom of the reconstruction problem, in order to reduce the required number of measurements. 
Instead of reconstructing arbitrary quantum channels we aim at reconstructing low-rank approximations of the gate set elements. 
In addition, we expect that by using the \alg\ algorithm, compressive recovery is possible from already a `few' randomly selected sequences. 
We here numerically demonstrate that this is indeed the case.

The top row of Figure~\ref{fig:Phase transitions} shows the median performance in \ac{MVE} against the number of randomly chosen sequences for 
different Kraus ranks. 
On the left are the results for the single qubit XYI model as defined in Section~\ref{sec:Gate set structure}. On the right are the results for the XYICNOT gate set that is based on the identity, CNOT and Pauli-X and -Y rotations on each qubit individually, with rotation angle $\pi/2$. 

We observe a phase transition in the \ac{MVE} that indicates a minimal number of sequences that are required for the successful reconstructions of the gate sets. 
As expected, constraining the reconstruction to a lower Kraus rank indeed reduces the amount of required sequences in the reconstruction in most cases. 

An intriguing exception is the $r_K = 1$ reconstruction of the two qubit gate set that exhibits the worst reconstruction performance compared to higher rank constraints. 
We suspect that this is due to the optimization problem being more dependent on the initialization for $r_K = 1$. 
In more general settings, it has been observed that the optimization over matrix-product states with fixed Kraus rank can be unstable and using rank-adaptive optimization techniques yield much better performance \cite{GrasedyckKraemer:2019:Stable,GoessmannEtAl:2020:Tensor}. 
This motivates to use a slightly higher rank in the optimization than the expected rank of an effective approximation of the gate set. 
In accordance with this intuition, we find that it is beneficial to constrain the optimization to $r_K = 2$ in order to achieve an accurate unit-rank approximation.
The same effect is also observed in the single qubit example when taking a detailed look at the number of required initializations (see section \ref{sec:Calibration})
, yet less pronounced. 
In the bottom row of Figure~\ref{fig:Phase transitions} we show the recovery rates for random unitary models, with the reconstruction now using a fixed Kraus rank of $r_K = 2$. Note that there are three sources of randomness present in the data, first the Haar-random unitary gates, then the random drawing of gate sequences and finally the random initialization of the algorithm. Each shade of green corresponds to one random gate set, and the recovery rate tells us how many of the 10 random sequence sets lead to a successful reconstruction, given a budget of 33 initializations. We find that the random single qubit gate sets all have similar recovery rates, with a successful reconstruction possible from $n_{\mathrm{seq}} = 20$ to $n_{\mathrm{seq}} = 30$ sequences, and a high rate of recovery at $n_{\mathrm{seq}} = 100$ sequences. In the two qubit case a different picture emerges, where two of the random gate sets show a high recovery rate at $n_{\mathrm{seq}} = 200$ sequences (akin to the XYICNOT-model), while the least favorable random gate set was only recoverable at $n_{\mathrm{seq}} = 500$ sequences. 
This shows that random gate sets for two qubits can have very different conditioning.

\subsubsection*{Sequence number comparison to \texorpdfstring{\pyGSTi}{pyGSTi}}
Comparing the number of random sequences needed for \alg\ and the number of sequences for \pyGSTi\ is not straightforward. 
The standard \pyGSTi\ data-processing pipelines crucially relies on 
specific, fixed sequence construction. 
For this reason \pyGSTi\ cannot be applied to the type of data that we use here. 
We can however compare the number of random sequences with the number of deterministic sequences that the standard implementation of \pyGSTi\ uses. 
For the single qubit XYI-model, the minimal number of sequences given in the \pyGSTi\ implementation is $n_{\mathrm{seq}} = 92$. This is significantly larger than the number of random sequences at which the phase transition of \alg\ in Figure~\ref{fig:Phase transitions} appears. However, the $n_{\mathrm{seq}} = 92$ sequences are overcomplete by design, and we find that \pyGSTi\ can also reconstruct the XYI model with $n_{\mathrm{seq}} = 48$ sequences. 
Yet we find the same sequence design not to be successful for three Haar-random single qubit gates, indicating that the choice of sequences is well-tailored to the XYI-model. 
The reduction in sequences becomes more pronounced for the two-qubit gate set studied in the top right of Figure~\ref{fig:Phase transitions}. 
For this gate set, the minimal number of gate sequences that \pyGSTi\ uses is $n_{\mathrm{seq}} = 907$, which is significantly larger than what \alg\ needs. 

\subsection{Characterizing unitary errors using prior knowledge} \label{sec: 3Q unitary}
\begin{figure}[h!]
        \includegraphics{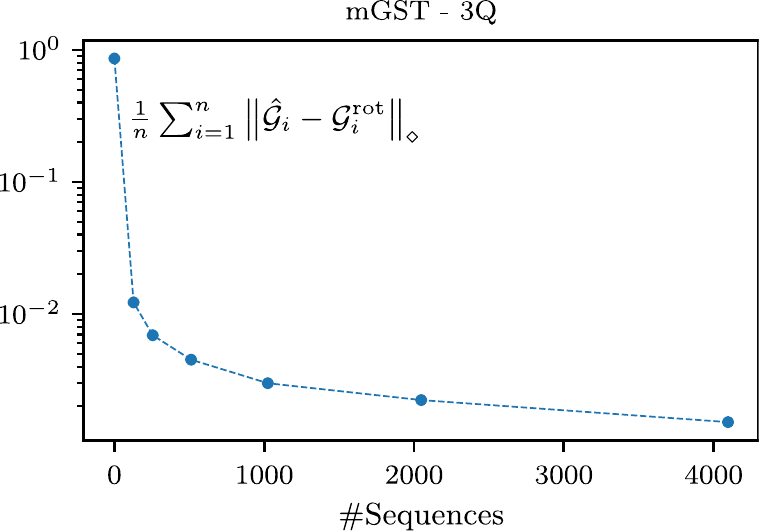}
  \caption[]{Average diamond distance between 3-qubit rotated target gates $\mathcal{G}^{\mathrm{rot}}_i$ and their unitary ($r_K = 1$) \alg\ estimators $\hat{\mathcal{G}}_i$ as a function of the number of sequences. 
  The 0-sequence data marks the average diamond distance between initialization $\mathcal{G}_i$ and $\mathcal{G}^{\mathrm{rot}}_i$. The rotated gates $\mathcal{G}^{\mathrm{rot}}_i$ are related to their counterparts $\mathcal{G}_i$ by independent random (global) overrotations on each gate, given by $\exp(\i \gamma H))$ with $H \sim \mathrm{GUE}$ and $\gamma = 0.05$, leading to $\EE\norm{\gamma H}_{\infty} \leq 2\sqrt{2}/5$.
  The data is simulated from the gates $\mathcal{G}^{\mathrm{rot}}_i$ with depolarizing noise of strength $p = 0.01$ on each gate and 
  the measurements are taken from random sequences of length $\seqlength = 7$, with $m = 10^5$ samples per sequence.
  }
  \label{fig:3Q_diamond}
\end{figure}

In the previous section we demonstrated compressive gate set tomography for one- and two-qubit gate sets using agnostic random initializations.  
A major obstacle in going beyond reconstructing entire two-qubits gate sets even compressively on desktop hardware is that besides run-time and storage also the number of required random initializations until proper convergence grows in principle with the number of qubits. This is due to longer sequences being required for tomographic completeness, leading to a higher order polynomial in the cost function. 
This situation can be remedied by using prior knowledge, such as the target gate set, for the initialization. 
In this case, a gate set in the vicinity of the initial point will be found which is in better agreement with the data. In a conceivable experimental scenario the gates are more or less known due to the physical setup and previous benchmarking rounds, but further calibration requires information about present coherent errors. The use of prior knowledge can be seen as a situational tool to further reduce runtime when applicable, but for general purpose verification and characterization no initial point is to be assumed for compressive \ac{GST} with the \alg\ algorithm. \\
To showcase the characterization of unitary errors using prior knowledge, we take a three-qubit gate set that is the direct generalization of the previous two-qubit XYICNOT model, by adding the local X- and Y-rotations as individual gates to the 3rd qubit and adding a CNOT between qubits 2 and 3. We then apply a global random rotation to each gate individually, as well as depolarizing noise on each gate. 
From random sequences of 
fixed sequence length we can then, in theory, fit the noisy model perfectly via an $r_K = 1$ approximation, as the depolarizing channels commute with the unitary gates and can be pulled into initial state or measurement. 
In Figure~\ref{fig:3Q_diamond} we see that \alg\ is indeed able to precisely reconstruct the rotated gates, as shown by the average diamond norm error. 
We chose a comparatively high number of $10^5$ samples per sequence to showcase that high accuracy can indeed be realized using this method: for instance, only $256$ sequences are enough to achieve an average diamond norm distance of around $0.007$ between the reconstructed unitary gates and the true unitary gates, 
which include overrotations. 
The fact that these overrotations were modelled as being global on all $3$ qubits suggests that we can efficiently characterize unitary crosstalk as well, by capturing the effect of single and two qubit gate on their neighbours within a three qubit region.

\subsection{Implementation details and calibration} \label{sec:Calibration}

We now provide more details on the simulations, the criteria for successful recovery and the required number of initializations.
To simulate measurements on a gate sequence $\vec{i}$, we first compute the outcome probabilities $p_{j|\vec{i}}$ from Eq.~\eqref{eq:output_probabilities} of the \ac{POVM} elements according to the model gate set in question. 
Afterwards we draw $\nsamples$ samples from the multinomial distribution $M(\nsamples,(p_{1|\vec{i}},\ \dots, p_{n_E|\vec{i}}))$, where $\sum_j p_{j|\vec{i}} = 1$. 
Let $k_{j|\vec{i}}$ be the number of times outcome $j$ occurred for sequence $\vec{i}$. 
Then Algorithm~\ref{Alg:main} optimizes the objective function~\eqref{eq:objective function} on the estimated probabilities $y_{j|\vec{i}} = k_{j|\vec{i}}/\nsamples$. 

For the single qubit examples the batch size $\kappa = 50$ was chosen, while for the two qubit example we use $\kappa = 120$. The choice of batch size determines the number of values summed over in Eq.~\eqref{eq:objective function}. Therefore, the computation time of the objective function and its derivatives scales linearly in $\kappa$, making a small batch size favorable.
However, it cannot be set too small, otherwise the update directions become highly erratic, and no convergence is reached. 
A general rule of thumb is to set the batch size close to the number of free parameters in the model. 
Another hyperparameter is the damping value $\lambda$ for the saddle-free Newton method described in Algorithm~\ref{alg:SFN-update}. 
We find that a fixed value of $\lambda = 10^{-3}$ leads to the best results across the models tested. 

Judging whether \alg\ recovers a gate set by looking at the attained objective function value can only be done if the set of measured sequences is informationally complete. Then there is a unique (up to gauge) global minimum in the least squares minimization problem and the minimum corresponds to the true gate set in the limit of infinitely many samples per sequence. 

In Figure~\ref{fig:Obj-MVE-correlation} we take a look at the correlation between the final least square objective function value $\mathcal{L}(\hat{\mathcal{X}},\vec{y})$ and the \acl{MVE} $\ac{MVE} (\hat{\mathcal{X}},\mathcal{X})$. We see that for a low number of sequences ($10$-$20$), a low objective function value does not imply a low \ac{MVE}, yet for higher numbers of sequences, an objective function value below $10^{-3}$ implies an \ac{MVE} around $10^{-2}$. 
For sufficiently many sequences, the gray line indicating our success criterion clearly separates two clusters of points, meaning that no intermediate quality fits are found in our model space. In this sequence regime either the algorithm converges to a fit as good as the sample count allows, or it does not converge at all. 
Therefore, restarting the algorithm when an initialization turns out to be bad yields practically optimal results.
A thorough analysis of the probability of obtaining an informationally complete set of random sequences is left for future work. 

To give an intuition on how many initializations are required for \alg\ to converge, we can take a look at data from a modified XYI-model with gates $\{\mathds{1}, \e^{\i \frac{\alpha}{2} \sigma_y}, \e^{\i \frac{\alpha}{2} \sigma_x}\}$, simulating more difficult gate set conditioning. 
Figure~\ref{fig:Initialization histogram} shows histograms for the number of reinitializations needed for convergence. The data combines the results for $\alpha$ between $\pi/18$ and $\pi/2$, with depolarizing noise of strength $p = 0.001$ on each gate and $p = 0.01$ on the initial state, as well as a maximum of $100$ reinitializations. 

We observe that in $48.4\%$ of all cases convergence is reached on the first attempt, and in $90.4\%$ of cases $4$ or fewer
reinitializations are required. 
The histogram indicates that the chance of needing multiple reinitializations rapidly decays and that rank-1 optimization is more sensitive to bad initializations compared to rank-4 optimization. 

\begin{figure}[h!]
  \centering
    \includegraphics{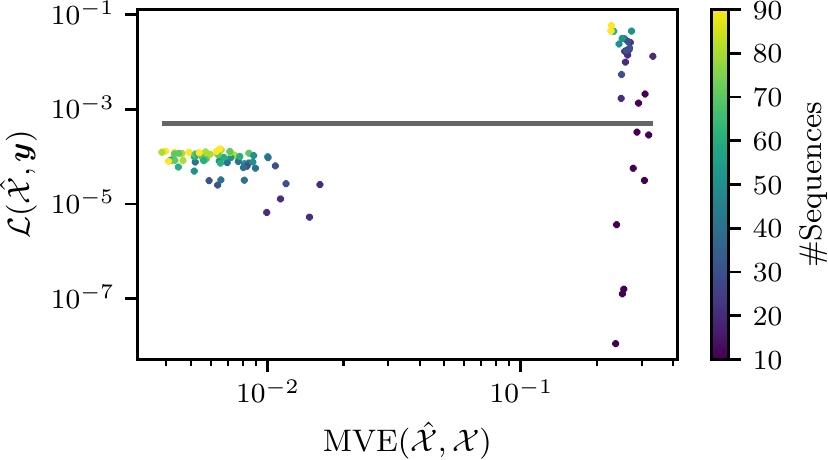}
    \caption{Least square objective function vs.\ \ac{MVE} at the end of the \alg-optimization, with the gray bar indicating the range of stopping values below which a run is considered successful.
    The plot illustrates that results with a large MVE also have a large objective function if enough sequences are measured.
    The experiment is the same as in the top left of Figure~\ref{fig:Phase transitions}, for $r_K = 4$. 
    The color of each point indicates the number of measured sequences on which \alg\ was run.
    }
  \label{fig:Obj-MVE-correlation}
\end{figure}

\begin{figure}[h!]
  \centering
    \includegraphics{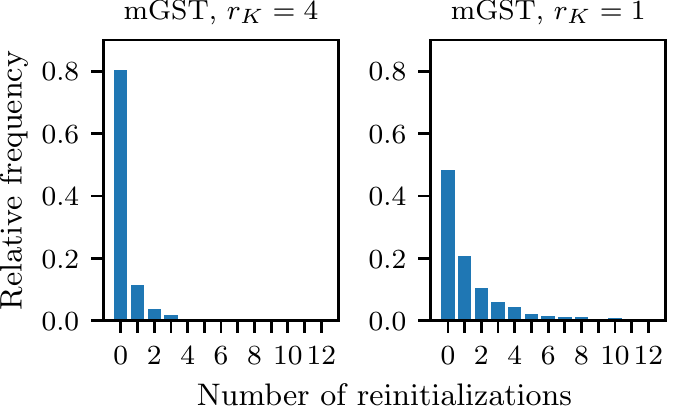}
  \caption[]{Relative frequencies of reinitialization counts for \alg run on modified XYI models (see main text for details).}
  \label{fig:Initialization histogram}
\end{figure}

\subsection{Runtime and scaling} \label{sec:runtime}

In order to assess the runtime scaling of Algorithm~\ref{Alg:main} in the problem parameters we identify the two most time-consuming steps as the computation of the second derivative terms in $H$ and the diagonalization of $H$ for the \ac{SFN} update on the gates. 

Recall that $r_K$ denotes the Kraus rank of the gate estimators, $n$ the number of gates in the gate set, $\ell$ the number of gates per sequence and $\kappa$ the number of sequences per batch. 
The computation of the second derivative terms scales as $\LandauO(\kappa \ell^3d^{6})$, while the eigendecomposition of $H$ scales as $\LandauO\bigl((2 n r_K d^2)^{3}\bigr)$. 
A computationally less expensive variant is to not optimize over all variables of the full gate tensor at once, but over the individual gates one after another. The complexity of the eigendecompositions then reduces to $\LandauO\bigl(n(2 r_K d^2)^{3}\bigr)$, which is beneficial for large gate sets. However, this approach also leads to slower convergence, and we choose to optimize over the full tensor by default.

Table \ref{table: standard deviation sources} contains runtimes for the system sizes studied in our numerical experiments. 
We find that reconstruction of single qubit gate sets can be achieved within seconds and low rank $2$-qubit gate sets can be reconstructed within minutes on a single modern 32-core CPU. The runtimes for the gradient descent method on the 2-qubit example were very fast up to high ranks, however we generally find that the default Newton method is more reliable across different gate set and sequence scenarios. If a good initialization is known, the runtime is reduced drastically. For instance the 3-qubit reconstruction done with prior knowledge about the gates in Section~\ref{sec: 3Q unitary} took 5 hours and 30 minutes to complete. 

\begin{table}[ht] 
\centering 
\begin{tabular}{l c |c |c |c } 
& \multicolumn{3}{c}{\textbf{\alg : Default (Newton)}} \\ 
\cmidrule(l){2-4} 
 & $r_K = 1$ & $r_K = 4$ & $r_K = 8$ & $r_K = 16$ \\ 
\midrule 
1Q & 4'' & 14'' & $\diagup$ & $\diagup$\\ 
2Q & 8'8''  & 39'50'' & 2h26'10'' & 9h1'43'' \\ 
3Q & 3d2h & $\diagup$ & $\diagup$ & $\diagup$\\ 
\end{tabular} \\
\begin{tabular}{l c |c |c |c } 
& \multicolumn{3}{c}{\textbf{\alg : Gradient descent}} \\ 
\cmidrule(l){2-4} 
 & $r_K = 1$ & $r_K = 4$ & $r_K = 8$ & $r_K = 16$ \\ 
\midrule 
1Q & 26'' & 17'' & $\diagup$ & $\diagup$\\ 
2Q & 5'40''  & 6'0'' & 5'51'' & 4'47'' \\ 
3Q & No conv. & $\diagup$ & $\diagup$ & $\diagup$\\ 
\end{tabular} \\
\begin{tabular}{l c } 
& \textbf{\pyGSTi}\\
\midrule 
1Q & 5'' \\ 
2Q & 1h6'48''  \\ 
3Q & No result after 4d14h \\ 
\end{tabular}
\caption{
Average runtimes for $1,2$ and $3$ qubits with selected Kraus ranks on a modern 32 core CPU. The average runtime was calculated as the average time until the first successful reconstruction with a random initialization was achieved.  The one qubit gate set is the same as in Figure~\ref{fig:shot noise}, the two qubit gate set the same as in Figure~\ref{fig:Phase transitions} and the three qubit gate set is the same as in Figure~\ref{fig:3Q_diamond} but without using any prior knowledge. For the gradient descent method in the three qubit scenario, no convergence was achieved within the maximum iteration limit. The average runtimes for \pyGSTi\ were determined on the same models and with the same sequences. 
} 
\label{table: standard deviation sources}
\end{table}

\section{Noise-mitgation for shadow estimation with compressive GST} 
\label{sec: shadows}
Gate set estimates provide a detailed picture of the imperfections that can inform prioritization and further experimental efforts \cite{BluGamNie17,dehollain2016optimization,2019SciA....5.5686S,2020NatCo..11..587Z,2020arXiv200701210C}. 
Coherent error estimates can often be directly corrected for by adjusting or optimizing the control.
We expect that the more economically accessible compressive estimates from mGST can be used in place of traditional GST estimates in the above applications, in particular when complemented with RB estimates of incoherent noise effects.
However, it is beyond the scope of this work to demonstrate mGST in a full engineering cycle of a quantum computing device.

\begin{figure*}[t]
  \centerline{\includegraphics{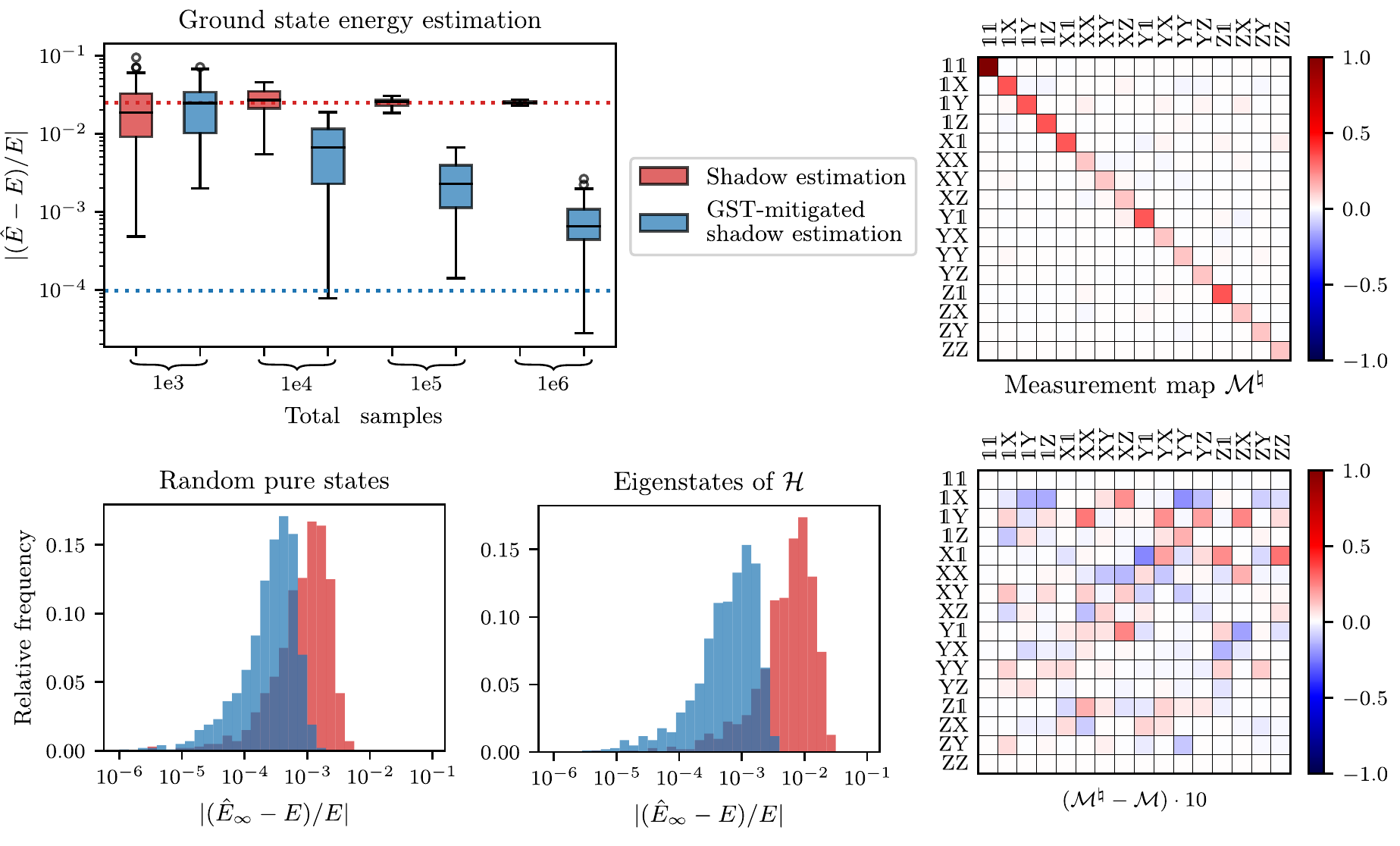}}
  \caption[]{
  Energy estimation for the 10 qubit Hamiltonian $\mathcal{H}=\frac{1}{2} \sum_{j=1}^{10}(\sigma_x^j \sigma^{j+1}_x+\sigma^j_y \sigma^{j+1}_y+\sigma^j_z \sigma^{j+1}_z - \sigma^j_z)$ with periodic boundary conditions. The \textbf{top left} shows the sample dependence of the relative accuracy $|(\hat E - E)/E|$ for estimating the ground state energy.
  The colored blocks extend from the 1st quartile to the 3rd quartile around the median (black line) of the data (50 repetitions per sample value). The whiskers extend from the 5th to the 95th percentile and the dashed lines indicate the infinite sample expectation values. On the \textbf{bottom left} two histograms are shown that compare the theoretical infinite sample energy estimates $\hat E_{\infty}$ (biases) for $1000$ random pure states and for all $1024$ eigenstates of the Hamiltonian, respectively. All simulations were done with noisy Clifford gates, whose average gate fidelity to their ideal counterparts is at $0.99 \pm10^{-3}$. On the \textbf{top right} the Pauli transfer matrix of a two-qubit effective measurement map $\mathcal{M}^\natural$ under this noise model is shown. The \textbf{bottom right} plot displays the difference between $\mathcal{M}^\natural$ and its noise-free counterpart $\mathcal{M}$. 
  Gate estimates for GST-mitigated shadow estimation were produced with \alg\ ($r_K = 2$), using $400$ random sequences equally distributed among sequences lengths $\{6,7,8,9\}$ with $10^4$ samples per sequence. The noise in the simulations is given by two-qubit random unitary noise $\e^{\i\gamma K}$ with $K \sim \mathrm{GUE}$. The error parameter is $\gamma = 0.14$ on $H$ and on $HS$, leading to the aforementioned average gate fidelities of $\sim 0.99$. For the bottom left histogram, random pure state were generated as $U\ket{0}$, with $U$ drawn according to the Haar measure. 
  }
  \label{fig:Shadows}
\end{figure*}

To still showcase the value of GST estimates, we sketch another novel application that can be demonstrated without simulating a whole engineering cycle.
Generally speaking, noise characterizations can be used to mitigate noise-induced biases in other quantum characterization protocols by adapting the classical post-processing \cite{2022arXiv221000921C,Endo21HybridQuantumClassical}.
Given a device that can repeatedly prepare a quantum state, a fundamental characterization task is to estimate the expectation values of observables from measurements.
In particular, an informationally complete measurement allows one to estimate arbitrary observables from the same data in the post-processing.
Such an informationally complete measurement can be implemented on a quantum computing device by measuring in sufficiently many random bases,
the prototypical example being measurements in random Pauli bases.
Ref.~\cite{Huang2020Predicting} showed how to derive optimal guarantees with exponential confidence for estimating multiple observables simultaneously using a median-of-means estimator and explicit bounds on the variance for random basis measurement that constitute unitary $3$-designs. 
They introduced the term `classical shadow' to refer to the elements of a dual frame of the informationally complete POVM corresponding to observed samples, see Appendix~\ref{app:Shadows} for a brief summary.
Importantly, one can often arrive at high precision estimates of observables long before one has measured all the informationally complete bases in multi qubit systems.

In practice however, implementing a random bases measurement, say, by applying a unitary rotation followed by a computational bases measurement will suffer from noise
from the gates and read-out.
This has motivated the development of robust variants of shadow estimation that either make use of simple depolarizing noise-models of known strength \cite{KohGrewal:2022:ClassicalShadows} or perform a separate RB-style experiment that estimates the depolarizing noise-strength induced by a gate-independent channel acting between the rotation and the measurement \cite{2021PRXQ....2c0348C}.
Using GST estimates provides a complimentary, flexible approach to mitigate even highly gate-dependent noise with finite correlations in shadow estimation.

We demonstrate how \ac{GST} estimates on $2$-qubit pairs can be used to calculate noise-robust classical shadow estimators in post-processing.
Our robust estimation scheme consists of two distinct stages each consisting of multiple steps:
(I) \emph{calibration stage}: (i) the local channels implementing each combination of two local gates are reconstructed with \alg; ii) the gauge of these gate estimates is matched to the gauge in which the ideal gates and the observables are given. For this step we use the gauge optimization provided by the \pyGSTi\ package \cite{Nielsen20pyGSTi}.
(iii) From the gauge-optimized channel estimates, we numerically calculate the \emph{effective measurement map} when implementing random Pauli measurements with the characterized noisy gates-set. 

(II) After calibration, the second stage is a \emph{shadow estimation} protocol consists of two separate phases:
(i) the \emph{data acquisition} by repeatedly measuring the unknown state of the quantum device in a randomly selected Pauli basis; 
(ii) the \emph{classical post-processing} where estimators of the observables are calculated using the data. We use the inverse of the effective measurement map from the calibration stage to calculate the empirical estimators.
We give a more detailed description of the individual steps of the procedure in Appendix \ref{app:Shadows}.

Using an empirically estimated effective measurement map instead of the ideal theoretical result is the essential modification compared to standard shadow estimation.
In this way, we also `invert' the effect of the noise on our estimator.
The right column of Figure~\ref{fig:Shadows} shows an effective measurement map implemented with imperfect gates.

As a proof of concept, we chose the following simple but practically relevant setup: 
Random local Pauli basis measurements are implemented by native measurements in the computational basis after rotating with a Hadamard gate $H$ (if the Pauli-X basis is to be measured) or a phase gate $S$ followed by a Hadamard gate (for measurement in the Pauli-Y basis). 
Since throughout the protocol the $S$-gate only turns up before application of the Hadamard gate, we treat the sequence $HS$ as a single gate. 

We assume that the dominant noise associated with the single qubit rotations of each local gate in the experiment stays confined to two neighboring qubits.
This assumption makes both the gate set estimation and the post-processing of the shadow estimation highly scalable.

Figure~\ref{fig:Shadows} shows the results of our scheme in simulations of the energy estimation of a Heisenberg Hamiltonian on a $10$-qubit system.
We observe that a using the estimated effective measurement map instead of the ideal theoretical one significantly reduces the relative error $|(\hat E - E)/E|$ between the estimated energy $\hat E$ and the true energy $E$ of a given state. There are two contributions to the relative error in a shadow estimation protocol: First, the statistical fluctuation from the randomness of both the Pauli basis selection and the single shot measurements. Second, the systematic bias introduced in the post-processing due to imperfect implementations of the measurements. 
The histograms in Figure~\ref{fig:Shadows} show the infinite measurement limit of the relative error and thus directly reflect the bias. We observe that for a fixed noise model, the magnitude of the bias depends heavily on the selected initial state, with relative errors being distributed over two orders of magnitude. When comparing the most likely errors between standard shadow estimation and GST-mitigated shadow estimation we find that using GST data leads to a reduction in relative error by half an order of magnitude for random pure states and an order of magnitude for eigenstates of the Hamiltonian. 
The simulation of the protocol in the top left of Figure~\ref{fig:Shadows} includes statistical fluctuations and showcases how estimates spread for different sample counts when the ground state energy is estimated. 
We find that from $10^4$ samples on, the GST-mitigated protocol yields significantly more accurate estimate. 

\section{Conclusion and outlook}
We have revisited the data processing task of \ac{GST} from a compressed sensing perspective regarding it as a highly structured and constrained tensor completion problem. 
In this formulation, we can naturally require the reconstructed gate set to be physical and, moreover, of low rank.
Compressive gate set tomography, thus, aims at extracting considerably fewer parameters of the gate set.  
At the same time we have argued that the low-rank 
approximation to the implementation of a gate set contains the most valuable information about experimental imperfections for the practitioner. 

The set of Kraus-operators of a low-rank gate can be regarded as isometries that make up the complex Stiefel manifold. 
This observation has motivated the solution of the compressive \ac{GST} data processing problem via geometrical optimization on the respective product manifolds. 
We have devised the optimization algorithm \alg\ that performs an adapted saddle-free Newton method on the manifold. 
To this end, we have derived the Riemannian Newton equation, Hessian equation and geodesic curves.

In numerical experiments we have studied the performance of the \alg\ algorithm. 
We have compared it to \pyGSTi, the state-of-the-art approach to the \ac{GST} data processing problem, in settings where both algorithms can be applied and using full rank \alg\ estimates. We have found that in these settings \alg\ matches the performance of \pyGSTi, while offering a larger model space, more flexibility in the sequence design and allowing for low rank assumptions. 
Moreover, we have demonstrated numerically
that making use of the low-rank constraints significantly reduces the required number of measured sequences and the run-time of the reconstruction algorithm 
for a standard single and two qubit model. 
Importantly, we have found that we can successfully reconstruct generic unitary channels and depolarizing noise of one- and two-qubit gate sets from \emph{random} gate sequences.  
This reduces
the demands of \ac{GST} both for 
experiments and 
classical post-processing: the data that compressive \ac{GST} requires is virtually identical with the experimental data produced by randomized benchmarking experiments.  The classical post-processing of \alg\ for a low-rank reconstruction of two qubit gate sets takes only minutes even on desktop hardware, compared to over an hour with \pyGSTi. 
We expect that this speedup and the low number of sequences required can lift 2-qubit \ac{GST} from being a protocol that is unpractical in many situations to one that is routinely applied, thus enabling it to be used in the engineering cycle for the design and calibration of gate sets. 

Furthermore, compressive \ac{GST} makes it feasible to perform self-consistent tomography on 3-qubit systems.  
Making use of often available prior knowledge about an initialization can further reduce the computing time.
We demonstrated this by performing tomography of unitary errors for 3-qubit gate sets, using only a small number of random gate sequences, and with the post-processing still running on desktop hardware in a few hours.  
We expect that even going slightly beyond three qubits is feasible by simply using 
more computing power. 
We leave it to future work to further tweak the numerical implementation in order to improve the scalability of the classical post-processing. We also expect that progressively longer sequences can be added at the end of our optimization method, much in the same fashion as in \pyGSTi, in order to further improve reconstruction accuracy of gate set estimates. 

To demonstrate the use of compressive \ac{GST} for error mitigation, we have introduced one novel application where low-rank \alg\ reconstructions are used to alleviate the effect of coherent errors in classical shadow estimation. 
The protocol uses a set of gates to implement basis changes before the measurement. 
We have demonstrated that with tomographic information on these gates through two-qubit compressive \ac{GST} yields more accurate ground state energy estimates in practically relevant regimes. This constitutes just one example where the full information of a low rank GST estimate is used to correct errors, and we expect that the reduced runtime requirements of \alg\ enable frequent use of \ac{GST} for error diagnosis and mitigation.

Finally, besides making \ac{GST} more applicable and flexible in practice, 
our reformulation is motivated by bringing it closer to theoretical recovery guarantees quantifying a required and sufficient number of random sequences for accurate reconstruction. 
Regarding the data processing of \ac{GST} as a translation-invariant matrix-product-state/tensor-train completion problem makes it more amenable to prove techniques from compressed sensing.  
For example, establishing local convergence guarantees for \alg\ would allow one to quantify the assumptions on the experimental implementation that justify certain initialization of the algorithm.  
We hope that our work can serve as a foundation and inspiration in the quest of establishing mathematically rigorous guarantees for \ac{GST}. 

\let\oldaddcontentsline\addcontentsline
\renewcommand{\addcontentsline}[3]{}
\hypertarget{Acknowledgments}{}
 \bookmark[level=section,dest=Acknowledgments]{Acknowledgments}
\section*{Acknowledgments}
\let\addcontentsline\oldaddcontentsline
We thank Lennart Bittel for helpful discussions and hints regarding optimization methods and their implementation and Markus Heinrich for a discussion on inverses of positive maps on operator spaces. We are particularly thankful to Kenneth Rudinger and Robin Blume-Kohout for valuable feedback on a previous version of the manuscript and help with finding an error in our code.
The work of RB and MK has been funded by 
the Deutsche Forschungsgemeinschaft (DFG, German Research Foundation) within the Emmy Noether program (grant number 441423094) and by 
the German Federal Ministry of Education and Research (BMBF) 
within the funding program ``quantum technologies -- from basic research to market'' via the joint project MIQRO (grant number 13N15522). IR acknowledges funding from the BMBF (DAQC) and the Einstein Foundation (Einstein Research Unit).

\newpage
\onecolumngrid
\section{Appendix}
\appendix
\renewcommand{\thesubsection}{\Alph{subsection}}
\renewcommand{\thethm}{\arabic{thm}}

In this appendix, we provide the mathematical details required for the saddle-free Newton method within the Riemannian optimization framework, see Appendices~\ref{app:geodesic}, \ref{app: complex riem newton}, and \ref{app:Euclidean gradient and Hessian}. 
Moreover, we compare the dependence of the \alg\ \ac{MVE} on the choice of objective function (mean squared error vs.\ maximum likelihood) in Appendix~\ref{app:MLE}. 

\subsection{Geodesics on the Stiefel manifold} \label{app:geodesic}
Edelman, Arias, and Smith \cite{edelman1998geometry} derived the geodesic on the real Stiefel manifold  by solving the respective geodesic equation. 
We now show that the simple generalization given in Eq.~\eqref{eq:geodesic def} is indeed the correct geodesic in the complex case. 
For a curve $K_t \equiv K(t)$ the general geodesic equation is \cite[Chapter~5.4, Proposition~5.3.2]{Absil09}
\begin{align} \label{eq: geodesic eq}
P_{T(K_t)}\left(\ddot{K}_t + C_{K_t}(\dot{K}_t,\dot{K}_t) \right) = 0 \,,
\end{align}
where the Christoffel symbol $C_{K_t}$ depends on the chosen metric. 
Here, we use the canonical metric
\begin{equation}
  \langle \Delta_1,\Delta_2\rangle_{K} = \Re\left\{\Tr(\Delta_1^\dagger \Gamma \Delta_2)\right\}\eqqcolon g(\Delta_1,\Delta_2)
\end{equation}
with $\Gamma = \mathds{1}-\frac{1}{2}K_tK_t^{\dagger}$. 
Using the Einstein summation convention, the Christoffel symbol at $K$ can be computed as 
\begin{align}
\left(C_{K_t}^k\right)_{ij} = \frac{1}{2} g^{-1}_{kl}\left(\frac{\partial g_{lj}}{\partial K_{ti}} + \frac{\partial g_{li}}{\partial K_{tj}} - \frac{\partial g_{ij}}{\partial K_{tl}} + \mathrm{c.c.} \right) \, ,
\end{align}
where $C^k_{K_t}$ is the $k$-th component of the Christoffel symbol at $K_t$ with respect to a basis $\{E_k, E_k^*\}_{k \in [Dd]}$ on the ambient space $\mathbb{C}^{D \times d}$.

\begin{lem}\label{lem:GeodesicEq}
The geodesic equation on the complex Stiefel manifold $\St(D,d)$ equipped with the canonical metric for the curve $K_t: \mathbb{R} \rightarrow \St(D,d)$ is given by 
\begin{equation} \label{eq: final geodesic eq}
P_{T(K_t)}\left( \ddot{K}_t + \dot{K}_t \dot{K}_t^{\dagger}K_t - K_t \dot{K}_t^{\dagger}\dot{K}_t - \dot{K}_t K_t^{\dagger} \dot{K}_t \right) = 0  \, .
\end{equation}
\end{lem}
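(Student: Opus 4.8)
The plan is to compute the Christoffel symbol $C_{K_t}$ for the canonical metric in closed form and substitute it into the general geodesic equation \eqref{eq: geodesic eq}, mirroring the real-case derivation of Edelman, Arias, and Smith \cite{edelman1998geometry} while carefully tracking the complex conjugations via Wirtinger calculus. The entire computation hinges on the observation that the only $K$-dependence of the metric $g(\Delta_1,\Delta_2)=\Re\Tr(\Delta_1^\dagger\Gamma\Delta_2)$ sits in $\Gamma=\mathds{1}-\tfrac{1}{2}KK^\dagger$, so every partial derivative $\partial g_{ij}/\partial K_{tl}$ appearing in the Christoffel formula reduces to a derivative of $\Gamma$.

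First I would record that, since $(KK^\dagger)_{mn}=\sum_p K_{mp}K^*_{np}$, one has $\partial(KK^\dagger)_{mn}/\partial K_{ab}=\delta_{ma}K^*_{nb}$, the conjugate factor $K^*$ being absorbed by the $+\,\mathrm{c.c.}$ term in the stated formula. I would keep the holomorphic and antiholomorphic contributions separate throughout, treating $K$ and $K^*$ as independent variables, which is exactly what the ambient basis $\{E_k,E_k^*\}_{k\in[Dd]}$ is designed to accommodate. Feeding these derivatives into $(C^k_{K_t})_{ij}$ and contracting both lower slots with $\dot K_t$ yields the ambient vector $C_{K_t}(\dot K_t,\dot K_t)$ as a sum of matrix monomials of the schematic types $\dot K\dot K^\dagger K$, $K\dot K^\dagger\dot K$, and $\dot K K^\dagger\dot K$, each weighted by a factor coming from the inverse metric $g^{-1}$.

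The step I expect to be the main obstacle is handling this inverse metric: unlike the Euclidean restriction, the canonical metric is governed by $\Gamma$, so $g^{-1}$ is nontrivial. Here I would exploit that $K_tK_t^\dagger$ is the orthogonal projector onto the column space of the isometry $K_t$, which yields the clean inverse $\Gamma^{-1}=\mathds{1}+K_tK_t^\dagger$. The key claim to establish is then that the additional $K_tK_t^\dagger$-contributions generated by $g^{-1}$ all lie in the normal space $N_{K_t}\St(D,d)$, so that they are annihilated once $P_{T(K_t)}$ is applied and only the tangential combination survives; I would verify this using $P_N(X)=K(K^\dagger X+X^\dagger K)/2$ together with $K_t^\dagger K_t=\mathds{1}$.

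Finally I would apply $P_{T(K_t)}$, use the tangency identity $K_t^\dagger\dot K_t=-\dot K_t^\dagger K_t$ obtained by differentiating $K_t^\dagger K_t=\mathds{1}$ to rewrite the surviving monomials, and collect them into $\dot K_t\dot K_t^\dagger K_t-K_t\dot K_t^\dagger\dot K_t-\dot K_t K_t^\dagger\dot K_t$, which is precisely \eqref{eq: final geodesic eq}. As consistency checks, specializing all matrices to real entries should reproduce the Edelman--Arias--Smith geodesic equation, and one can verify directly that the closed-form curve \eqref{eq:geodesic def} solves the resulting equation. An alternative route that avoids $g^{-1}$ altogether is variational: take the first variation of the energy $\int\Re\Tr(\dot K_t^\dagger\Gamma\dot K_t)\,\rmd t$ over tangent variations vanishing at the endpoints, integrate by parts, and read off the Euler--Lagrange equation as the tangential projection of the same bracketed expression.
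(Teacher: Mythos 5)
Your proposal is correct and takes essentially the same route as the paper's proof: computing the Christoffel symbols of the canonical metric via Wirtinger calculus (treating $K$ and $K^*$ as independent), exploiting $\Gamma^{-1} = \mathds{1} + K_tK_t^{\dagger}$, and substituting into the general geodesic equation $P_{T(K_t)}\bigl(\ddot K_t + C_{K_t}(\dot K_t,\dot K_t)\bigr)=0$. Your explicit key claim checks out and is in fact a clean justification of a step the paper handles only implicitly through trace manipulations: the extra contribution from $\Gamma^{-1}$ is $K_tK_t^{\dagger}\bigl(\dot K_t\dot K_t^{\dagger}K_t - K_t\dot K_t^{\dagger}\dot K_t - \dot K_tK_t^{\dagger}\dot K_t\bigr) = K_t\bigl(-2(K_t^{\dagger}\dot K_t)^2 - \dot K_t^{\dagger}\dot K_t\bigr)$, which is of the form $K_tS$ with $S$ Hermitian (using that $K_t^{\dagger}\dot K_t$ is skew-Hermitian by the tangency identity), hence lies in the normal space and is annihilated by $P_{T(K_t)}$, leaving exactly the stated equation.
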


\begin{proof}
By noting that $\Gamma^{-1} = \mathds{1}+K_tK_t^{\dagger}$ we can determine the function $g^{-1}(\Delta_1,\Delta_2)$ via the condition $g(g^{-1}(\Delta_1,\cdot \,),\Delta_2) = \Tr \left[\Delta_1^{\dagger} \Delta_2 + \Delta_2 \Delta_1^{\dagger}\right]$, meaning the inverse $g^{-1}$ would recover the standard symmetric inner product on $T_K \St(D,d)$. 
One can quickly verify that $g^{-1}(\Delta_1,\cdot \,) = \Gamma^{-1} \Delta_1$ satisfies this condition. 

We determine the derivatives of $g$ needed for the Christoffel symbol by explicitly writing out $g$ as 
\begin{align}
g(\Delta_1,\Delta_2) = \Tr \left[\Delta_1^{\dagger}\left(\mathds{1} - \frac{1}{2} K_tK_t^{\dagger}\right)\Delta_2
+ \Delta_2^{\dagger}\left(\mathds{1} - \frac{1}{2} K_tK_t^{\dagger}\right)\Delta_1\right] \, ,
\end{align}
from where we can find the derivatives by $K$ and $K^*$ as 
\begin{align}
\frac{\partial g_{ij}}{\partial K_{tl}} = \frac{\partial g}{\partial K_{tl}}(E_i,E_j) &= - \frac{1}{2} \Tr \left[E_i^{\dagger}E_l K^{\dagger}E_j
+ E_j^{\dagger}E_l K^{\dagger}E_i\right] \, , \\
\frac{\partial g}{\partial K_{tl}^*}(E_i,E_j) &= \left(\frac{\partial g}{\partial K_{tl}}(E_i,E_j)\right)^* \\
&= - \frac{1}{2} \Tr \left[E_i^{\dagger}K E_l^{\dagger}E_j
+ E_j^{\dagger}K E_l^{\dagger}E_i\right] \, .
\end{align}
With these derivatives we can calculate 
\begin{align}
C^k_{K_t}(\dot{K}_t,\dot{K}_t) &= \left(C_{K_t}^k\right)_{ij} (\dot{K}_t)_i (\dot{K}_t)_j \\
&= \frac{1}{2}g^{-1}_{kl}\left(\frac{\partial g_{lj}}{\partial K_{ti}} + \frac{\partial g_{li}}{\partial K_{tj}} - \frac{\partial g_{ij}}{\partial K_{tl}} + \mathrm{c.c.}\right) \dot{K}_{ti}\dot{K}_{tj} \\
&= \frac{1}{2}\left(\frac{\partial g}{\partial K_{ti}}\left(g^{-1}(E_k, \cdot \,),\dot{K}_t\right)\dot{K}_{ti}
+ \frac{\partial g}{\partial K_{tj}}\left(g^{-1}(E_k, \cdot \,),\dot{K}_t\right)\dot{K}_{tj}
- \frac{\partial g}{\partial K_{tl}}\left(\dot{K}_t,\dot{K}_t\right)(g^{-1}(E_k, \cdot \,))_l + \mathrm{c.c.}\right) \\
&= - \frac{1}{2} \Re \left\{\Tr\left[ 2E_k^{\dagger}\Gamma^{-1}\dot{K}_t K_t^{\dagger} \dot{K}_t + 2\dot{K}_t^{\dagger}\dot{K}_t K_t^{\dagger} \Gamma^{-1}E_k  - 2\dot{K}_t^{\dagger}\Gamma^{-1}E_kK_t^{\dagger}\dot{K}_t \right] \right\} \\
&= - \Re \left\{\Tr\left[ \dot{K}_t K_t \dot{K}^{\dagger}_t \Gamma^{-1}E_k + \dot{K}_t^{\dagger}\dot{K}_t K_t^{\dagger} \Gamma^{-1}E_k  - K_t^{\dagger}\dot{K}_t\dot{K}^{\dagger}\Gamma^{-1}E_k \right] \right\} \\
&= \left\langle \left(K_t^{\dagger}\dot{K}_t \dot{K}_t^{\dagger} - \dot{K}_t^{\dagger}\dot{K}_tK_t^{\dagger} - \dot{K}_t^{\dagger}K_t\dot{K}_t^{\dagger} \right)^{\dagger}, E_k\right\rangle \\
&= \left\langle \dot{K}_t \dot{K}_t^{\dagger}K_t - K_t \dot{K}_t^{\dagger}\dot{K}_t - \dot{K}_t K_t^{\dagger} \dot{K}_t,E_k\right\rangle \, ,
\end{align}
where we have used that $(\Gamma^{-1})^{\dagger} = \Gamma^{-1}$ and $\Re \Tr [X] = \Re \Tr [X^{\dagger}]$. We now first write out the geodesic equation \eqref{eq: geodesic eq} on the ambient space,
\begin{align}
\langle\ddot{K}_t,E_k\rangle  + C^k_{K_t}(\dot{K}_t,\dot{K}_t)
= \langle\ddot{K}_t,E_k\rangle + \left\langle \dot{K}_t \dot{K}_t^{\dagger}K_t - K_t \dot{K}_t^{\dagger}\dot{K}_t - \dot{K}_t K_t^{\dagger} \dot{K}_t,E_k\right\rangle = 0 \quad \forall E_k \, , \\
\end{align}
which is equivalent to 
\begin{align} \label{eq:geodesic eq without proj}
\ddot{K}_t + \dot{K}_t \dot{K}_t^{\dagger}K_t - K_t \dot{K}_t^{\dagger}\dot{K}_t - \dot{K}_t K_t^{\dagger} \dot{K}_t = 0  \, .
\end{align}
\end{proof}

To arrive at the geodesic equation \eqref{eq:geodesic def}, it remains to project the above equation onto the tangent space. 
Indeed, with the explicit form of the geodesic equation from Lemma~\ref{lem:GeodesicEq} we can show that the immediate generalization from the geodesic in the real case \cite{edelman1998geometry} gives a valid geodesic for the complex case. 

\begin{lem} \label{lem: geodesic lemma}
  The curve given by 
  \begin{equation} \label{eq: geodesic lemma}
  K_t =
  \begin{pmatrix} K & Q\end{pmatrix} \exp \left( t \begin{pmatrix}A & -R^{\dagger} \\ R & 0\end{pmatrix}\right) \begin{pmatrix}\mathds{1} \\ 0\end{pmatrix}\,
  \end{equation}
  is a geodesic on $\St(D,d)$, determined through the initial conditions $K_{t=0} = K$ and $\dot{K}_{t=0} = \Delta$, with $Q,R$ given by the QR decomposition of $(\mathds{1} - KK^{\dagger})\Delta$ and $A = K^{\dagger}\Delta$.
\end{lem}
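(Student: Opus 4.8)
The plan is to verify, in turn, the three properties characterising the claimed curve~\eqref{eq: geodesic lemma}: the initial data $K_{t=0}=K$ and $\dot{K}_{t=0}=\Delta$, that it stays on $\St(D,d)$, and that it solves the geodesic equation of Lemma~\ref{lem:GeodesicEq}. Two structural observations drive everything. First, tangency $\Delta\in T_K\St(D,d)$ means $K^\dagger\Delta=-\Delta^\dagger K$, so $A=K^\dagger\Delta$ is skew-Hermitian and the exponent $M=\begin{pmatrix}A & -R^\dagger\\ R & 0\end{pmatrix}$ satisfies $M^\dagger=-M$; hence $\exp(tM)$ is unitary for every $t$. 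Second, since $Q,R$ arise from the QR decomposition of $(\mathds{1}-KK^\dagger)\Delta$, whose column space is orthogonal to that of $K$, we have $K^\dagger Q=0$ and $Q^\dagger Q=\mathds{1}$, so $Y=\begin{pmatrix}K & Q\end{pmatrix}$ is an isometry, $Y^\dagger Y=\mathds{1}_{2d}$.

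Writing $u_t=\exp(tM)\begin{pmatrix}\mathds{1}\\0\end{pmatrix}$, so that $K_t=Yu_t$, the boundary data follow at once: $u_0=\begin{pmatrix}\mathds{1}\\0\end{pmatrix}$ gives $K_0=K$, and $\dot{K}_t=YMu_t$ gives $\dot{K}_0=YM\begin{pmatrix}\mathds{1}\\0\end{pmatrix}=KA+QR=KK^\dagger\Delta+(\mathds{1}-KK^\dagger)\Delta=\Delta$. That the curve never leaves the manifold is immediate from unitarity of $\exp(tM)$ together with $Y^\dagger Y=\mathds{1}$: indeed $K_t^\dagger K_t=u_t^\dagger(Y^\dagger Y)u_t=u_t^\dagger u_t=\mathds{1}_d$.

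The substantive step is the geodesic equation, and here I would first reduce to the base point $t=0$ by exploiting the one-parameter-group structure. For fixed $t_0$ one has $K_{t_0+s}=\tilde{Y}\exp(sM)\begin{pmatrix}\mathds{1}\\0\end{pmatrix}$ with $\tilde{Y}=Y\exp(t_0M)=\begin{pmatrix}K_{t_0} & \tilde{Q}\end{pmatrix}$, again an isometry; moreover $K_{t_0}^\dagger\dot{K}_{t_0}=\begin{pmatrix}\mathds{1}\\0\end{pmatrix}^\dagger M\begin{pmatrix}\mathds{1}\\0\end{pmatrix}=A$ and $(\mathds{1}-K_{t_0}K_{t_0}^\dagger)\dot{K}_{t_0}=\tilde{Q}R$, so the \emph{same} $M$ is the canonical exponent belonging to the new base point $K_{t_0}$ and velocity $\dot{K}_{t_0}$. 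Thus the curve is form-invariant under time translation, and it suffices to verify the geodesic equation at $t=0$ for arbitrary admissible $(K,\Delta)$. There the verification is a finite block computation: with $u_0=\begin{pmatrix}\mathds{1}\\0\end{pmatrix}$, $Mu_0=\begin{pmatrix}A\\R\end{pmatrix}$, $M^2u_0=\begin{pmatrix}A^2-R^\dagger R\\ RA\end{pmatrix}$, and the relations $Y^\dagger Y=\mathds{1}$, $M^\dagger=-M$, each term of the geodesic expression collapses to $Y$ times an explicit polynomial in $A$ and $R$. The decisive cancellation to exhibit is that the off-diagonal piece $QRA$ produced by $\ddot{K}_0=YM^2u_0$ is annihilated by the first-order correction terms, so that the residual is proportional to $KA^2$ with $A^2$ Hermitian; this lies in the normal space $\{KH:H=H^\dagger\}$ and therefore has vanishing tangent projection.

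I expect the main obstacle to be the careful bookkeeping in this last step. One must track the non-Euclidean projector $P_T(X)=X-K(K^\dagger X+X^\dagger K)/2$ of the canonical metric and confirm that the residual genuinely lies in the normal space, not merely that $K^\dagger(\argdot)$ has Hermitian part. Equally, making the reduction to $t=0$ rigorous hinges on the form-invariance claim above — that the transported frame $\tilde{Y}$ regenerates the same $A$ and $R$ — which is precisely what lets one avoid the considerably messier route of substituting the general $u_t=\exp(tM)\begin{pmatrix}\mathds{1}\\0\end{pmatrix}$ and wrestling with the rank-$d$ projector $u_tu_t^\dagger$ at arbitrary $t$.
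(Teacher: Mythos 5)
Your architecture is genuinely different from the paper's, and partly more elegant: the paper verifies the geodesic equation at \emph{arbitrary} $t$ in one pass, using the series-structure observations $U_{10}=R\tilde X$ and $U_{11}=\mathds{1}+RX$ to prove $K_t^\dagger\ddot K_2=0$ and $\dot K_2^\dagger\dot K_2=R^\dagger R$, whereas your time-translation reduction (the re-based frame $\tilde Y=Y\e^{t_0 M}$ regenerates the same $A$ and $R$ at $K_{t_0}$) collapses all of that into a single computation at $t=0$. That reduction is correct as far as it goes, but note that it, your one-line manifold-membership argument, and your isometry claim $Y^\dagger Y=\mathds{1}_{2d}$ all lean on $K^\dagger Q=0$, which an arbitrary QR decomposition of $(\mathds{1}-KK^\dagger)\Delta$ does \emph{not} guarantee when this matrix is rank deficient; and for $D<2d$ (e.g.\ the unitary case $D=d$, where $(\mathds{1}-KK^\dagger)\Delta=0$) no $Q$ with $K^\dagger Q=0$ even exists. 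The paper is deliberately more careful here: its proof never invokes $K^\dagger Q=0$, only the always-valid identity $K^\dagger QR=K^\dagger(\mathds{1}-KK^\dagger)\Delta=0$, keeping the $K^\dagger Q$ cross terms alive until they meet an adjacent factor of $R$. You should either stipulate a choice of $Q$ with columns orthogonal to $\ran K$ (with a separate argument when $D<2d$) or redo the bookkeeping with the weaker identity.

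The genuine gap is in your final step: the ``decisive cancellation'' you predict is not what the $t=0$ computation gives for the equation you cite. Plugging $K_0=K$, $\dot K_0=KA+QR$, $\ddot K_0=K(A^2-R^\dagger R)+QRA$ into the expression of Lemma~\ref{lem:GeodesicEq} yields (with $K^\dagger Q=0$ and $A^\dagger=-A$)
\begin{equation*}
\ddot K_0+\dot K_0\dot K_0^\dagger K_0-K_0\dot K_0^\dagger\dot K_0-\dot K_0K_0^\dagger\dot K_0
=-2KR^\dagger R-QRA\,,
\end{equation*}
so the $QRA$ piece does \emph{not} cancel and the residual is not proportional to $KA^2$; by your own (correct) criterion that normality requires $\Pi X=0$ in addition to Hermiticity of $K^\dagger X$, the tangential projection of this residual is $-QRA\neq 0$ whenever $RA\neq 0$. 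Your predicted residual $KA^2$ is what one obtains from the Edelman--Arias--Smith form $\ddot K+\dot K\dot K^\dagger K+K\bigl[(K^\dagger\dot K)^2+\dot K^\dagger\Pi\dot K\bigr]$, not from Eq.~\eqref{eq: final geodesic eq} — so as written, your verification cannot be completed against the equation you reference. A consistent completion is to check the constrained Euler--Lagrange form $\Gamma\ddot K+\tfrac12\bigl(\dot K\dot K^\dagger K-\dot KK^\dagger\dot K-K\dot K^\dagger\dot K\bigr)=K\Lambda$ with $\Gamma=\mathds{1}-\tfrac12 KK^\dagger$, which the same data satisfy exactly with $\Lambda=-R^\dagger R$, the $QRA$ terms genuinely cancelling. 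Be aware that the paper's own proof does not resolve this either: it establishes only that $K_t^\dagger(\cdot)$ applied to the expression is Hermitian — precisely the necessary-but-insufficient condition you flagged as the main obstacle — and never computes the perpendicular component, which is the $-QRA$-type term above. Your instinct about where the danger lies was exactly right; your predicted outcome of the computation was not.
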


\begin{proof} 
We recall that the ambient space splits into the tangent space and its orthogonal complement, the normal space. Therefore, the condition that the projection of the left-hand side onto the tangent space in Eq.~\eqref{eq: final geodesic eq} vanishes is equivalent to demanding that it lies solely in the normal space. 
If it is in the normal space, $K_t^{\dagger}$ applied from the left will yield a Hermitian matrix. 
We will now show that this is indeed the case. 
For that we first need to determine the first and second derivatives of $K_t$:
\begin{align} \label{eq:geodesic derivative}
\dot{K}_t &=
\begin{pmatrix} K & Q\end{pmatrix} \exp \left( t \begin{pmatrix}A & -R^{\dagger} \\ R & 0\end{pmatrix}\right) \begin{pmatrix}A & -R^{\dagger} \\ R & 0\end{pmatrix} \begin{pmatrix}\mathds{1} \\ 0\end{pmatrix} \\
&= \underbrace{K_t A}_{\dot{K}_1} + \underbrace{\begin{pmatrix} K & Q\end{pmatrix} \exp \left( t \begin{pmatrix}A & -R^{\dagger} \\ R & 0\end{pmatrix}\right) \begin{pmatrix} 0 \\ \mathds{1}\end{pmatrix}R}_{\dot{K}_2} \, , \\
\ddot{K}_t &=
\begin{pmatrix} K & Q\end{pmatrix} \exp \left( t \begin{pmatrix}A & -R^{\dagger} \\ R & 0\end{pmatrix}\right) \begin{pmatrix}A & -R^{\dagger} \\ R & 0\end{pmatrix}^2 \begin{pmatrix}\mathds{1} \\ 0\end{pmatrix} \\
&= \underbrace{K(t)(A^2 - R^{\dagger}R)}_{\ddot{K}_1} + \underbrace{\begin{pmatrix} K & Q\end{pmatrix} \exp \left( t \begin{pmatrix}A & -R^{\dagger} \\ R & 0\end{pmatrix}\right) \begin{pmatrix} 0 \\ \mathds{1}\end{pmatrix}RA}_{\ddot{K}_2} \, .
\end{align}
We immediately see that $K^{\dagger}_t \ddot{K}_1 = A^2 - R^{\dagger}R$, which is Hermitian, as $A$ is skew Hermitian. 
We will now show that $K^{\dagger}_t \ddot{K}_2 = 0$ starting with
\begin{align}
K^{\dagger}_t \ddot{K}_2 &= \begin{pmatrix} \mathds{1} & 0\end{pmatrix} \exp \left( - t \begin{pmatrix}A & -R^{\dagger} \\ R & 0\end{pmatrix}\right) \begin{pmatrix}K^{\dagger} \\ Q^{\dagger}\end{pmatrix} \begin{pmatrix} K & Q \end{pmatrix} \exp \left( t \begin{pmatrix}A & -R^{\dagger} \\ R & 0\end{pmatrix}\right) \begin{pmatrix}0 \\ \mathds{1} \end{pmatrix}RA \\
&=
\begin{pmatrix} \mathds{1} & 0\end{pmatrix} \exp \left( - t \begin{pmatrix}A & -R^{\dagger} \\ R & 0\end{pmatrix}\right) \begin{pmatrix}\mathds{1} & K^{\dagger}Q \\ Q^{\dagger}K & \mathds{1}\end{pmatrix} \exp \left( t \begin{pmatrix}A & -R^{\dagger} \\ R & 0\end{pmatrix}\right) \begin{pmatrix}0 \\ \mathds{1} \end{pmatrix}RA \\
&=
\begin{pmatrix} \mathds{1} & 0\end{pmatrix} \exp \left( - t \begin{pmatrix}A & -R^{\dagger} \\ R & 0\end{pmatrix}\right) \left(\begin{pmatrix}\mathds{1} & 0 \\ 0 & \mathds{1}\end{pmatrix} + \begin{pmatrix}0 & K^{\dagger}Q \\ Q^{\dagger}K & 0\end{pmatrix} \right) \exp \left( t \begin{pmatrix}A & -R^{\dagger} \\ R & 0\end{pmatrix}\right) \begin{pmatrix}0 \\ \mathds{1} \end{pmatrix}RA \\
&=
\begin{pmatrix} \mathds{1} & 0\end{pmatrix} \exp \left( - t \begin{pmatrix}A & -R^{\dagger} \\ R & 0\end{pmatrix}\right) \begin{pmatrix}0 & K^{\dagger}Q \\ Q^{\dagger}K & 0\end{pmatrix} \exp \left( t \begin{pmatrix}A & -R^{\dagger} \\ R & 0\end{pmatrix}\right) \begin{pmatrix}0 \\ \mathds{1} \end{pmatrix}RA \, . \label{eq:geodesic last line}
\end{align}
To simplify the last expression, we set 
\begin{align}
\begin{pmatrix}U_{00} & U_{01} \\ U_{10} & U_{11}\end{pmatrix} = \exp \left( t \begin{pmatrix}A & -R^{\dagger} \\ R & 0\end{pmatrix}\right)
\end{align}
and obtain $K^{\dagger}_t \ddot{K}_2 = (U_{00}^{\dagger}K^{\dagger}QU_{11} + U_{10}^{\dagger}Q^{\dagger}KU_{01})RA$. 
From the series representation of the matrix exponential we gather that $U_{11} = \mathds{1} + R \cdot X$ for some matrix $X$. Moreover $U_{10} = R \tilde{X}$ and $U_{10}^{\dagger} = \tilde{X}^{\dagger}R^{\dagger}$ for some $\tilde X$, leading to 
\begin{align} \label{eq:vanishing of geodesic term}
K^{\dagger}_t \ddot{K}_2 &= (U_{00}^{\dagger}K^{\dagger}Q + U_{00}^{\dagger}K^{\dagger}QRX + \tilde{X}^{\dagger}R^{\dagger}Q^{\dagger}KU_{01})RA = 0 \, ,
\end{align}
since $K^{\dagger}QR = K^{\dagger}(\mathds{1}-KK^{\dagger})\Delta = 0$. 

This shows that the $\ddot{K}_t$ lies in the normal space, leaving us with the terms in the geodesic equation~\eqref{eq: final geodesic eq} that depend only on $\dot{K}_t$:
\begin{align*}
K_t^{\dagger} (\dot{K}_t \dot{K}_t^{\dagger}K_t - K_t \dot{K}_t^{\dagger}\dot{K}_t - \dot{K}_t K_t^{\dagger} \dot{K}_t) &= 
K_t^{\dagger}\dot{K}_t \dot{K}_t^{\dagger}K_t - \dot{K}_t^{\dagger}\dot{K}_t - (K_t^{\dagger}\dot{K}_t)^2 \\
&= (A+K_t^{\dagger}\dot{K}_2)(A+K_t^{\dagger}\dot{K}_2)^{\dagger} - (A^{\dagger}A + A^{\dagger}K_t^{\dagger}\dot{K}_2 + \dot{K}_2^{\dagger}K_t A + \dot{K}_2^{\dagger}\dot{K}_2) \\
&\quad - (A+K_t^{\dagger}\dot{K}_2)^2 \\
&= AA^{\dagger} - A^{\dagger}A - \dot{K}_2^{\dagger}\dot{K}_2- A^2 \, . 
\end{align*}
The last line follows from $\dot{K}_2A = \ddot{K}_2$ and our previous observation that $K_t^{\dagger}\ddot{K}_2 = 0$, which implies that $K_t^{\dagger}\dot{K}_2 = 0$ as well. The remaining term $\dot{K}_2^{\dagger}\dot{K}_2$ can be computed similarly to $K_t^{\dagger}\ddot{K}_2$ and we obtain

\begin{align}
\dot{K}^{\dagger}_2 \dot{K}_2 &= 
R^{\dagger}\begin{pmatrix} 0 & \mathds{1}\end{pmatrix} \exp \left( - t \begin{pmatrix}A & -R^{\dagger} \\ R & 0\end{pmatrix}\right) \begin{pmatrix}\mathds{1} & K^{\dagger}Q \\ Q^{\dagger}K & \mathds{1}\end{pmatrix} \exp \left( t \begin{pmatrix}A & -R^{\dagger} \\ R & 0\end{pmatrix}\right) \begin{pmatrix}0 \\ \mathds{1} \end{pmatrix}R \\
&=
R^{\dagger}\begin{pmatrix} 0 & \mathds{1}\end{pmatrix} \exp \left( - t \begin{pmatrix}A & -R^{\dagger} \\ R & 0\end{pmatrix}\right) \left(\begin{pmatrix}\mathds{1} & 0 \\ 0 & \mathds{1}\end{pmatrix} + \begin{pmatrix}0 & K^{\dagger}Q \\ Q^{\dagger}K & 0\end{pmatrix} \right) \exp \left( t \begin{pmatrix}A & -R^{\dagger} \\ R & 0\end{pmatrix}\right) \begin{pmatrix}0 \\ \mathds{1} \end{pmatrix}R \\
&= R^{\dagger}R + 
R^{\dagger} \begin{pmatrix} 0 & \mathds{1}\end{pmatrix} \exp \left( - t \begin{pmatrix}A & -R^{\dagger} \\ R & 0\end{pmatrix}\right) \begin{pmatrix}0 & K^{\dagger}Q \\ Q^{\dagger}K & 0\end{pmatrix} \exp \left( t \begin{pmatrix}A & -R^{\dagger} \\ R & 0\end{pmatrix}\right) \begin{pmatrix}0 \\ \mathds{1} \end{pmatrix}R \\
&= R^{\dagger}R + R^{\dagger}(U_{11}^{\dagger} Q^{\dagger}K U_{01} + U_{01}^{\dagger}K^{\dagger} Q U_{11})R \\
&= R^{\dagger}R + R^{\dagger}\left((\mathds{1}+X^{\dagger}R^{\dagger}) Q^{\dagger}K U_{01} + U_{01}^{\dagger}K^{\dagger} Q (\mathds{1}+RX)\right)R \\
& = R^{\dagger}R \, ,
\end{align}
where we used again that $K^{\dagger}QR = R^{\dagger}Q^{\dagger}K = 0$ in the last line. 

We can now put all the terms obtained by multiplying Eq.~\eqref{eq:geodesic eq without proj} with $K_t^{\dagger}$ from the left together and find 
\begin{align}
K_t^{\dagger}\left(\ddot{K}_t + \dot{K}_t \dot{K}_t^{\dagger}K_t - K_t \dot{K}_t^{\dagger}\dot{K}_t - \dot{K}_t K_t^{\dagger} \dot{K}_t\right) &= A^2 - R^{\dagger}R - AA^{\dagger} - A^{\dagger}A - R^{\dagger}R - A^2 \\
&= -2R^{\dagger}R - A^{\dagger}A - AA^{\dagger} \, .
\end{align} 
We see that these remaining terms are Hermitian and therefore the left-hand side of Eq.~\eqref{eq:geodesic eq without proj} is in the normal space. 
\end{proof}

\subsection{Complex Newton equation} \label{app: complex riem newton}

In this section, we derive the Riemannian Hessian operator and solve the Hessian equation to obtain an update direction on the tangent space, which we can follow along the geodesic defined in Eq.~\eqref{eq:geodesic def}.
This can be done for each gate individually, or simultaneously over all gates, in which case we operate on the Cartesian product $\St(D,d)^{\times \ngates}$ of single Stiefel manifolds. 
We consider the latter case, whereby we obtain the single Stiefel Newton equation (Eq.~\eqref{eq:single gate final Newton}) as a byproduct. The method is based on the real case \cite{edelman1998geometry}. 
See also \cite{manton2002optimization} for a recent treatment of second order optimization on the complex Stiefel manifold, where instead of following geodesics, each optimization step is done in Euclidean space followed by a projection onto the manifold.  

First let us make a general observation that will be useful at several points. 
\begin{lem}[\cite{manton2002optimization}, Theorem 14] \label{lem:gradient formula}
Let $f: T_K\St(D,d) \rightarrow \mathbb{C}$ be a $\mathbb{C}$-linear function and let $\langle \argdot, \argdot \rangle_K$ be the canonical metric on $\St(D,d)$ as defined in Eq.~\eqref{eq:canonical metric def}. Then the solution to 
\begin{align}\label{eq:gradient formula}
\Re\left\{ f(\Delta)\right\} = \langle X,\Delta\rangle_K \quad \forall \Delta \in T_K \St(D,d)
\end{align}
is given by $X = F^* - KF^TK \in T_K \St(D,d)$, where $F$ is chosen such that $f(\Delta) = \Tr(F^T \Delta)$.
\end{lem}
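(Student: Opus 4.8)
The plan is to recognize the stated identity as the Riesz representation of the real-linear functional $\Delta \mapsto \Re\{f(\Delta)\}$ on the tangent space $T_K\St(D,d)$, viewed as a real inner product space under the canonical metric \eqref{eq:canonical metric def}. Since that metric is a genuine (positive-definite, real) inner product on $T_K\St(D,d)$, such a representative $X$ exists and is unique, so it suffices to exhibit the candidate $X = F^* - KF^TK$, check that it lies in $T_K\St(D,d)$, and verify the defining identity. Throughout I would write $f(\Delta) = \Tr(F^T\Delta)$, the generic form of a $\mathbb{C}$-linear functional on the ambient space $\mathbb{C}^{D\times d}$, and freely use $K^\dagger K = \mathds{1}$ together with the tangent-space characterization $K^\dagger\Delta = -\Delta^\dagger K$.

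First I would verify membership $X \in T_K\St(D,d)$, i.e.\ that $K^\dagger X$ is skew-Hermitian. Using $(F^*)^\dagger = F^T$ and $K^\dagger K = \mathds{1}$, one computes $K^\dagger X = K^\dagger F^* - F^T K$ and $X^\dagger K = F^T K - K^\dagger F^*$, whose sum vanishes; this is a short, purely algebraic check.

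The substantive step is the verification of the identity itself. I would expand
\begin{align*}
\langle X,\Delta\rangle_K = \Re\{\Tr(X^\dagger \Gamma \Delta)\}, \qquad \Gamma = \mathds{1} - \tfrac12 KK^\dagger,
\end{align*}
and collect terms to obtain
\begin{align*}
\Tr(X^\dagger \Gamma \Delta) = \Tr(F^T\Delta) - \tfrac12\bigl[\Tr(K^\dagger F^* K^\dagger \Delta) + \Tr(F^T KK^\dagger \Delta)\bigr].
\end{align*}
The goal then reduces to showing that the real part of the bracketed correction vanishes on $T_K\St(D,d)$. Applying the tangent relation $K^\dagger\Delta = -\Delta^\dagger K$ to each correction term and using $\overline{\Tr(M)} = \Tr(M^\dagger)$ together with one cyclic permutation under the trace, one finds that the two terms $T_1,T_2$ satisfy $T_1 = -\overline{T_2}$, so their sum is purely imaginary and drops out under $\Re\{\cdot\}$. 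This yields $\Re\{\Tr(X^\dagger\Gamma\Delta)\} = \Re\{\Tr(F^T\Delta)\} = \Re\{f(\Delta)\}$, completing the argument.

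The main obstacle I anticipate is organizing this last cancellation cleanly: the two correction terms are not individually zero, and establishing $T_1 = -\overline{T_2}$ requires carefully tracking Hermitian conjugates and the single cyclic permutation under the trace. Everything else is routine linear algebra on the Stiefel manifold.
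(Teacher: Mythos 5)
Your proposal is correct and takes essentially the same route as the paper's proof: check tangency via $K^\dagger X + X^\dagger K = 0$, expand $\Tr(X^\dagger \Gamma \Delta)$ to isolate $\Tr(F^T\Delta)$, and kill the residual terms using the tangent-space relation $K^\dagger\Delta = -\Delta^\dagger K$. Your cancellation $T_1 = -\overline{T_2}$ is just a rephrasing of the paper's observation that $\Re\Tr[HS] = 0$ for the Hermitian matrix $H = \tfrac12\left(F^TK + K^\dagger F^*\right)$ and the skew-Hermitian matrix $S = K^\dagger\Delta$, and your Riesz-representation framing merely makes explicit the uniqueness that the paper leaves implicit.
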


\begin{proof}
It is straightforward to see that $X \in T_K \St(D,d)$ by applying the projector onto the tangent space:
$P_{T_K}(X) = F^* - KF^TK - \frac{1}{2}K(K^{\dagger}F^* + F^TK) + \frac{1}{2}K(F^TK + K^{\dagger}F^*) = \mathcal{F}^* - KF^TK\, .$ 

To show that $X$ solves Eq.~\eqref{lem:gradient formula}, we will use that $K^{\dagger}\Delta$ is skew Hermitian, as well as the fact that $\Re \Tr [HS] = 0$ for any skew Hermitian matrix $S$ and Hermitian matrix $H$.
Plugging $X = F^* - KF^TK $ into Eq.~\eqref{lem:gradient formula} we obtain
\begin{align*}
\langle X,\Delta\rangle_K &= \Re \Tr[X^{\dagger}(\mathds{1}-\frac{1}{2}KK^{\dagger})\Delta] \\
&= \Re \Tr[(F^T - K^{\dagger}F^*K^{\dagger})(\mathds{1}-\frac{1}{2}KK^{\dagger})\Delta] \\
&= \Re \Tr\left[\left(F^T - \frac{1}{2}(F^TK+K^{\dagger}F^*) K^{\dagger}\right)\Delta\right] \\
&= \Re \Tr [F^T \Delta] - \Re \Tr[\operatorname{herm}(F^TK)K^{\dagger}\Delta] \\
&= \Re \Tr [F^T \Delta] \, . 
\end{align*}
\end{proof}

Our goal is to simultaneously update all gates along the geodesic $\bigoplus_{i=1}^n \K_i(t) \in \St(D,d)^{\times \ngates}$, with the single Stiefel geodesics $\K_i(t)$ being given by Eq.~\eqref{eq:geodesic def}. 
We define the initial directions as $\Delta_i = \dot{\K}_i(0)$. The first step to identify the Riemannian gradient and Hessian is to compute the second order Taylor series expansion of $\mathcal{L}$ in $t$ at $t=0$. Using $(\K_i)_{lm} \equiv \K_{ilm}$ and Einstein notation we find
\begin{equation}
\begin{split}
\mathcal{L}(\K_1 &\oplus \dots \oplus \K_{\ngates};\K^*_1 \oplus \dots \oplus \K^*_n) = \mathcal{L}|_{t=0}
+ 2\left.\Re\left\{\frac{\partial \mathcal{L}}{\partial \K_{ilm}}\frac{\partial \K_{ilm}}{\partial t}\right\}\right|_{t=0}\cdot t 
\\&
+ 2\left.\Re\left\{\frac{\partial^2\Lo}{\partial \K_{jop} \partial \K_{ilm}}\frac{\partial \K_{jop}}{\partial t}\frac{\partial \K_{ilm}}{\partial t}
+ \frac{\partial^2\Lo}{\partial \K_{jop}^* \partial \K_{ilm}}\frac{\partial \K_{jop}^*}{\partial t}\frac{\partial \K_{ilm}}{\partial t}
+ \frac{\partial \Lo}{\partial \K_{ilm}}\frac{\partial^2 \K_{ilm}}{\partial t^2}
\right\}\right|_{t=0} \cdot t^2/2 \label{eq:taylor geodesic hess} 
+ \mathcal{O}(t^3) \, .
\end{split}
\end{equation}
We have $\left.\frac{\partial \K_{ilm}}{\partial t}\right|_{t=0} = (\Delta_i)_{lm}$ and define $(\mathcal{L}_{\K_i})_{lm} \coloneqq \frac{\partial \mathcal{L}}{\partial \K_{ilm}}$, so that we can write $\frac{\partial \mathcal{L}}{\partial \K_{ilm}}\frac{\partial \K_{ilm}}{\partial t} = \Tr(\mathcal{L}_{\K_i}^T\Delta_i)$ and $\frac{\partial \mathcal{L}}{\partial \K_{ilm}}\frac{\partial \K_{ilm}}{\partial t} \eqqcolon \mathcal{L}_{\K_i}[\Delta_i]$. In a similar fashion we define $\mathcal{L}_{\K_j\K_i}[\Delta_j,\Delta_i] \coloneqq \frac{\partial^2\Lo}{\partial \K_{jop} \partial \K_{ilm}}\frac{\partial \K_{jop}}{\partial t}\frac{\partial \K_{ilm}}{\partial t}$, where $\mathcal{L}_{\K_j\K_i}[\cdot,\cdot]$ is a bilinear function, which is symmetric per definition via the second derivative. For more details on how to compute these derivatives for the objective function used in the main text, see Appendix \ref{app:Euclidean gradient and Hessian}. \\

Before determining the relevant terms for the update on $\St(D,d)^{\times \ngates}$ we first consider the gradient and Hessian, as well as the Newton equation for a single variable $K \in \St(D,d)$, leaving all others constant. \\
The Riemannian gradient $G \in T_K \operatorname{St}(D,d)$ can be identified from the first order term in the Taylor expansion  via its definition \cite{manton2002optimization}
\begin{align} \label{eq:riem grad}
2*\Re\left\{\mathcal{L}_K\left[\Delta\right]\right\} = \langle G,\Delta \rangle_K \quad \forall \Delta \in T_K \St(D,d) \, .
\end{align}
The solution for $G$ in the canonical metric \eqref{eq:canonical metric def} is given by 
\begin{align}
G = 2\left(\Lo_K^* - K \Lo_K^T K\right) \, ,
\end{align}
as per Lemma \ref{lem:gradient formula}. 

\begin{lem}
  The Riemannian Hessian 
  $\operatorname{Hess}:T_K\St(D,d) \times T_K\St(D,d) \rightarrow \mathbb{R}$ of a function $\mathcal{L}: \St(D,d) \rightarrow \mathbb{R}$ with respect to the canonical metric on $\St(D,d)$ is given by
  \begin{equation} \label{eq:Hess lemma}
  \begin{aligned}
  \operatorname{Hess}(\Delta,\Omega) &= 2 \Re\left\{\Lo_{KK}[\Delta,\Omega] + \Lo_{K^*K}[\Delta^*,\Omega]\right\} \\
  &+ \Re\left\{\Tr\left[\Lo_K^T(\Delta K^{\dagger}\Omega + \Omega K^{\dagger}\Delta)\right]
  -\Tr\left[\Lo_K^T K(\Delta^{\dagger}\Pi\Omega +\Omega^{\dagger}\Pi\Delta) \right]\right\} \,.
  \end{aligned}
  \end{equation}
\end{lem}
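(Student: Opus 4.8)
The plan is to identify the Riemannian Hessian with the second derivative of $\mathcal{L}$ along the geodesic $K_t$ of Eq.~\eqref{eq:geodesic def} that starts at $K$ with velocity $\Delta$. Since a geodesic has vanishing covariant acceleration, the Levi-Civita connection gives $\operatorname{Hess}(\Delta,\Delta) = \frac{\rmd^2}{\rmd t^2}\mathcal{L}(K_t)\big|_{t=0}$, which is exactly twice the $t^2$-coefficient of the Taylor expansion~\eqref{eq:taylor geodesic hess}. Reading off that expansion at $t=0$ with $\dot K_t|_{t=0}=\Delta$ isolates three contributions: the two intrinsic second-derivative terms $2\Re\{\mathcal{L}_{KK}[\Delta,\Delta]+\mathcal{L}_{K^*K}[\Delta^*,\Delta]\}$, which already appear verbatim in the claimed formula~\eqref{eq:Hess lemma}, and the connection term $2\Re\{\mathcal{L}_K[\ddot K_t|_{t=0}]\}$ generated by the curving of the geodesic. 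The whole problem thus reduces to evaluating this last term.

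Next I would compute the acceleration $\ddot K_0\coloneqq\ddot K_t|_{t=0}$ directly from the closed form~\eqref{eq: geodesic lemma}. Differentiating the matrix exponential twice and setting $t=0$ yields $\ddot K_0 = K(A^2-R^\dagger R) + QRA$ with $A=K^\dagger\Delta$ and $Q,R$ from the QR-decomposition of $\Pi\Delta$, where $\Pi\coloneqq\mathds{1}-KK^\dagger$ is the orthogonal projector onto the complement of $\ran K$. Using $QR=\Pi\Delta$ and $R^\dagger R = \Delta^\dagger\Pi\Delta$ (because $\Pi$ is an idempotent Hermitian projector) together with $KA = KK^\dagger\Delta$, this simplifies, after the cancellation $K(K^\dagger\Delta)^2 + \Pi\Delta K^\dagger\Delta = \Delta K^\dagger\Delta$ of the two $KK^\dagger\Delta K^\dagger\Delta$ pieces, to the compact expression
\[
  \ddot K_0 = \Delta K^\dagger\Delta - K\Delta^\dagger\Pi\Delta .
\]
This is consistent with Lemma~\ref{lem:GeodesicEq}: its tangential part vanishes and only the normal contribution survives to couple to the gradient.

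Pairing with the Euclidean gradient, $\mathcal{L}_K[\ddot K_0] = \Tr(\mathcal{L}_K^T\ddot K_0)$, then gives $2\Re\{\Tr[\mathcal{L}_K^T\Delta K^\dagger\Delta] - \Tr[\mathcal{L}_K^T K\Delta^\dagger\Pi\Delta]\}$, which is precisely the bracketed part of~\eqref{eq:Hess lemma} evaluated on the diagonal $\Delta=\Omega$. Finally, because the Levi-Civita connection is torsion-free the Hessian bilinear form is symmetric, so I would recover the general off-diagonal formula by polarization, replacing each quadratic monomial in $\Delta$ by its symmetrization, e.g.\ $\Delta K^\dagger\Delta\mapsto\tfrac12(\Delta K^\dagger\Omega+\Omega K^\dagger\Delta)$ and $\Delta^\dagger\Pi\Delta\mapsto\tfrac12(\Delta^\dagger\Pi\Omega+\Omega^\dagger\Pi\Delta)$; the factors of $\tfrac12$ cancel the overall $2$ and produce Eq.~\eqref{eq:Hess lemma} as stated.

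The main obstacle I anticipate is the bookkeeping in the second differentiation of the block matrix exponential—tracking which cross terms between the $(K\ Q)$ frame and the skew-block generator survive at $t=0$—together with the Wirtinger-calculus care needed to produce the factor $2$ and the conjugate term $\mathcal{L}_{K^*K}$ correctly, in the same spirit as the derivation of the gradient~\eqref{eq:riem grad} via Lemma~\ref{lem:gradient formula}. Once $\ddot K_0$ is in hand the remaining algebra is routine; the conceptual content sits entirely in the identification of the second derivative along a geodesic with the Riemannian Hessian and in the cancellation that eliminates the $KK^\dagger$ terms.
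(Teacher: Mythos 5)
Your proposal is correct and takes essentially the same route as the paper's proof: both compute $\ddot K_0 = \Delta K^{\dagger}\Delta - K\Delta^{\dagger}\Pi\Delta$ from the closed-form geodesic (using $QR=\Pi\Delta$, $R^{\dagger}R=\Delta^{\dagger}\Pi\Delta$, and $KK^{\dagger}+\Pi=\mathds{1}$), insert it into the Wirtinger Taylor expansion of $\mathcal{L}$ along the geodesic, and recover the bilinear form by symmetrization, which the paper formalizes via the polarization identity of Absil et al.\ (Proposition~5.5.5) where you appeal to torsion-freeness of the Levi-Civita connection. Your bookkeeping of the factor $2$ on the connection term $2\Re\{\mathcal{L}_K[\ddot K_0]\}$ together with the compensating $\tfrac12$ in the symmetrization is in fact the internally consistent reading of the paper's passage from its diagonal second-derivative formula to the stated bilinear Hessian.
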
 

\begin{proof} 
According to \cite[Proposition~5.5.5]{Absil09}, we can compute $\mathrm{Hess}(\Delta,\Omega)$ via 
\begin{align} \label{eq:Hessian def}
\operatorname{Hess}(\Delta, \Omega) &= \frac{1}{2} \frac{\rmd^2}{\rmd t^2} \left.\left[ \Lo(K(t(\Delta + \Omega))) - \Lo(K(t\Delta)) - \Lo(K(t\Omega))\right]\right|_{t = 0} 
\end{align}
where $K(t \Delta)$ satisfies $\dot{K}(t\Delta)|_{t=0} = \Delta$ (see \cite{edelman1998geometry} for a discussion of the real case). The individual terms in Eq.\ \eqref{eq:Hessian def} can be determined from our general Taylor approximation in Eq.\ \eqref{eq:taylor geodesic hess}, i.e.\ with $i=j=1$, if we take $K = K_1$. The term $\left.\frac{\partial \Lo}{\partial K}\left[\frac{\partial^2 K(t\Delta)}{\partial t^2}\right]\right|_{t=0}$ contains second derivatives of the geodesic given in Lemma \ref{lem: geodesic lemma}, which we write out next. $\ddot{K}(t)$ is given by
\begin{align*}
\ddot{K}(t) &=
\begin{pmatrix} K & Q\end{pmatrix} \exp \left( t \begin{pmatrix}A & -R^{\dagger} \\ R & 0\end{pmatrix}\right) \begin{pmatrix}A & -R^{\dagger} \\ R & 0\end{pmatrix}^2 \begin{pmatrix}\mathds{1} \\ 0\end{pmatrix} \\
&= \begin{pmatrix} K & Q\end{pmatrix} \exp \left(t \begin{pmatrix}A & -R^{\dagger} \\ R & 0\end{pmatrix}\right)
\begin{pmatrix}A^2 - R^{\dagger}R \\ RA\end{pmatrix} \, .
\end{align*}
It follows using $QR = (\mathds{1} - KK^{\dagger}) \Delta \eqqcolon \Pi \Delta$ and $A = K^{\dagger}\Delta$ from the definition of the geodesic, that 
\begin{align*}
\ddot{K}(0) &= K(A^2 - R^{\dagger}R) + QRA \\
&= K(A^2 - R^{\dagger}Q^{\dagger}QR) + \Pi \Delta K^{\dagger} \Delta \\
&= K(K^{\dagger}\Delta K^{\dagger}\Delta - \Delta^{\dagger}\Pi^{\dagger}\Pi\Delta) + \Pi\Delta K^{\dagger}\Delta \\
&= K(K^{\dagger}\Delta K^{\dagger}\Delta - \Delta^{\dagger}\Pi\Delta) + \Delta K^{\dagger}\Delta - KK^{\dagger}\Delta K^{\dagger}\Delta \\
&= \Delta K^{\dagger}\Delta - K\Delta^{\dagger}\Pi \Delta \, .
\end{align*}{}
Putting the terms together, we arrive at
\begin{equation} \label{eq:second derivatives}
\frac{\rmd^2}{\rmd t^2} \left.\Lo(K(t\Delta))\right|_{t = 0}  = 2 \Re\left\{\Lo_{KK}[\Delta,\Delta] + \Lo_{K^*K}[\Delta^*,\Delta]\right\} + \Re\left\{\Tr\left[\Lo_K^T(\Delta K^{\dagger}\Delta - K \Delta^{\dagger} \Pi\Delta)\right]\right\} \, .
\end{equation}
The terms involving $\Lo_{KK}$ and $\Lo_{K^*K}$ satisfy $\Lo_{KK}[\Delta,\Omega] = \Lo_{KK}[\Omega,\Delta]$ and $\Lo_{K^*K}[\Delta^*,\Omega] = \Lo_{K^*K}[\Omega^*,\Delta]$, by the symmetry of second derivatives. 
Using this symmetry property we obtain the full Hessian \eqref{eq:Hessian def}, which turns out to be
\begin{equation} \label{eq:Hess bilinear}
\begin{aligned}
\operatorname{Hess}(\Delta,\Omega) &= 2 \Re\left\{\Lo_{KK}[\Delta,\Omega] + \Lo_{K^*K}[\Delta^*,\Omega]\right\} \\
&+ \Re\left\{\Tr\left[\Lo_K^T(\Delta K^{\dagger}\Omega + \Omega K^{\dagger}\Delta)\right]
-\Tr\left[\Lo_K^T K(\Delta^{\dagger}\Pi\Omega +\Omega^{\dagger}\Pi\Delta) \right]\right\} \, ,
\end{aligned} 
\end{equation} 
where it is helpful to note that Eq.~\eqref{eq:Hess bilinear} is related to Eq.~\eqref{eq:second derivatives} via a symmetrization of the $\mathcal{L}_K$ term. 
\end{proof}

\begin{thm} \label{thm: Newton step}
  Let $\vvec(\Delta)$ be the row major vectorization of $\Delta \in T_K \St(D,d)$. 
  Furthermore, let $T$ and $\tilde{\Lo}_{KK}$ be defined by $T \vvec(X) = \vvec(X^T)$ and $\Lo_{KK}(\Delta,\,\cdot \,\,) = \tilde{\Lo}_{KK}^T \vvec(\Delta)$.
  Then the solution $\Delta$ of the linear equation in $\vvec(\Delta)$ and $\vvec(\Delta^*)$ given by
  \begin{align}\label{eq:single gate final Newton}
  \left( \tilde{\Lo}_{K^*K}^{\dagger} - (K\otimes K^T)T\tilde{\Lo}_{KK}^T - \frac{1}{2}\mathds{1}\otimes (K^T\Lo_K) - \frac{1}{2}(K\Lo_K^T)\otimes \mathds{1} - \frac{1}{2}\Pi\otimes (\Lo_K^{\dagger}K^*) \right)\vvec(\Delta)\\
  + 
  \left( \tilde{\Lo}_{KK}^{\dagger} - (K\otimes K^T)T\tilde{\Lo}_{K^*K}^T + \frac{1}{2}(\Lo_K^* \otimes K^T)T + \frac{1}{2}(K\otimes\Lo_K^{\dagger})T \right)\vvec(\Delta^*) = -\frac{1}{2} \vvec(G) \, 
  \end{align}
  is the update direction along the geodesic given in Lemma \ref{lem: geodesic lemma} for the complex Newton method of a real function $\mathcal{L}$ at position $K \in \St(D,d)$.
\end{thm}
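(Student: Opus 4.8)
The plan is to realize \eqref{eq:single gate final Newton} as the vectorized form of the Riemannian Newton equation for the quadratic model of $\Lo$ along the geodesic of Lemma~\ref{lem: geodesic lemma}. First I would set up the Newton condition. Evaluating the geodesic Taylor expansion at $t=1$ gives the model $\Lo(K) + 2\Re\{\Lo_K[\Delta]\} + \tfrac{1}{2}\operatorname{Hess}(\Delta,\Delta)$, whose stationarity over the tangent space is characterized by the vanishing of its directional derivative in an arbitrary direction $\Omega \in T_K\St(D,d)$. Using the symmetry of the bilinear form $\operatorname{Hess}$ and the gradient identity \eqref{eq:riem grad}, this yields the Newton equation
\begin{equation*}
\operatorname{Hess}(\Delta,\Omega) = -\langle G,\Omega\rangle_K = -2\Re\{\Lo_K[\Omega]\} \qquad \forall\, \Omega \in T_K\St(D,d),
\end{equation*}
with $G$ the Riemannian gradient. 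The task is then to solve this identity for $\Delta$.

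Next I would insert the explicit Hessian \eqref{eq:Hess bilinear} and bring every term into the canonical shape $\Re\Tr[(\,\cdot\,)\,\Omega]$, so that $\Omega$ appears undifferentiated on the right. This uses only $\Re\Tr[X] = \Re\Tr[X^*]$, cyclicity of the trace, and the tangent-space property that $K^\dagger\Omega$ is skew-Hermitian. Concretely, the term $\Lo_{K^*K}[\Delta^*,\Omega]$ already pairs $\Omega$ linearly, while the pieces $\Omega K^\dagger\Delta$ and $\Omega^\dagger \Pi\Delta$ (with $\Pi = \mathds{1}-KK^\dagger$) must be rewritten, the latter via $\Re\Tr[M\Omega^\dagger] = \Re\Tr[M^\dagger\Omega]$. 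After collecting, the left-hand side acquires the form $\Re\Tr[\Phi(\Delta,\Delta^*)^\T\Omega]$ for a matrix $\Phi$ depending linearly and separately on $\Delta$ and $\Delta^*$.

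Because the identity is required only for $\Omega$ in the tangent space and not on the whole ambient space, I would then pass to the unique tangent-space representatives of the two $\Re$-linear functionals using Lemma~\ref{lem:gradient formula}: the functional $\Re\Tr[F^\T\Omega]$ is represented by $F^* - KF^\T K$. Applying this to both sides collapses the ``for all $\Omega$'' statement into a single matrix equation relating $\Phi$ to $G$. Finally I would vectorize with the row-major identity $\vvec(AXB) = (A\otimes B^\T)\vvec(X)$ together with the transpose permutation $T$ defined by $T\vvec(X)=\vvec(X^\T)$, and use the definitions $\Lo_{KK}(\Delta,\,\cdot\,) = \tilde{\Lo}_{KK}^\T\vvec(\Delta)$ and its analogue for $\tilde{\Lo}_{K^*K}$. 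Grouping the resulting terms into those multiplying $\vvec(\Delta)$ and those multiplying $\vvec(\Delta^*)$ then reproduces the two operator blocks and the right-hand side $-\tfrac12\vvec(G)$ of \eqref{eq:single gate final Newton}.

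The main obstacle I anticipate is the conjugation-and-transpose bookkeeping in the middle two steps: keeping track of which manipulation sends a contribution onto the $\vvec(\Delta)$ block versus the $\vvec(\Delta^*)$ block, and inserting the transpose operator $T$ and the Kronecker factors in exactly the right order. In particular, the cross terms $\Omega K^\dagger\Delta$ and $\Omega^\dagger\Pi\Delta$, once rewritten and projected, are what produce the summands $\mathds{1}\otimes(K^\T\Lo_K)$, $(K\Lo_K^\T)\otimes\mathds{1}$, $\Pi\otimes(\Lo_K^\dagger K^*)$ and the $T$-dressed gradient terms; verifying that the $\Lo_{KK}$ and $\Lo_{K^*K}$ contributions assemble into $\tilde{\Lo}_{KK}^\dagger$, $\tilde{\Lo}_{K^*K}^\dagger$ and their $(K\otimes K^\T)T$-conjugated partners is where the factors of $\tfrac12$ and the precise placement of $T$ must be handled with care. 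The simultaneous update over $\St(D,d)^{\times n}$ then follows by the same computation applied blockwise, the single-gate equation being read off as the diagonal case.
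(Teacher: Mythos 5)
Your proposal follows essentially the same route as the paper's proof: starting from the Newton identity $\operatorname{Hess}(\Delta,\Omega) = -\langle G,\Omega\rangle_K$ for all tangent $\Omega$, rewriting each Hessian contribution as a $\mathbb{C}$-linear functional of $\Omega$ (including the conversion $\Re\Tr[M\Omega^{\dagger}] = \Re\Tr[M^{\dagger}\Omega]$ for the $\Omega^{\dagger}\Pi\Delta$ piece), passing to tangent-space representatives via Lemma~\ref{lem:gradient formula}, and finally vectorizing row-major with the permutation $T$ to separate the $\vvec(\Delta)$ and $\vvec(\Delta^*)$ blocks. The only cosmetic deviation is that the paper applies Lemma~\ref{lem:gradient formula} directly to the $\Lo_{KK}$ and $\Lo_{K^*K}$ terms while obtaining the $\Lo_K$ contribution by solving against $\Gamma$ and then projecting with $P_T$; since the canonical metric is nondegenerate on $T_K\St(D,d)$, the representatives coincide, so your uniform use of the lemma is equivalent.
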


\begin{proof} 
The update direction $\Delta$ for the standard Newton method \cite{edelman1998geometry} is determined through the equation
\begin{align}
\operatorname{Hess}(\Delta,\Omega) = - \langle G,\Omega\rangle_K \quad \forall \Omega \in T_K \St(D,d) \, ,
\end{align}
which can be solved by rewriting the left-hand side as $\operatorname{Hess}(\Delta,\Omega) = \langle f(\Delta),\Omega\rangle_K$ (for some yet to be determined $f$) and setting $\Omega = P_T(X)$ with arbitrary matrix $X$. 
This leads us to 
\begin{align}
\langle f(\Delta),P_T(X)\rangle_K &= - \langle G,P_T(X)\rangle_K \\
\langle P_T(f(\Delta)),X\rangle_K &= - \langle G,X\rangle_K \quad \forall X \in \mathbb{C}^{D\times d} \, ; \label{eq: Newton2}
\end{align}
in the second line the scalar product is extended from the canonical scalar product initially defined on $T_K\St(D,d)$ to $\mathbb{C}^{D\times d}$ and we will use the same notation for both. 
The second line follows from the fact that any matrix $X$ can be decomposed as $X = P_T(X) + P_N(X)$ and from $\langle A,B\rangle_K = 0$ for $A \in T_K \St(D,d)$ and $B\in N_K \St(D,d)$. To determine $f(\Delta)$ we split it into three terms $f(\Delta) = f_{KK}(\Delta) + f_{K^*K}(\Delta) + f_{K}(\Delta)$, where $f_{KK}(\Delta), f_{K^*K}(\Delta)$ and $f_{K}(\Delta)$ depend only on $\mathcal{L}_{KK}, \mathcal{L}_{K^*K}$ and $\mathcal{L}_{K}$ respectively (compare Eq.~\eqref{eq:Hess bilinear}).\\ 
We first look at the term $2 \Re\left\{\Lo_{KK}[\Delta,\Omega]\right\} = 2 \Re\left\{\Tr\left(\Lo_{KK}[\Delta, \,\cdot\,\,]^T \Omega\right)\right\}$, where $\Lo_{KK}[\Delta,\,\cdot\,\,]$ is in $\mathbb{C}^{D \times d}$.

To solve $2 \Re\left\{\Tr\left(\Lo_{KK}[\Delta, \,\cdot\,\,]^T \Omega\right)\right\} = \langle f_{KK}(\Delta),\Omega\rangle_K$ for all $\Omega \in T_K \St(D,d)$ and to find $f_{KK}$ we use Lemma \ref{lem:gradient formula} and obtain
\begin{align}\label{eq:fkk}
f_{KK}(\Delta) = 2\left(\Lo_{KK}[\Delta,\,\cdot\,\,]^* - K\Lo_{KK}[\Delta,\,\cdot\,\,]^TK  \right)\, .
\end{align}
The same argument can be made for the $\Lo_{K^*K}$ term, leading to
\begin{align}\label{eq:fk*k}
f_{K^*K}(\Delta^*) = 2\left(\Lo_{K^*K}[\Delta^*,\,\cdot\,\,]^* - K\Lo_{K^*K}[\Delta^*,\,\cdot\,\,]^TK  \right)\, .
\end{align}
To identify $f_K(\Delta)$ we rewrite the second line in \eqref{eq:Hess bilinear} as follows:
\begin{align*}
\Re&\left\{\Tr\left[\Lo_K^T(\Delta K^{\dagger}\Omega +\Omega K^{\dagger}\Delta)\right]
-\Tr\left[\Lo_K^T K(\Delta^{\dagger}\Pi\Omega + \Omega^{\dagger}\Pi\Delta) \right]\right\} \\
&= \Re \left\{\Tr\left[\left(\Lo_K^T\Delta K^{\dagger} + K^{\dagger} \Delta\Lo_K^T -\Lo_K^T K \Delta^{\dagger}\Pi -  (\Pi\Delta\Lo_K^T K)^{\dagger}\right)\Omega\right]\right\} \\
&\overset{!}{=} \Re\left\{\Tr \left[f_K(\Delta)^{\dagger} \Gamma \Omega \right]\right\} \, ,
\end{align*}
where we used $\Re\left\{\Tr\left[AB^{\dagger}\right]\right\} = \Re\left\{\Tr\left[A^{\dagger}B\right]\right\}$.
Thus we find 
\begin{align*}
f_K(\Delta) &= \left[\left(\Lo_K^T\Delta K^{\dagger} + K^{\dagger} \Delta\Lo_K^T -\Lo_K^T K \Delta^{\dagger}\Pi -  (\Pi\Delta\Lo_K^T K)^{\dagger}\right)\Gamma^{-1}\right]^{\dagger} \\
&= 2K\Delta^{\dagger}\Lo_K^* + \Gamma^{-1}\Lo_K^*\Delta^{\dagger}K - \Pi\Delta K^{\dagger}\Lo_K^* - \Pi \Delta\Lo_K^T K \, 
\end{align*}
by using $\Pi = \Pi^{\dagger}, \,\Pi \Gamma^{-1} = \Gamma^{-1}\Pi = \Pi, \, K^{\dagger}\Gamma^{-1} = 2K^{\dagger}$. \\
Eq.~\eqref{eq: Newton2} implies $P_T\left(f(\Delta\right)) = - G$, and it remains to compute $P_T(f_K(\Delta))$, as $P_T(f_{KK}(\Delta)) = f_{KK}(\Delta)$ and $P_T(f_{K^*K}(\Delta^*)) = f_{K^*K}(\Delta^*)$ (see Lemma \ref{lem:gradient formula}). After a straightforward computation using $\Gamma^{-1} = \mathds{1} + KK^{\dagger}, \Pi \Gamma^{-1} = \Pi, P_{T}(\Pi Z) = P_{T}(Z)$, as well as $K^{\dagger}\Pi = \Pi K = 0$, we get
\begin{align}\label{eq:fk}
P_T(f_K(\Delta)) = - \Pi \Delta K^{\dagger} \Lo_K^* - 2 \operatorname{skew}(\Delta \Lo_K^T)K - 2K\operatorname{skew}(\Lo_K^T\Delta) \, ,
\end{align} 
with $\operatorname{skew}(A) = (A - A^{\dagger})/2$. \\
Finally, by plugging Eqs.~\eqref{eq:fkk}, \eqref{eq:fk*k} and \eqref{eq:fk} into Eq.~\eqref{eq: Newton2}, we obtain the Newton equation
\begin{align}
\Lo_{KK}[\Delta,\,\cdot\,\,]^* - K\Lo_{KK}[\Delta,\,\cdot\,\,]^TK + \Lo_{K^*K}[\Delta^*,\,\cdot\,\,]^* - K\Lo_{K^*K}[\Delta^*,\,\cdot\,\,]^TK  \\ - \frac{1}{2} \Pi \Delta K^{\dagger} \Lo_K^* - \operatorname{skew}(\Delta \Lo_K^T)K - K\operatorname{skew}(\Lo_K^T\Delta) = -G/2 \, .
\end{align}
This is a linear equation in $\Delta$ and $\Delta^*$ that can be solved via rewriting it as an equation in $\operatorname{vec}(\Delta)$. Using row-major vectorization with $\vvec(AXB) = (A\otimes B^T) \vvec(X)$, the matrices $T$ and $\tilde{\Lo}_{KK}$ defined by $T \vvec(X) = \vvec(X^T)$ and $\Lo_{KK}(\Delta,\,\cdot \,\,) = \tilde{\Lo}_{KK}^T \vvec(\Delta)$, we arrive at the final equation for the single gate case

\begin{align*}
\left( \tilde{\Lo}_{K^*K}^{\dagger} - (K\otimes K^T)T\tilde{\Lo}_{KK}^T - \frac{1}{2}\mathds{1}\otimes (K^T\Lo_K) - \frac{1}{2}(K\Lo_K^T)\otimes \mathds{1} - \frac{1}{2}\Pi\otimes (\Lo_K^{\dagger}K^*) \right)\vvec(\Delta)\\
+ 
\left( \tilde{\Lo}_{KK}^{\dagger} - (K\otimes K^T)T\tilde{\Lo}_{K^*K}^T + \frac{1}{2}(\Lo_K^* \otimes K^T)T + \frac{1}{2}(K\otimes\Lo_K^{\dagger})T \right)\vvec(\Delta^*) = -\frac{1}{2} \vvec(G) \, .  
\end{align*}  
\end{proof}

Now for the simultaneous optimization over all gates on $\operatorname{St}(D,d)^{\times \ngates}$, the Hessian as defined in Eq.\ \eqref{eq:Hessian def} is determined by including all terms in Eq.\ \eqref{eq:taylor geodesic hess}. 
The Newton equation reads
\begin{align}\label{eq:full Newton def} 
\operatorname{Hess}(\Delta_1 \oplus \dots \oplus \Delta_{\ngates};\Omega_1 \oplus \dots \oplus \Omega_n)
= -\sum_{i = 1}^{\ngates} \langle G_i,\Omega_i\rangle_{\K_i}
\end{align}
for all $\Omega_i \in T_{\K_i}\St(D,d)$. The terms in Eq.\ \eqref{eq:taylor geodesic hess} where $i=j$ are obtained from the single variable case. The mixed variable terms $f_{\K_i\K_j}(\Delta_i)$ and $f_{\K_i^*\K_j}(\Delta_i^*)$ still need to be determined.
Analogous to Eq.~\eqref{eq:fk*k} we need to solve 
\begin{align*} \label{eq:mixed terms}
2 \Re\left\{\Tr\left(\Lo_{\K_i^*\K_j}[\Delta_i^*, \,\cdot\,\,]^T \Omega_j\right)\right\} &= \langle f_{\K_i^*\K_j}(\Delta_i),\Omega_j\rangle_{\K_j} \quad \forall \Omega_j \in T_{\K_j}\St \quad \mathrm{and} \\
2 \Re\left\{\Tr\left(\Lo_{\K_i\K_j}[\Delta_i, \,\cdot\,\,]^T \Omega_j\right)\right\} &= \langle f_{\K_i\K_j}(\Delta_i),\Omega_j\rangle_{\K_j}  \quad \forall \Omega_j \in T_{\K_j}\St \, .
\end{align*}
We can use Lemma \ref{lem:gradient formula} again and obtain
\begin{align*}
 f_{\K_i^*\K_j}(\Delta_i) &= 2\left(\Lo_{\K_i^*\K_j}^*[\Delta_i,\,\cdot\,\,] - \K_j\Lo_{\K_i^*\K_j}[\Delta_i^*,\,\cdot\,\,]^T\K_j  \right)\, \quad \mathrm{and} \\
 f_{\K_i\K_j}(\Delta_i) &= 2\left(\Lo_{\K_i\K_j}^*[\Delta_i^*,\,\cdot\,\,] - \K_j\Lo_{\K_i\K_j}[\Delta_i,\,\cdot\,\,]^T\K_j  \right)\, ,
\end{align*}
which satisfy $f_{\K_i^*\K_j}(\Delta_i^*) \in T_{\K_j}\St(D,d)$ and $f_{\K_i\K_j}(\Delta_i) \in T_{\K_j}\St(D,d)$. 
The full Newton equation on $\operatorname{St}(D,d)^{\times \ngates}$ in vectorized form then reads
\begin{equation}\label{eq:final full Newton}
\begin{aligned}
\bigoplus_{i= 1}^{\ngates} \left( \tilde{\Lo}_{\K_i^*\K_i}^{\dagger} - (\K_i\otimes \K_i^T)T\tilde{\Lo}_{\K_i\K_i}^T - \frac{1}{2}\mathds{1}\otimes (\K_i^T\Lo_{\K_i}) - \frac{1}{2}(\K_i\Lo_{\K_i}^T)\otimes \mathds{1} - \frac{1}{2}\Pi\otimes (\Lo_{\K_i}^{\dagger}\K_i^*) \right)\vvec(\Delta_i)\\
+ 
\bigoplus_{i=1}^{\ngates} \sum_{j: j\neq i} \left( \tilde{\Lo}^{\dagger} _{\K_j^*\K_i} - (\K_i \otimes \K_i^T)T\tilde{\Lo}^T_{\K_j\K_i} \right) \vvec(\Delta_j) \\
+
\bigoplus_{i= 1}^{\ngates} \left( \tilde{\Lo}_{\K_i\K_i}^{\dagger} - (\K_i\otimes \K_i^T)T\tilde{\Lo}_{\K_i^*\K_i}^T + \frac{1}{2}(\Lo_{\K_i}^* \otimes \K_i^T)T + \frac{1}{2}(\K_i\otimes\Lo_{\K_i}^{\dagger})T \right)\vvec(\Delta_i^*) \\
+
\bigoplus_{i=1}^{\ngates} \sum_{j: j\neq i} \left( \tilde{\Lo}^{\dagger} _{\K_j\K_i} - (\K_i \otimes \K_i^T)T\tilde{\Lo}^T_{\K_j^*\K_i} \right) \vvec(\Delta_j^*) \\
= -\frac{1}{2} \bigoplus_{i = 1}^{\ngates} \vvec(G_i)\, .
\end{aligned}
\end{equation}
We are now faced with an equation of the type $A\vec x + B \vec{x^*} = \vec c$, a solution to which can be obtained by solving
\begin{align}
\begin{pmatrix} A &B \\ B^* & A^* \end{pmatrix} \begin{pmatrix} \vec x \\ \vec{x^*}\end{pmatrix} = \begin{pmatrix} \vec c \\ \vec{c^*}\end{pmatrix} \, .
\end{align}
Eq.~\eqref{eq:final full Newton} is an equation on the tangent space and can be solved by finding a basis therein. However, in order to avoid a basis change at every step, we choose to solve it on the ambient space by setting $\Delta_i = P_{T_i}(\Delta_i)$ and $\Delta_i^* = P_{T_i}^*(\Delta_i^*)$. The matrix equation for the update directions $\Delta_i$ is now given by
\begin{align} \label{eq:Newton matrix form}
\renewcommand{\arraystretch}{3}
\underbrace{
\begin{pmatrix}
H_{G \leftarrow \Delta} \bigoplus_i P_{T_i} & H_{G \leftarrow \Delta^*}\bigoplus_i P^*_{T_i} \\
H^*_{G \leftarrow \Delta^*}\bigoplus_i P^*_{T_i} & H^*_{G \leftarrow \Delta}\bigoplus_i P_{T_i} \\
\end{pmatrix}
}_{ \eqqcolon H}
\renewcommand{\arraystretch}{1}
\begin{pmatrix}
\vvec(\Delta_1) \\ \dots \\ \vvec(\Delta_{\ngates}) \\ \vvec(\Delta_1^*) \\ \dots \\ \vvec(\Delta_{\ngates}^*)
\end{pmatrix}
 = - \frac{1}{2}
 \begin{pmatrix}
\vvec(G_1) \\ \dots \\ \vvec(G_{\ngates}) \\ \vvec(G_1^*) \\ \dots \\ \vvec(G_{\ngates}^*)
\end{pmatrix} \, ,
\end{align}
where the sub matrices $H_{G \leftarrow \Delta}$ and $H_{G \leftarrow \Delta^*}$ can be identified from Eq.~\eqref{eq:final full Newton}. \\
The update directions for the saddle-free Newton method \ref{alg:SFN-update} are calculated by applying 
\begin{align}
\left(
\left|
H
\right|
+ \lambda \mathds{1}
\right)^{-1}
\end{align}
to the right-hand side of Eq.~\eqref{eq:Newton matrix form}. In practice, we use $\frac{1}{2}(H + H^{\dagger})$ as the Hessian, since there exist more efficient methods for diagonalizing a Hermitian matrix compared to an arbitrary matrix.

\subsection{Complex Euclidean gradient and Hessian} \label{app:Euclidean gradient and Hessian}

\begin{figure}[ht]
  \includegraphics{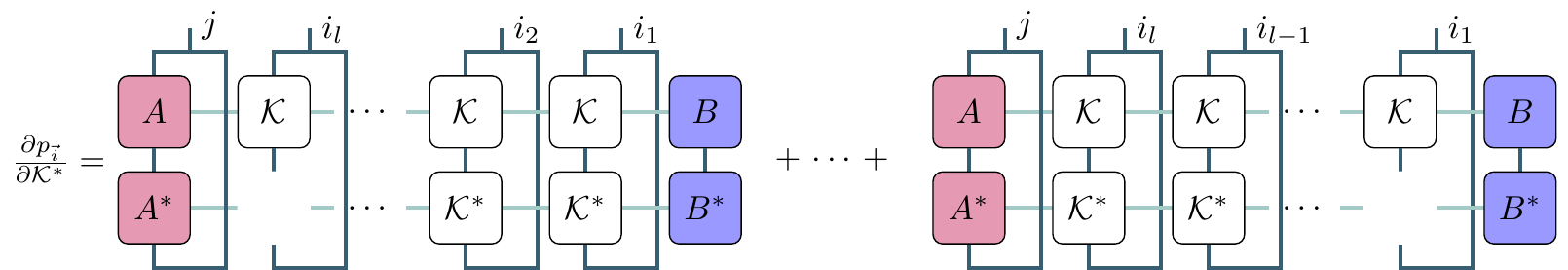}
  \caption{Tensor network representation of first derivative.}
  \label{fig:derivative}
\end{figure}

To compute the Riemannian Hessians for optimization on the Stiefel manifolds the complex Euclidean gradients and Euclidean Hessians are needed. We shall here go into more detail of their derivations for the least squares objective function.
This section also goes into detail as to how the terms $\mathcal{L}_{\K\K}$, $\mathcal{L}_{\K^*\K}$ and their conjugates from Appendix \ref{app: complex riem newton} are calculated. 

As the tensor network whose contraction yields $p_{j|\vec i}$ in the cost function~\eqref{eq:objective function} is parameterized in terms of matrix variables and their conjugates, we use Wirtinger calculus for the derivatives and treat the conjugate variables as independent. A method for finding all the relevant terms in the Hessian for a scalar function of complex matrix variables is outlined, e.g.\ in Ref.~\cite{hjorungnes2011complex}, and we will summarize it in the following. 
See also Ref.~\cite{van1994complex} for a short derivation of the complex derivative and Hessian in the context of optimization in complex Euclidean space. 

Our objective function is in general not analytic, as is the case with real valued functions of complex variables. This can be seen for the simplest case with one unitary gate $U$ and $\rho = E = \ketbra{0}{0}$, where $\mathcal{L} = \left|\sandwich{0}{U}{0}\right|^4 = |U_{00}|^4 = (U_{00}*U_{00}^*)^2$. 
However the derivatives w.r.t.\ the real and imaginary parts of the matrix variables exist and one can define formal derivatives for $f:\mathbb{C}^{M\times N}\times\mathbb{C}^{M\times N} \rightarrow \mathbb{R}$ via
\begin{align*}
\frac{\partial f(Z,Z^*)}{\partial Z} &\coloneqq \frac{\partial f(Z,Z^*)}{\partial \mathrm{Re}[Z]} - \i \, \frac{\partial f(Z,Z^*)}{\partial \mathrm{Im}[Z]} \, , \\
\frac{\partial f(Z,Z^*)}{\partial Z^*} &\coloneqq \frac{\partial f(Z,Z^*)}{\partial \mathrm{Re}[Z]} + \i \, \frac{\partial f(Z,Z^*)}{\partial \mathrm{Im}[Z]} \, , 
\end{align*}
where $\frac{\partial f(Z,Z^*)}{\partial Z} \in \mathbb{C}^{M\times N}$ with $\left(\frac{\partial f(Z,Z^*)}{\partial Z}\right)_{ij} = \frac{\partial f(Z,Z^*)}{\partial Z_{ij}}$. These formal derivatives have nice properties, for instance $\frac{\partial f(Z,Z^*)}{\partial Z^*}$ is the direction of maximum increase of $f$ and $\frac{\partial f(Z,Z^*)}{\partial Z^*} = 0$ identifies a stationary point of $f$, see e.g.\ Ref.~\cite[Theorems~3.2 and~3.4]{hjorungnes2011complex}. 
Furthermore, the product rule and the chain rule apply as they do for real valued matrix variables.

As laid out in Ref.~\cite[Lemma~5.2]{hjorungnes2011complex}, we can write the second order Taylor series of $f$ as
\begin{align} \label{eq: eucl. hess.}
f\left(Z+\rmd Z, Z^{*}+\rmd Z^{*}\right)&=f\left(Z, Z^{*}\right) 
+\left(\frac{\partial}{\partial \operatorname{vec}(Z)} f\left(Z, Z^{*}\right)\right) \rmd \operatorname{vec}(Z)+\left(\frac{\partial}{\partial \operatorname{vec}(Z^*)} f\left(Z, Z^{*}\right)\right) \rmd \operatorname{vec}\left(Z^{*}\right) \\
&\quad+\frac{1}{2}\left[\rmd \operatorname{vec}^{T}\left(Z^{*}\right) \rmd \operatorname{vec}^{T}(Z)\right]\left[\begin{array}{cc}
f_{ZZ^*}& f_{Z^*Z^*} \\
f_{ZZ}& f_{Z^*Z}
\end{array}\right]\left[\begin{array}{c}
\rmd \operatorname{vec}(Z) \\
\rmd \operatorname{vec}\left(Z^{*}\right)
\end{array}\right]+r\left(\rmd Z, \rmd Z^{*}\right) \, ,
\end{align}
where the higher order contribution $r\left(\rmd Z, \rmd Z^{*}\right)$ satisfies
\begin{align}
\lim _{\left(\rmd Z, \rmd Z^* \right) \rightarrow 0} \frac{r\left(\rmd Z, \rmd Z^*\right)}{\left\|\left(\rmd Z, \rmd Z^*\right)\right\|_{F}^{2}}=0 \, .
\end{align}

The second order derivatives are defined via
\begin{align*} 
f_{ZZ} = &= \frac{\partial}{\partial \, \mathrm{vec}(Z)^T} \frac{\partial}{\partial \, \mathrm{vec}(Z)} f(Z,Z^*,\dots; y) \, ,
\end{align*}
and similarly $f_{Z^*Z^*}$, $f_{Z^*Z}$ and $f_{ZZ^*}$. The vectorization is to be understood as joining together of indices in a fixed order. For instance $\operatorname{vec}:\mathbb{C}^{\ngates \times d^2 \times d \times d}\rightarrow \mathbb{C}^{\ngates d^4}$ vectorizes $\K$, where the individual $d$-dimensional legs are the matrix indices of the Kraus operators, and the $d^2$ index numbers the different Kraus operators. Note that $\operatorname{vec}\left(\frac{\partial}{\partial Z} f(Z,Z^*, y)\right) = \frac{\partial}{\partial \operatorname{vec}(Z)} f(Z,Z^*, y)$.

For the optimizations over 
$A$, $\K$, and $B$,
we need the first and second derivatives of $\mathcal{L}$ by the respective variables and their conjugates. 
Let 
$Z\in \{A,A^*,\K,\K^*,B, B^*\}$ and $Y \in \{Z,Z^*\}$. 
Then 
\begin{align*} 
\frac{\partial}{\partial Z} \mathcal{L}(Z,\dots; y) 
          &= \frac{2}{m} \sum_{\vec i} \left(p_{\vec i}(Z,\dots)-y_{\vec i}\right) \frac{\partial p_{\vec i}}{\partial Z} \, ,\\
\frac{\partial}{\partial Y} \frac{\partial}{\partial Z} \mathcal{L}(Z,Y,\dots; y)
          &= \frac{\partial}{\partial Y} \frac{2}{m} \sum_{\vec i} \left(p_{\vec i}(Z,Y,\dots)-y_{\vec i}\right) \frac{\partial p_{\vec i}}{\partial Z} \\
          &= \frac{2}{m} \sum_{\vec i} \frac{\partial p_{\vec i}(Z,Y,\dots)}{\partial Y} \frac{\partial p_{\vec i}(Z,Y,\dots)}{\partial Z} 
          + \frac{2}{m} \sum_{\vec i} \left(p_{\vec i}(Z,Y,\dots)-y_{\vec i}\right) \frac{\partial^2 p_{\vec i}(Z,Y,\dots)}{\partial Y \partial Z} \,  , 
\end{align*}
meaning that derivatives of the objective function reduce to the derivatives of the tensor $p$. 
Taking the derivative of a tensor network w.r.t.\ one of its constituent tensors can be easily done in the pictorial representation by removing the respective tensor. 
For instance, $\frac{\partial p_{\vec{i}}}{\partial \K^*}$ can be calculated as shown in Figure~\ref{fig:derivative}, using the product rule. Care has to be taken for the order of open indices when removing a tensor. In practice, we do not calculate the full tensor $\frac{\partial p}{\partial \K^*}$ of size $\ngates^{\seqlength}$ and only compute $\frac{\partial p_{\vec i}}{\partial \K^*}$ for $\vec{i} \in I$, since usually $|I| \ll n^\seqlength$. 

\subsection{Mean variation error dependence on the choice of objective function} \label{app:MLE}

\begin{figure}[htb]{}
    \includegraphics{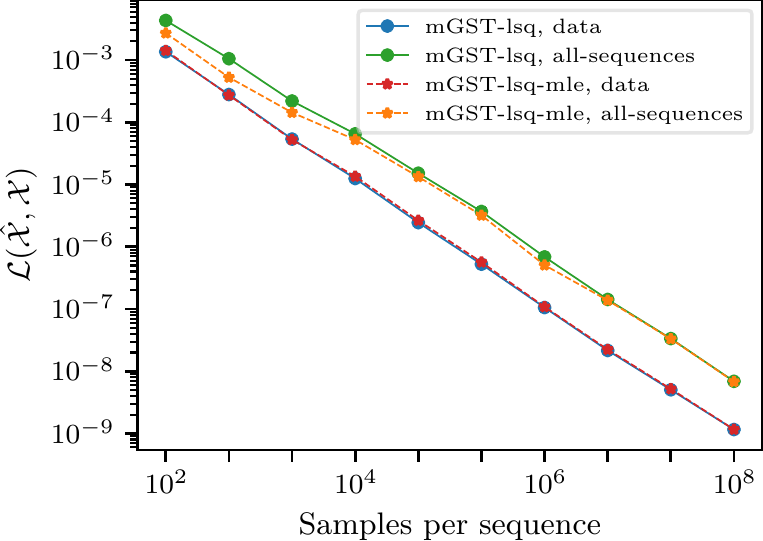}
    \hspace{40pt}
    \includegraphics{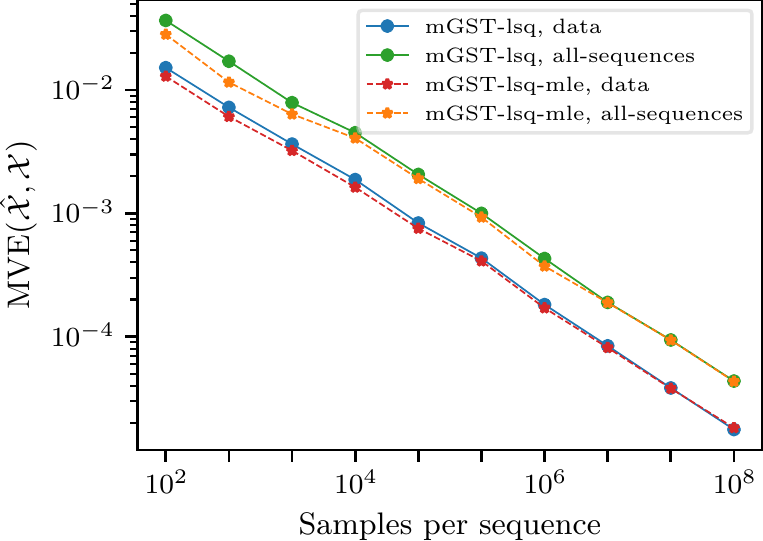}
  \caption{Effect of optimizing the log-likelihood function after the least squares objective function. The results of only the least squares optimization are denoted by \alg-lsq and those of additional log-likelihood optimization by \alg-lsq-mle. 
  The \textbf{left} plot shows the least squares objective function $\mc L(\hat{\mc X},\mc X)$ on the data sequences (used for estimation), as well as on all sequences of a given length $l$ (here $l = 7$). 
  The \textbf{right} plot uses the \ac{MVE}, again on data sequences or all sequences. \newline
  The underlying gate set is given by the XYI model with depolarizing noise of strength $p = 0.01$ on each gate and $p = 0.01$ on the initial state, as well as random unitary rotations $\e^{\i\gamma H}$ with $H \sim \mathrm{GUE}$ and $\gamma = 0.01$ on each gate. To ensure \alg-lsq is fully converged, we set the desired relative precision to $\epsilon = 10^{-5}$ in the convergence criterion, cp.\ Eq.~\eqref{eq: relative precision epsilon}.
  }
  \label{fig:MLE improvement}
\end{figure} 
A well motivated alternative to the least squares objective function defined in Eq.~\eqref{eq:objective function} is the likelihood function 
\begin{align}
L_I(A,\mathcal{K},B|y) \coloneqq \prod_{\vec i \in I} \prod_{j \in [n_E]} p_{j|\vec i}(A,\mathcal{K},B)^{k_{j|\vec i}} \, ,
\end{align}
where $y_{j| \vec i} = k_{j| \vec i}/m$ denotes again the relative number out of $m$ times outcome $j$ was measured for sequence $\vec i$, see also Eq.~\eqref{eq:y_and_k}. 
The likelihood function at a given model parametrization $(A,\mathcal{K},B)$ and for measurement results $y$ is precisely the probability of observing $y$, given the model probabilities $p(A,\mathcal{K},B)$. To simplify the optimization, often the logarithm of the likelihood function is chosen, since it shares the same maxima.
This log-likelihood function, which is also used in the final optimization procedure of \pyGSTi\ \cite{Nielsen2020GateSetTomography} is then given by
\begin{align} \label{eq:mle objective function}
\mathrm{log} L_{I}(A, \mathcal{K}, B| y)\coloneqq m \sum_{\vec i \in I} \sum_j y_{j|\vec i} \, \mathrm{log}\left[p_{j|\vec i}(A,\K,B)\right] \, .
\end{align}
In Figure~\ref{fig:MLE improvement} we show the effects of augmenting \alg\ (which is by default run on the least squares objective function for numerical reasons) with the log-likelihood function after a least squares estimate was found. 
To do this, we use Algorithm~\ref{Alg:main} with the negative log-likelihood function~\eqref{eq:mle objective function} as objective function. 

We observe that for the XYI-gate set, which we use to compare \alg\ and \pyGSTi\ in the main text, optimizing the log likelihood function decreases the mean variation error on the data sequences as well as on all sequences of the same length. The improvement is stronger for fewer samples and becomes negligible at around $10^6$ samples. 
Interestingly, log-likelihood optimization also improves the least squares error on all sequences, at the cost of slightly increasing it on the data sequences. 
This indicates that in the sample count range of $10^2 - 10^6$, optimizing the log-likelihood function leads to less overfitting.

\subsection{Noise-mitigation of shadow estimation with GST characterization} \label{app:Shadows}
In Section~\ref{sec: shadows} we numerically demonstrated how the results of a low-rank \ac{GST} experiment can be used to correct estimation protocols based on inverting an informationally complete POVM. Such protocols are recently referred to as shadow estimation \cite{Huang2020Predicting}. We now give a short mathematical description of the method and explain how low rank \ac{GST} estimates can be included in a scalable way. 

In the following, we use
bra-ket notation also for the space of linear operators $\mathrm{L}(\mathcal{H})$
and its dual space as defined by the canonical isomorphism induced by the Hilbert Schmidt inner product $\braketr{O}{\rho} = \Tr(O^{\dagger}\rho)$.
The quantum channel of a unitary $U$ is written by the corresponding calligraphic letter, e.g.\ $\mathcal{U} \ketr{\rho} \equiv \ketr{U\rho U^{\dagger}}$.

We consider the task of estimating the expectation value of multiple observables in an unknown quantum state that we can repeatedly prepare on a quantum device.
Being able to measure an informationally complete POVM $\{\Pi_x \}$ on the state, one can construct an estimator for an observable $O$.
\emph{Informationlly completeness} is equivalent to, in mathematical terms, the POVM constituting a \emph{frame} for $\mathrm L(\mathcal H)$, and the associated frame operator $\mathcal M = \sum_x \oketbra{\Pi_x}{\Pi_x}$ being invertible, see e.g.~Ref.~\cite{waldron2018introduction}.
We can calculate the canonical dual frame to the POVM 
as $|\tilde{\Pi}_x) = \mathcal M^{-1} \ketr{\Pi_x}$.
By construction we have the frame duality relation
\begin{equation}
    \sum_x \oketbra{\tilde \Pi_x}{\Pi_x} = \Id_{\mathrm L(\mathcal H)}.
\end{equation}
Thus, for any state $\rho$,
\begin{equation}\label{eq:insert_id}
    \obraket{O}{\rho} = \sum_x \obraket{O}{\tilde \Pi_x} \obraket{\Pi_x}{\rho}\,.
\end{equation}
By Born's rule, repeated measurements of the POVM yield i.i.d.\ samples $\Omega = (x_1, \ldots, x_m)$ from the distribution with density $p_\rho(x) = \obraket{\Pi_x}{\rho}$.
Given $\Omega$ we can calculate the empirical mean estimator 
\begin{equation}\label{eq:lin-estimator}
    \hat o = \frac1{
    |\Omega|} \sum_{x \in \Omega} \obraket{O}{\tilde \Pi_x}\,,
\end{equation}
and by \eqref{eq:insert_id}, $\EE[\hat o] = \obraket{O}{\rho}$.

The sequence of dual frame elements $(\tilde \Pi_{x_1}, \ldots, \tilde \Pi_{x_m})$ given by the measured samples $\Omega$ has been called the \emph{classical shadow} of $\rho$ in Ref.~\cite{Huang2020Predicting}.

A practical implementation of an informationally complete POVM on a digital quantum computer can be realized with measurements in randomly selected bases from a sufficiently large group.
To be explicit, we will consider the simplest and perhaps most
well-known example: the measurement in a randomly chosen multi-qubit Pauli-basis.
The POVM can be implemented by applying a random (different) local Clifford rotation on every qubit and measuring in the computational basis.
For informational-completeness it is sufficient to choose the rotations uniformly from the set $C = \{\Id, H, HS\}$, where $H$ is the Hadamard gate and $S$ the phase gate. 
In our notation, we consider POVM effects $\Pi_x = \Pi_{\mathbf g, \mathbf b} = \otimes_l \Pi_{g_l, b_l}$ 
indexed by $C^n \times \{0,1\}^n$ that are the tensor products of the local POVM effects $\Pi_{g_l, b_l} = \frac 13 g_l^\dagger \ketbra {b_l} {b_l} {g_l}$ with $g_l \in C$. 
Let $\{\hat\sigma_k \mid k \in \{0,1,2,3\}\}$ denote the Pauli matrices normalized in Frobenius norm.
The frame operator is given by
\begin{align}
  3^n \mathcal{M} &= \frac{1}{3^n}\left(\sum_{{U} \in \{\mathds{1}, H, HS\}}\mathcal{U}^{\dagger} (\ketr{\hat\sigma_0}\brar{\hat\sigma_0} + \ketr{\hat\sigma_3}\brar{\hat\sigma_3}) \mathcal{U}\right)^{\otimes n} \\
  &= \frac{1}{3^n}\left(3\ketr{\hat\sigma_0}\brar{\hat\sigma_0}  + \sum_i\ketr{\hat\sigma_i}\brar{\hat\sigma_i}\right)^{\otimes n} \\
  &= \begin{pmatrix} 1 & 0 & 0 & 0 \\ 0 & \frac{1}{3} & 0 & 0 \\ 0 & 0 & \frac{1}{3} & 0 \\ 0 & 0 & 0 & \frac{1}{3} \\\end{pmatrix}^{\otimes n},
\end{align}
where the matrix in the last line is represented in the Pauli-basis. 
Since $\mathcal M ^{-1}$ acts on qubit $l$ as $\mathcal M_l^{-1} (X) = 3X - \Tr(X) \mathds{1}$ for any $X$, we find $\tilde\Pi_{g_l, b_l} = \bigotimes_{i=1}^n \left(3 g_l^\dagger \ketbra {b_l} {b_l} {g_l} - \mathds{1}\right)$ \cite{Huang2020Predicting}.

Ref.~\cite{Huang2020Predicting} showed that when using random Pauli basis measurements, the variance of the mean estimator for estimating local observables does not scale with the system size.
Using a median-of-means estimator to boost the confidence, Ref.~\cite{Huang2020Predicting} further establishes that the expectation value of $M$ different $k$-local observables can be estimated
to $\epsilon$-additive precision from $\mathcal{O}(\log(M)4^k/\epsilon^2)$ state copies.

Experimental implementations of the POVM are prone to errors, effectively implementing a noisy POVM with effects $\Pi^\natural_x$.
In the envisioned implementation here, 
noise sources effect 
the implementation of the gates $C$ and
the noise induces a bias in the estimator for the observables.
However, if the noise is characterized to some extent we can correct the estimators for this bias.
To this end, let $\mathcal M^\natural$ be the (half-sided) noisy frame operator $\mathcal M^\natural = \sum_x \oketbra {\Pi_x} {\Pi^\natural_x}$.
If we know $\mathcal M^\natural$ in the classical post-processing, we can calculate a dual frame to $(\Pi^\natural_x|$ by $|\tilde\Pi^{\natural}_x) = \mathcal {M^\natural}^{-1} \ketr{\Pi_x}$.
Note that using the `half-sided noisy' frame operator instead of the frame operator of the noisy POVM yields an expression of a dual frame in terms of the ideal POVM and not the noisy POVM.
Using $\{\tilde\Pi^{\natural}_x\}$ instead of the ideal dual-frame in \eqref{eq:lin-estimator} yields unbiased estimators of observables even in the presence of noise, thus, effectively mitigating the noise.

This motivates our approach to noise mitigated shadow estimation. 
Having extracted a noise model via gate set tomography, we can numerically estimate $\mathcal M^\natural$ and, thus, construct (approximately) unbiased estimators. Our method is summarized in Protocol \ref{prot:Shadows} below. 

\begin{algorithm}
\SetAlgorithmName{Protocol}{List of protocols}
\TitleOfAlgo{}
\SetAlgoRefName{1}
\caption{GST-mitigated shadow estimation} \label{prot:Shadows} 
\SetKwInOut{Input}{input}{}{}
\Input{Target observable $O$, native local gate set $\mathcal{G}$ with $C \subseteq \mathcal{G}$}
Perform 2-qubit-\alg\ on implementation of $\mathcal{G}$ for qubit pairs $(1,2), \dots (n-1, n)$ \tikzmark{right} \tikzmark{top} 
\\ 
Gauge optimize \alg\ estimators to unitary target gates \tikzmark{bottom}\\
\For{$i \in [N]$\tikzmark{top2} }{
Select setting $\vec g \in C^{\otimes n}$ uniformly at random\\
Measure $\mathcal{U}_{\vec g} \ketr{\rho}$ in the standard basis\\
Save setting $\vec g$ and outcome $\vec{b}$ \tikzmark{bottom2}
}
Construct $\mathcal {M^\natural}$ from \alg\ gate estimates  \tikzmark{top3}\\
Compute single shot estimators $\left\{\hat{o}_i = \brar{O}\mathcal {M^\natural}^{-1} \ketr{\Pi_{\vec g, \vec b}}\right\}_{i = 1}^N$ \\
\Return{\tikzmark{bottom3} $\hat{O} = \operatorname{mean\ \textit{or}\ median-of-means}(\{o_i\})$} \tikzmark{bottom3}
\AddNote{top}{bottom}{right}{GST}
\AddNote{top2}{bottom2}{right}{Cl. Shadows data acquisition}
\AddNote{top3}{bottom3}{right}{Combined post processing}
\end{algorithm}

The results in Section \ref{sec: shadows} demonstrate our scheme numerically in simple but already practically relevant settings that we describe in the following.
When the multi-qubit unitaries and computational basis measurement implementing the POVM factorize into local tensor products, so does the ideal frame operator $\mathcal M$ and the dual frame (shadow).
But due to correlated gate-dependent noise, $\mathcal M^\natural$ might not exhibit this computationally tractable structure.
Furthermore, characterizing the implementation of exponentially many multi-qubit unitaries and the basis measurements without additionally assumption is infeasible.
In practice, however, noise-induced correlations and crosstalk might still predominantly affect a limited number of qubits simultaneously.
For example,
when noise predominantly affects neigboring qubits, we can use the implementation $\mathcal X$ 
of a gate set including $C\times C$ on neighboring qubits extracted via mGST to calculate $\mathcal M^\natural$.
To this end, let $\mathcal G^{(i,i+1)}_{g_1,g_2}$ denote the two-qubit process implementing the gate $g_1 \times g_2$ on qubit $i$ and $i+1$. For simplicity we ignore errors in the computational basis measurement.
We set $\Pi^{\natural, (i,i+1)}_{(g_1, g_2), (b_1, b_2)} = \frac19 (\mathcal G^{(i,i+1)}_{g_1, g_2})^\dagger \ketbra{b_1,b_2}{b_1,b_2} (\mathcal G^{(i,i+1)}_{g_1, g_2})^\dagger$ 
and numerically calculate
\begin{equation}
    \mathcal M^{\natural}_{i,i+1} = \sum_{g_1, g_2 \in C, b_1, b_2 \in \{0,1\}} \oket{\Pi_{g_1, b_1}} \oketbra{\Pi_{g_2, b_2}}{\Pi^{\natural, (i,i+1)}_{(g_1, g_2), (b_1, b_2)}}\,.
\end{equation}
This amounts to calculating a $16 \times 16$ matrix in the Pauli-basis that can be easily inverted.
The noise-mitigated single-shot estimators, thus, read
\begin{equation}
    \obraket{O}{\tilde \Pi^\natural_{\mathbf g, \mathbf b}} = \obra{O} {\bigotimes_{i=1}^{n/2}  (M^{\natural}_{2i,2i+1})^{-1}}\oket{\Pi_{g_{2i}, b_{2i}}}\oket{\Pi_{g_{2i+1}, b_{2i+1}}}\,.
\end{equation}
With this expression at hand, the rest of the protocol consists of computing the mean or median-of-means from a collection of single shot estimators, following the standard method of shadow estimation \cite{Huang2020Predicting}.

Ref.~\cite{2021PRXQ....2c0348C} proposes a complimentary approach for robust shadow-estimation, also inferring an approximation of the noisy frame operator from a separate calibration experiment.
Under the assumption of gate-independent noise, the authors derive a $2^n$ parameter expression for $\mathcal {M^\natural}$ as a Pauli-noise channel and devise (SPAM-robust) RB-style experiments to learn arbitrarily many of its parameters, where each parameter corresponds to an irreducible representation of the local Clifford group.
We find, in the gate-dependent noise model used here, that the frame operators significantly deviate from being a Pauli-noise channel. 
For this reason, this particular setting is more amenable to GST-mitigated shadows than to the protocol of Ref.~\cite{2021PRXQ....2c0348C}.
The plots on the left in Figure~\ref{fig:Shadows} show that already a typical $n=2$ frame operator with gate-dependent noise does not adhere to being diagonal in Pauli basis.

Ultimately, we envision that different robust and self-consistent noise and error characterization protocols, such as mGST for local coherent errors and RB for incoherent noise strength in different irreducible representations, can be combined to arrive at accurate and scalable estimates of the effective frame operator in the presence of noise.

\vfill 
\break 
\twocolumngrid

\let\oldaddcontentsline\addcontentsline 
\renewcommand{\addcontentsline}[3]{}

\section*{Acronyms} 
\hypertarget{Acronyms}{}
\bookmark[level=section,dest=Acronyms]{Acronyms}
\vspace{1pt}

\begin{acronym}[POVM]\itemsep.5\baselineskip
\acro{AGF}{average gate fidelity}

\acro{BOG}{binned outcome generation}

\acro{RB}{randomized benchmarking}

\acro{CP}{completely positive}
\acro{CPT}{completely positive and trace preserving}

\acro{DFE}{direct fidelity estimation} 

\acro{GST}{gate set tomography}

\acro{GUE}{Gaussian unitary ensemble}

\acro{HOG}{heavy outcome generation}

\acro{mGST}{manifold \acs{GST}}
\acro{MLE}{maximum likelyhood estimation}
\acro{MPS}{matrix product state}
\acro{MSVE}{mean squared variation error}
\acro{MUBs}{mutually unbiased bases} 
\acro{MVE}{mean variation error}
\acro{MSE}{mean squared error}
\acro{MW}{micro wave}

\acro{NISQ}{noisy and intermediate scale quantum}

\acro{POVM}{positive operator valued measure}
\acro{PVM}{projector-valued measure}

\acro{QAOA}{quantum approximate optimization algorithm}

\acro{SFE}{shadow fidelity estimation}
\acro{SFN}{saddle free Newton}
\acro{SIC}{symmetric, informationally complete}
\acro{SPAM}{state preparation and measurement}

\acro{QPT}{quantum process tomography}

\acro{rf}{radio frequency}

\acro{TT}{tensor train}
\acro{TV}{total variation}

\acro{VQE}{variational quantum eigensolver}

\acro{XEB}{cross-entropy benchmarking}


\end{acronym}


\begin{thebibliography}{104}%
\makeatletter
\providecommand \@ifxundefined [1]{%
 \@ifx{#1\undefined}
}%
\providecommand \@ifnum [1]{%
 \ifnum #1\expandafter \@firstoftwo
 \else \expandafter \@secondoftwo
 \fi
}%
\providecommand \@ifx [1]{%
 \ifx #1\expandafter \@firstoftwo
 \else \expandafter \@secondoftwo
 \fi
}%
\providecommand \natexlab [1]{#1}%
\providecommand \enquote  [1]{``#1''}%
\providecommand \bibnamefont  [1]{#1}%
\providecommand \bibfnamefont [1]{#1}%
\providecommand \citenamefont [1]{#1}%
\providecommand \href@noop [0]{\@secondoftwo}%
\providecommand \href [0]{\begingroup \@sanitize@url \@href}%
\providecommand \@href[1]{\@@startlink{#1}\@@href}%
\providecommand \@@href[1]{\endgroup#1\@@endlink}%
\providecommand \@sanitize@url [0]{\catcode `\\12\catcode `\$12\catcode
  `\&12\catcode `\#12\catcode `\^12\catcode `\_12\catcode `\%12\relax}%
\providecommand \@@startlink[1]{}%
\providecommand \@@endlink[0]{}%
\providecommand \url  [0]{\begingroup\@sanitize@url \@url }%
\providecommand \@url [1]{\endgroup\@href {#1}{\urlprefix }}%
\providecommand \urlprefix  [0]{URL }%
\providecommand \Eprint [0]{\href }%
\providecommand \doibase [0]{https://doi.org/}%
\providecommand \selectlanguage [0]{\@gobble}%
\providecommand \bibinfo  [0]{\@secondoftwo}%
\providecommand \bibfield  [0]{\@secondoftwo}%
\providecommand \translation [1]{[#1]}%
\providecommand \BibitemOpen [0]{}%
\providecommand \bibitemStop [0]{}%
\providecommand \bibitemNoStop [0]{.\EOS\space}%
\providecommand \EOS [0]{\spacefactor3000\relax}%
\providecommand \BibitemShut  [1]{\csname bibitem#1\endcsname}%
\let\auto@bib@innerbib\@empty
\bibitem [{\citenamefont {{Eisert}}\ \emph {et~al.}(2020)\citenamefont
  {{Eisert}}, \citenamefont {{Hangleiter}}, \citenamefont {{Walk}},
  \citenamefont {{Roth}}, \citenamefont {{Markham}}, \citenamefont {{Parekh}},
  \citenamefont {{Chabaud}},\ and\ \citenamefont
  {{Kashefi}}}]{Eisert2020QuantumCertificationAnd}%
  \BibitemOpen
  \bibfield  {author} {\bibinfo {author} {\bibfnamefont {J.}~\bibnamefont
  {{Eisert}}}, \bibinfo {author} {\bibfnamefont {D.}~\bibnamefont
  {{Hangleiter}}}, \bibinfo {author} {\bibfnamefont {N.}~\bibnamefont
  {{Walk}}}, \bibinfo {author} {\bibfnamefont {I.}~\bibnamefont {{Roth}}},
  \bibinfo {author} {\bibfnamefont {D.}~\bibnamefont {{Markham}}}, \bibinfo
  {author} {\bibfnamefont {R.}~\bibnamefont {{Parekh}}}, \bibinfo {author}
  {\bibfnamefont {U.}~\bibnamefont {{Chabaud}}},\ and\ \bibinfo {author}
  {\bibfnamefont {E.}~\bibnamefont {{Kashefi}}},\ }\bibinfo {title} {\emph
  {Quantum certification and benchmarking}},\ \href
  {https://doi.org/10.1038/s42254-020-0186-4} {\bibfield  {journal} {\bibinfo
  {journal} {Nature Reviews Physics}\ }\textbf {\bibinfo {volume} {2}},\
  \bibinfo {pages} {382} (\bibinfo {year} {2020})},\ \Eprint
  {https://arxiv.org/abs/1910.06343} {arXiv:1910.06343 [quant-ph]} \BibitemShut
  {NoStop}%
\bibitem [{\citenamefont {{Kliesch}}\ and\ \citenamefont
  {{Roth}}(2021)}]{Kliesch2020TheoryOfQuantum}%
  \BibitemOpen
  \bibfield  {author} {\bibinfo {author} {\bibfnamefont {M.}~\bibnamefont
  {{Kliesch}}}\ and\ \bibinfo {author} {\bibfnamefont {I.}~\bibnamefont
  {{Roth}}},\ }\bibinfo {title} {\emph {Theory of quantum system
  certification}},\ \href {https://doi.org/10.1103/PRXQuantum.2.010201}
  {\bibfield  {journal} {\bibinfo  {journal} {PRX Quantum}\ }\textbf {\bibinfo
  {volume} {2}},\ \bibinfo {pages} {010201} (\bibinfo {year} {2021})},\
  \bibinfo {note} {tutorial},\ \Eprint {https://arxiv.org/abs/2010.05925}
  {arXiv:2010.05925 [quant-ph]} \BibitemShut {NoStop}%
\bibitem [{\citenamefont {Emerson}\ \emph {et~al.}(2005)\citenamefont
  {Emerson}, \citenamefont {Alicki},\ and\ \citenamefont
  {\.{Z}yczkowski}}]{EmeAliZyc05}%
  \BibitemOpen
  \bibfield  {author} {\bibinfo {author} {\bibfnamefont {J.}~\bibnamefont
  {Emerson}}, \bibinfo {author} {\bibfnamefont {R.}~\bibnamefont {Alicki}},\
  and\ \bibinfo {author} {\bibfnamefont {K.}~\bibnamefont {\.{Z}yczkowski}},\
  }\bibinfo {title} {\emph {Scalable noise estimation with random unitary
  operators}},\ \href {https://doi.org/10.1088/1464-4266/7/10/021} {\bibfield
  {journal} {\bibinfo  {journal} {J. Opt. B}\ }\textbf {\bibinfo {volume}
  {7}},\ \bibinfo {pages} {S347} (\bibinfo {year} {2005})},\ \Eprint
  {https://arxiv.org/abs/quant-ph/0503243} {arXiv:quant-ph/0503243}
  \BibitemShut {NoStop}%
\bibitem [{\citenamefont {{Knill}}\ \emph {et~al.}(2008)\citenamefont
  {{Knill}}, \citenamefont {{Leibfried}}, \citenamefont {{Reichle}},
  \citenamefont {{Britton}}, \citenamefont {{Blakestad}}, \citenamefont
  {{Jost}}, \citenamefont {{Langer}}, \citenamefont {{Ozeri}}, \citenamefont
  {{Seidelin}},\ and\ \citenamefont {{Wineland}}}]{KniLeiRei08}%
  \BibitemOpen
  \bibfield  {author} {\bibinfo {author} {\bibfnamefont {E.}~\bibnamefont
  {{Knill}}}, \bibinfo {author} {\bibfnamefont {D.}~\bibnamefont
  {{Leibfried}}}, \bibinfo {author} {\bibfnamefont {R.}~\bibnamefont
  {{Reichle}}}, \bibinfo {author} {\bibfnamefont {J.}~\bibnamefont
  {{Britton}}}, \bibinfo {author} {\bibfnamefont {R.~B.}\ \bibnamefont
  {{Blakestad}}}, \bibinfo {author} {\bibfnamefont {J.~D.}\ \bibnamefont
  {{Jost}}}, \bibinfo {author} {\bibfnamefont {C.}~\bibnamefont {{Langer}}},
  \bibinfo {author} {\bibfnamefont {R.}~\bibnamefont {{Ozeri}}}, \bibinfo
  {author} {\bibfnamefont {S.}~\bibnamefont {{Seidelin}}},\ and\ \bibinfo
  {author} {\bibfnamefont {D.~J.}\ \bibnamefont {{Wineland}}},\ }\bibinfo
  {title} {\emph {Randomized benchmarking of quantum gates}},\ \href
  {https://doi.org/10.1103/PhysRevA.77.012307} {\bibfield  {journal} {\bibinfo
  {journal} {Phys. Rev. A}\ }\textbf {\bibinfo {volume} {77}},\ \bibinfo {eid}
  {012307} (\bibinfo {year} {2008})},\ \Eprint
  {https://arxiv.org/abs/0707.0963} {arXiv:0707.0963 [quant-ph]} \BibitemShut
  {NoStop}%
\bibitem [{\citenamefont {Magesan}\ \emph {et~al.}(2012)\citenamefont
  {Magesan}, \citenamefont {Gambetta},\ and\ \citenamefont
  {Emerson}}]{Magesan2012}%
  \BibitemOpen
  \bibfield  {author} {\bibinfo {author} {\bibfnamefont {E.}~\bibnamefont
  {Magesan}}, \bibinfo {author} {\bibfnamefont {J.~M.}\ \bibnamefont
  {Gambetta}},\ and\ \bibinfo {author} {\bibfnamefont {J.}~\bibnamefont
  {Emerson}},\ }\bibinfo {title} {\emph {Characterizing quantum gates via
  randomized benchmarking}},\ \href
  {https://doi.org/10.1103/PhysRevA.85.042311} {\bibfield  {journal} {\bibinfo
  {journal} {Phys. Rev. A}\ }\textbf {\bibinfo {volume} {85}},\ \bibinfo
  {pages} {042311} (\bibinfo {year} {2012})},\ \Eprint
  {https://arxiv.org/abs/1109.6887} {arXiv:1109.6887} \BibitemShut {NoStop}%
\bibitem [{\citenamefont {{Helsen}}\ \emph {et~al.}()\citenamefont {{Helsen}},
  \citenamefont {{Roth}}, \citenamefont {{Onorati}}, \citenamefont {{Werner}},\
  and\ \citenamefont {{Eisert}}}]{Helsen20AGeneralFramework}%
  \BibitemOpen
  \bibfield  {author} {\bibinfo {author} {\bibfnamefont {J.}~\bibnamefont
  {{Helsen}}}, \bibinfo {author} {\bibfnamefont {I.}~\bibnamefont {{Roth}}},
  \bibinfo {author} {\bibfnamefont {E.}~\bibnamefont {{Onorati}}}, \bibinfo
  {author} {\bibfnamefont {A.~H.}\ \bibnamefont {{Werner}}},\ and\ \bibinfo
  {author} {\bibfnamefont {J.}~\bibnamefont {{Eisert}}},\ }\href@noop {}
  {\bibinfo {title} {\emph {A general framework for randomized
  benchmarking}}},\ \Eprint {https://arxiv.org/abs/2010.07974}
  {arXiv:2010.07974 [quant-ph]} \BibitemShut {NoStop}%
\bibitem [{\citenamefont {Kimmel}\ \emph {et~al.}(2014)\citenamefont {Kimmel},
  \citenamefont {da~Silva}, \citenamefont {Ryan}, \citenamefont {Johnson},\
  and\ \citenamefont {Ohki}}]{KimSilRya14}%
  \BibitemOpen
  \bibfield  {author} {\bibinfo {author} {\bibfnamefont {S.}~\bibnamefont
  {Kimmel}}, \bibinfo {author} {\bibfnamefont {M.~P.}\ \bibnamefont
  {da~Silva}}, \bibinfo {author} {\bibfnamefont {C.~A.}\ \bibnamefont {Ryan}},
  \bibinfo {author} {\bibfnamefont {B.~R.}\ \bibnamefont {Johnson}},\ and\
  \bibinfo {author} {\bibfnamefont {T.}~\bibnamefont {Ohki}},\ }\bibinfo
  {title} {\emph {Robust extraction of tomographic information via randomized
  benchmarking}},\ \href {https://doi.org/10.1103/PhysRevX.4.011050} {\bibfield
   {journal} {\bibinfo  {journal} {Phys. Rev. X}\ }\textbf {\bibinfo {volume}
  {4}},\ \bibinfo {pages} {011050} (\bibinfo {year} {2014})},\ \Eprint
  {https://arxiv.org/abs/1306.2348} {arXiv:1306.2348 [quant-ph]} \BibitemShut
  {NoStop}%
\bibitem [{\citenamefont {Kimmel}\ and\ \citenamefont {Liu}(2017)}]{KimLiu16}%
  \BibitemOpen
  \bibfield  {author} {\bibinfo {author} {\bibfnamefont {S.}~\bibnamefont
  {Kimmel}}\ and\ \bibinfo {author} {\bibfnamefont {Y.~K.}\ \bibnamefont
  {Liu}},\ }\bibfield  {title} {\bibinfo {title} {\emph {Phase retrieval using
  unitary 2-designs}},\ }in\ \href
  {https://doi.org/10.1109/SAMPTA.2017.8024414} {\emph {\bibinfo {booktitle}
  {2017 International Conference on Sampling Theory and Applications
  (SampTA)}}}\ (\bibinfo {year} {2017})\ pp.\ \bibinfo {pages} {345--349},\
  \Eprint {https://arxiv.org/abs/1510.08887} {arXiv:1510.08887 [quant-ph]}
  \BibitemShut {NoStop}%
\bibitem [{\citenamefont {{Roth}}\ \emph {et~al.}(2018)\citenamefont {{Roth}},
  \citenamefont {{Kueng}}, \citenamefont {{Kimmel}}, \citenamefont {{Liu}},
  \citenamefont {{Gross}}, \citenamefont {{Eisert}},\ and\ \citenamefont
  {{Kliesch}}}]{RotKueKim18}%
  \BibitemOpen
  \bibfield  {author} {\bibinfo {author} {\bibfnamefont {I.}~\bibnamefont
  {{Roth}}}, \bibinfo {author} {\bibfnamefont {R.}~\bibnamefont {{Kueng}}},
  \bibinfo {author} {\bibfnamefont {S.}~\bibnamefont {{Kimmel}}}, \bibinfo
  {author} {\bibfnamefont {Y.~K.}\ \bibnamefont {{Liu}}}, \bibinfo {author}
  {\bibfnamefont {D.}~\bibnamefont {{Gross}}}, \bibinfo {author} {\bibfnamefont
  {J.}~\bibnamefont {{Eisert}}},\ and\ \bibinfo {author} {\bibfnamefont
  {M.}~\bibnamefont {{Kliesch}}},\ }\bibinfo {title} {\emph {Recovering quantum
  gates from few average gate fidelities}},\ \href
  {https://doi.org/10.1103/PhysRevLett.121.170502} {\bibfield  {journal}
  {\bibinfo  {journal} {\prl}\ }\textbf {\bibinfo {volume} {121}},\ \bibinfo
  {eid} {170502} (\bibinfo {year} {2018})},\ \Eprint
  {https://arxiv.org/abs/1803.00572} {arXiv:1803.00572 [quant-ph]} \BibitemShut
  {NoStop}%
\bibitem [{\citenamefont {Flammia}\ and\ \citenamefont
  {Wallman}(2020)}]{Flammia2019EfficientEstimation}%
  \BibitemOpen
  \bibfield  {author} {\bibinfo {author} {\bibfnamefont {S.~T.}\ \bibnamefont
  {Flammia}}\ and\ \bibinfo {author} {\bibfnamefont {J.~J.}\ \bibnamefont
  {Wallman}},\ }\bibinfo {title} {\emph {Efficient estimation of {Pauli}
  channels}},\ \href {https://doi.org/10.1145/3408039} {\bibfield  {journal}
  {\bibinfo  {journal} {ACM Transactions on Quantum Computing}\ }\textbf
  {\bibinfo {volume} {1}},\ \bibinfo {pages} {1} (\bibinfo {year} {2020})},\
  \Eprint {https://arxiv.org/abs/1907.12976} {arXiv:1907.12976 [quant-ph]}
  \BibitemShut {NoStop}%
\bibitem [{\citenamefont {{Merkel}}\ \emph {et~al.}(2013)\citenamefont
  {{Merkel}}, \citenamefont {{Gambetta}}, \citenamefont {{Smolin}},
  \citenamefont {{Poletto}}, \citenamefont {{C{\'o}rcoles}}, \citenamefont
  {{Johnson}}, \citenamefont {{Ryan}},\ and\ \citenamefont
  {{Steffen}}}]{MerGamSmo13}%
  \BibitemOpen
  \bibfield  {author} {\bibinfo {author} {\bibfnamefont {S.~T.}\ \bibnamefont
  {{Merkel}}}, \bibinfo {author} {\bibfnamefont {J.~M.}\ \bibnamefont
  {{Gambetta}}}, \bibinfo {author} {\bibfnamefont {J.~A.}\ \bibnamefont
  {{Smolin}}}, \bibinfo {author} {\bibfnamefont {S.}~\bibnamefont {{Poletto}}},
  \bibinfo {author} {\bibfnamefont {A.~D.}\ \bibnamefont {{C{\'o}rcoles}}},
  \bibinfo {author} {\bibfnamefont {B.~R.}\ \bibnamefont {{Johnson}}}, \bibinfo
  {author} {\bibfnamefont {C.~A.}\ \bibnamefont {{Ryan}}},\ and\ \bibinfo
  {author} {\bibfnamefont {M.}~\bibnamefont {{Steffen}}},\ }\bibinfo {title}
  {\emph {Self-consistent quantum process tomography}},\ \href
  {https://doi.org/10.1103/PhysRevA.87.062119} {\bibfield  {journal} {\bibinfo
  {journal} {Phys. Rev. A}\ }\textbf {\bibinfo {volume} {87}},\ \bibinfo {eid}
  {062119} (\bibinfo {year} {2013})},\ \Eprint
  {https://arxiv.org/abs/1211.0322} {arXiv:1211.0322 [quant-ph]} \BibitemShut
  {NoStop}%
\bibitem [{\citenamefont {{Stark}}(2014)}]{Stark2012Self-consistentTomography}%
  \BibitemOpen
  \bibfield  {author} {\bibinfo {author} {\bibfnamefont {C.}~\bibnamefont
  {{Stark}}},\ }\bibinfo {title} {\emph {Self-consistent tomography of the
  state-measurement {Gram} matrix}},\ \href
  {https://doi.org/10.1103/PhysRevA.89.052109} {\bibfield  {journal} {\bibinfo
  {journal} {\pra}\ }\textbf {\bibinfo {volume} {89}},\ \bibinfo {pages}
  {052109} (\bibinfo {year} {2014})},\ \Eprint
  {https://arxiv.org/abs/1209.5737} {arXiv:1209.5737 [quant-ph]} \BibitemShut
  {NoStop}%
\bibitem [{\citenamefont {{Stark}}()}]{Stark2012SimultaneousEstimation}%
  \BibitemOpen
  \bibfield  {author} {\bibinfo {author} {\bibfnamefont {C.}~\bibnamefont
  {{Stark}}},\ }\href@noop {} {\bibinfo {title} {\emph {Simultaneous estimation
  of dimension, states and measurements: Computation of representative density
  matrices and {POVMs}}}},\ \Eprint {https://arxiv.org/abs/1210.1105}
  {arXiv:1210.1105 [quant-ph]} \BibitemShut {NoStop}%
\bibitem [{\citenamefont {{Blume-Kohout}}\ \emph {et~al.}()\citenamefont
  {{Blume-Kohout}}, \citenamefont {{King Gamble}}, \citenamefont {{Nielsen}},
  \citenamefont {{Mizrahi}}, \citenamefont {{Sterk}},\ and\ \citenamefont
  {{Maunz}}}]{BluGamNie13}%
  \BibitemOpen
  \bibfield  {author} {\bibinfo {author} {\bibfnamefont {R.}~\bibnamefont
  {{Blume-Kohout}}}, \bibinfo {author} {\bibfnamefont {J.}~\bibnamefont {{King
  Gamble}}}, \bibinfo {author} {\bibfnamefont {E.}~\bibnamefont {{Nielsen}}},
  \bibinfo {author} {\bibfnamefont {J.}~\bibnamefont {{Mizrahi}}}, \bibinfo
  {author} {\bibfnamefont {J.~D.}\ \bibnamefont {{Sterk}}},\ and\ \bibinfo
  {author} {\bibfnamefont {P.}~\bibnamefont {{Maunz}}},\ }\href@noop {}
  {\bibinfo {title} {\emph {Robust, self-consistent, closed-form tomography of
  quantum logic gates on a trapped ion qubit}}},\ \Eprint
  {https://arxiv.org/abs/1310.4492} {arXiv:1310.4492 [quant-ph]} \BibitemShut
  {NoStop}%
\bibitem [{\citenamefont {{Blume-Kohout}}\ \emph {et~al.}(2017)\citenamefont
  {{Blume-Kohout}}, \citenamefont {{Gamble}}, \citenamefont {{Nielsen}},
  \citenamefont {{Rudinger}}, \citenamefont {{Mizrahi}}, \citenamefont
  {{Fortier}},\ and\ \citenamefont {{Maunz}}}]{BluGamNie17}%
  \BibitemOpen
  \bibfield  {author} {\bibinfo {author} {\bibfnamefont {R.}~\bibnamefont
  {{Blume-Kohout}}}, \bibinfo {author} {\bibfnamefont {J.~K.}\ \bibnamefont
  {{Gamble}}}, \bibinfo {author} {\bibfnamefont {E.}~\bibnamefont {{Nielsen}}},
  \bibinfo {author} {\bibfnamefont {K.}~\bibnamefont {{Rudinger}}}, \bibinfo
  {author} {\bibfnamefont {J.}~\bibnamefont {{Mizrahi}}}, \bibinfo {author}
  {\bibfnamefont {K.}~\bibnamefont {{Fortier}}},\ and\ \bibinfo {author}
  {\bibfnamefont {P.}~\bibnamefont {{Maunz}}},\ }\bibinfo {title} {\emph
  {Demonstration of qubit operations below a rigorous fault tolerance threshold
  with gate set tomography}},\ \href {https://doi.org/10.1038/ncomms14485}
  {\bibfield  {journal} {\bibinfo  {journal} {Nat. Commun.}\ }\textbf {\bibinfo
  {volume} {8}},\ \bibinfo {pages} {14485} (\bibinfo {year} {2017})},\ \Eprint
  {https://arxiv.org/abs/1605.07674} {arXiv:1605.07674 [quant-ph]} \BibitemShut
  {NoStop}%
\bibitem [{\citenamefont {{Greenbaum}}(2015)}]{Gre15}%
  \BibitemOpen
  \bibfield  {author} {\bibinfo {author} {\bibfnamefont {D.}~\bibnamefont
  {{Greenbaum}}},\ }\href@noop {} {\bibinfo {title} {\emph {Introduction to
  quantum gate set tomography}}},\ \Eprint {https://arxiv.org/abs/1509.02921}
  {arXiv:1509.02921 [quant-ph]}  (\bibinfo {year} {2015})\BibitemShut {NoStop}%
\bibitem [{\citenamefont {{Nielsen}}\ \emph {et~al.}(2020)\citenamefont
  {{Nielsen}}, \citenamefont {{Rudinger}}, \citenamefont {{Proctor}},
  \citenamefont {{Russo}}, \citenamefont {{Young}},\ and\ \citenamefont
  {{Blume-Kohout}}}]{Nielsen20ProbingQuantumProcessor}%
  \BibitemOpen
  \bibfield  {author} {\bibinfo {author} {\bibfnamefont {E.}~\bibnamefont
  {{Nielsen}}}, \bibinfo {author} {\bibfnamefont {K.}~\bibnamefont
  {{Rudinger}}}, \bibinfo {author} {\bibfnamefont {T.}~\bibnamefont
  {{Proctor}}}, \bibinfo {author} {\bibfnamefont {A.}~\bibnamefont {{Russo}}},
  \bibinfo {author} {\bibfnamefont {K.}~\bibnamefont {{Young}}},\ and\ \bibinfo
  {author} {\bibfnamefont {R.}~\bibnamefont {{Blume-Kohout}}},\ }\bibinfo
  {title} {\emph {Probing quantum processor performance with {pyGSTi}}},\ \href
  {https://doi.org/10.1088/2058-9565/ab8aa4} {\bibfield  {journal} {\bibinfo
  {journal} {Quantum Sci. Technol.}\ }\textbf {\bibinfo {volume} {5}},\
  \bibinfo {eid} {044002} (\bibinfo {year} {2020})},\ \Eprint
  {https://arxiv.org/abs/2002.12476} {arXiv:2002.12476 [quant-ph]} \BibitemShut
  {NoStop}%
\bibitem [{\citenamefont {Nielsen}\ \emph {et~al.}(2021)\citenamefont
  {Nielsen}, \citenamefont {Gamble}, \citenamefont {Rudinger}, \citenamefont
  {Scholten}, \citenamefont {Young},\ and\ \citenamefont
  {Blume-Kohout}}]{Nielsen2020GateSetTomography}%
  \BibitemOpen
  \bibfield  {author} {\bibinfo {author} {\bibfnamefont {E.}~\bibnamefont
  {Nielsen}}, \bibinfo {author} {\bibfnamefont {J.~K.}\ \bibnamefont {Gamble}},
  \bibinfo {author} {\bibfnamefont {K.}~\bibnamefont {Rudinger}}, \bibinfo
  {author} {\bibfnamefont {T.}~\bibnamefont {Scholten}}, \bibinfo {author}
  {\bibfnamefont {K.}~\bibnamefont {Young}},\ and\ \bibinfo {author}
  {\bibfnamefont {R.}~\bibnamefont {Blume-Kohout}},\ }\bibinfo {title} {\emph
  {Gate set tomography}},\ \href {https://doi.org/10.22331/q-2021-10-05-557}
  {\bibfield  {journal} {\bibinfo  {journal} {{Quantum}}\ }\textbf {\bibinfo
  {volume} {5}},\ \bibinfo {pages} {557} (\bibinfo {year} {2021})},\ \Eprint
  {https://arxiv.org/abs/2009.07301} {arXiv:2009.07301 [quant-ph]} \BibitemShut
  {NoStop}%
\bibitem [{\citenamefont {Dehollain}\ \emph {et~al.}(2016)\citenamefont
  {Dehollain}, \citenamefont {Muhonen}, \citenamefont {Blume-Kohout},
  \citenamefont {Rudinger}, \citenamefont {Gamble}, \citenamefont {Nielsen},
  \citenamefont {Laucht}, \citenamefont {Simmons}, \citenamefont {Kalra},
  \citenamefont {Dzurak} \emph {et~al.}}]{dehollain2016optimization}%
  \BibitemOpen
  \bibfield  {author} {\bibinfo {author} {\bibfnamefont {J.~P.}\ \bibnamefont
  {Dehollain}}, \bibinfo {author} {\bibfnamefont {J.~T.}\ \bibnamefont
  {Muhonen}}, \bibinfo {author} {\bibfnamefont {R.}~\bibnamefont
  {Blume-Kohout}}, \bibinfo {author} {\bibfnamefont {K.~M.}\ \bibnamefont
  {Rudinger}}, \bibinfo {author} {\bibfnamefont {J.~K.}\ \bibnamefont
  {Gamble}}, \bibinfo {author} {\bibfnamefont {E.}~\bibnamefont {Nielsen}},
  \bibinfo {author} {\bibfnamefont {A.}~\bibnamefont {Laucht}}, \bibinfo
  {author} {\bibfnamefont {S.}~\bibnamefont {Simmons}}, \bibinfo {author}
  {\bibfnamefont {R.}~\bibnamefont {Kalra}}, \bibinfo {author} {\bibfnamefont
  {A.~S.}\ \bibnamefont {Dzurak}}, \emph {et~al.},\ }\bibinfo {title} {\emph
  {Optimization of a solid-state electron spin qubit using gate set
  tomography}},\ \href@noop {} {\bibfield  {journal} {\bibinfo  {journal} {New
  Journal of Physics}\ }\textbf {\bibinfo {volume} {18}},\ \bibinfo {pages}
  {103018} (\bibinfo {year} {2016})}\BibitemShut {NoStop}%
\bibitem [{\citenamefont {{Rudinger}}\ \emph {et~al.}(2021)\citenamefont
  {{Rudinger}}, \citenamefont {{Hogle}}, \citenamefont {{Naik}}, \citenamefont
  {{Hashim}}, \citenamefont {{Lobser}}, \citenamefont {{Santiago}},
  \citenamefont {{Grace}}, \citenamefont {{Nielsen}}, \citenamefont
  {{Proctor}}, \citenamefont {{Seritan}}, \citenamefont {{Clark}},
  \citenamefont {{Blume-Kohout}}, \citenamefont {{Siddiqi}},\ and\
  \citenamefont {{Young}}}]{2021PRXQ....2d0338R}%
  \BibitemOpen
  \bibfield  {author} {\bibinfo {author} {\bibfnamefont {K.}~\bibnamefont
  {{Rudinger}}}, \bibinfo {author} {\bibfnamefont {C.~W.}\ \bibnamefont
  {{Hogle}}}, \bibinfo {author} {\bibfnamefont {R.~K.}\ \bibnamefont {{Naik}}},
  \bibinfo {author} {\bibfnamefont {A.}~\bibnamefont {{Hashim}}}, \bibinfo
  {author} {\bibfnamefont {D.}~\bibnamefont {{Lobser}}}, \bibinfo {author}
  {\bibfnamefont {D.~I.}\ \bibnamefont {{Santiago}}}, \bibinfo {author}
  {\bibfnamefont {M.~D.}\ \bibnamefont {{Grace}}}, \bibinfo {author}
  {\bibfnamefont {E.}~\bibnamefont {{Nielsen}}}, \bibinfo {author}
  {\bibfnamefont {T.}~\bibnamefont {{Proctor}}}, \bibinfo {author}
  {\bibfnamefont {S.}~\bibnamefont {{Seritan}}}, \bibinfo {author}
  {\bibfnamefont {S.~M.}\ \bibnamefont {{Clark}}}, \bibinfo {author}
  {\bibfnamefont {R.}~\bibnamefont {{Blume-Kohout}}}, \bibinfo {author}
  {\bibfnamefont {I.}~\bibnamefont {{Siddiqi}}},\ and\ \bibinfo {author}
  {\bibfnamefont {K.~C.}\ \bibnamefont {{Young}}},\ }\bibinfo {title} {\emph
  {{Experimental Characterization of Crosstalk Errors with Simultaneous Gate
  Set Tomography}}},\ \href {https://doi.org/10.1103/PRXQuantum.2.040338}
  {\bibfield  {journal} {\bibinfo  {journal} {PRX Quantum}\ }\textbf {\bibinfo
  {volume} {2}},\ \bibinfo {eid} {040338} (\bibinfo {year} {2021})},\ \Eprint
  {https://arxiv.org/abs/2103.09890} {arXiv:2103.09890 [quant-ph]} \BibitemShut
  {NoStop}%
\bibitem [{\citenamefont {{Song}}\ \emph {et~al.}(2019)\citenamefont {{Song}},
  \citenamefont {{Cui}}, \citenamefont {{Wang}}, \citenamefont {{Hao}},
  \citenamefont {{Feng}},\ and\ \citenamefont {{Li}}}]{2019SciA....5.5686S}%
  \BibitemOpen
  \bibfield  {author} {\bibinfo {author} {\bibfnamefont {C.}~\bibnamefont
  {{Song}}}, \bibinfo {author} {\bibfnamefont {J.}~\bibnamefont {{Cui}}},
  \bibinfo {author} {\bibfnamefont {H.}~\bibnamefont {{Wang}}}, \bibinfo
  {author} {\bibfnamefont {J.}~\bibnamefont {{Hao}}}, \bibinfo {author}
  {\bibfnamefont {H.}~\bibnamefont {{Feng}}},\ and\ \bibinfo {author}
  {\bibfnamefont {Y.}~\bibnamefont {{Li}}},\ }\bibinfo {title} {\emph {{Quantum
  computation with universal error mitigation on a superconducting quantum
  processor}}},\ \href {https://doi.org/10.1126/sciadv.aaw5686} {\bibfield
  {journal} {\bibinfo  {journal} {Science Advances}\ }\textbf {\bibinfo
  {volume} {5}},\ \bibinfo {pages} {eaaw5686} (\bibinfo {year} {2019})},\
  \Eprint {https://arxiv.org/abs/1812.10903} {arXiv:1812.10903 [quant-ph]}
  \BibitemShut {NoStop}%
\bibitem [{\citenamefont {{Zhang}}\ \emph {et~al.}(2020)\citenamefont
  {{Zhang}}, \citenamefont {{Lu}}, \citenamefont {{Zhang}}, \citenamefont
  {{Chen}}, \citenamefont {{Li}}, \citenamefont {{Zhang}},\ and\ \citenamefont
  {{Kim}}}]{2020NatCo..11..587Z}%
  \BibitemOpen
  \bibfield  {author} {\bibinfo {author} {\bibfnamefont {S.}~\bibnamefont
  {{Zhang}}}, \bibinfo {author} {\bibfnamefont {Y.}~\bibnamefont {{Lu}}},
  \bibinfo {author} {\bibfnamefont {K.}~\bibnamefont {{Zhang}}}, \bibinfo
  {author} {\bibfnamefont {W.}~\bibnamefont {{Chen}}}, \bibinfo {author}
  {\bibfnamefont {Y.}~\bibnamefont {{Li}}}, \bibinfo {author} {\bibfnamefont
  {J.-N.}\ \bibnamefont {{Zhang}}},\ and\ \bibinfo {author} {\bibfnamefont
  {K.}~\bibnamefont {{Kim}}},\ }\bibinfo {title} {\emph {{Error-mitigated
  quantum gates exceeding physical fidelities in a trapped-ion system}}},\
  \href {https://doi.org/10.1038/s41467-020-14376-z} {\bibfield  {journal}
  {\bibinfo  {journal} {Nature Communications}\ }\textbf {\bibinfo {volume}
  {11}},\ \bibinfo {eid} {587} (\bibinfo {year} {2020})},\ \Eprint
  {https://arxiv.org/abs/1905.10135} {arXiv:1905.10135 [quant-ph]} \BibitemShut
  {NoStop}%
\bibitem [{\citenamefont {{Cincio}}\ \emph {et~al.}(2020)\citenamefont
  {{Cincio}}, \citenamefont {{Rudinger}}, \citenamefont {{Sarovar}},\ and\
  \citenamefont {{Coles}}}]{2020arXiv200701210C}%
  \BibitemOpen
  \bibfield  {author} {\bibinfo {author} {\bibfnamefont {L.}~\bibnamefont
  {{Cincio}}}, \bibinfo {author} {\bibfnamefont {K.}~\bibnamefont
  {{Rudinger}}}, \bibinfo {author} {\bibfnamefont {M.}~\bibnamefont
  {{Sarovar}}},\ and\ \bibinfo {author} {\bibfnamefont {P.~J.}\ \bibnamefont
  {{Coles}}},\ }\href@noop {} {\bibinfo {title} {\emph {Machine learning of
  noise-resilient quantum circuits}}},\ \Eprint
  {https://arxiv.org/abs/2007.01210} {arXiv:2007.01210 [quant-ph]}  (\bibinfo
  {year} {2020})\BibitemShut {NoStop}%
\bibitem [{\citenamefont {{Knill}}(2005)}]{knill05}%
  \BibitemOpen
  \bibfield  {author} {\bibinfo {author} {\bibfnamefont {E.}~\bibnamefont
  {{Knill}}},\ }\bibinfo {title} {\emph {Quantum computing with realistically
  noisy devices}},\ \href {https://doi.org/10.1038/nature03350} {\bibfield
  {journal} {\bibinfo  {journal} {\nat}\ }\textbf {\bibinfo {volume} {434}},\
  \bibinfo {pages} {39} (\bibinfo {year} {2005})},\ \Eprint
  {https://arxiv.org/abs/quant-ph/0410199} {arXiv:quant-ph/0410199}
  \BibitemShut {NoStop}%
\bibitem [{\citenamefont {Aliferis}\ \emph {et~al.}(2006)\citenamefont
  {Aliferis}, \citenamefont {Gottesman},\ and\ \citenamefont
  {Preskill}}]{Aliferis2006a}%
  \BibitemOpen
  \bibfield  {author} {\bibinfo {author} {\bibfnamefont {P.}~\bibnamefont
  {Aliferis}}, \bibinfo {author} {\bibfnamefont {D.}~\bibnamefont
  {Gottesman}},\ and\ \bibinfo {author} {\bibfnamefont {J.}~\bibnamefont
  {Preskill}},\ }\bibinfo {title} {\emph {{Q}uantum accuracy threshold for
  concatenated distance-3 codes}},\ \href@noop {} {\bibfield  {journal}
  {\bibinfo  {journal} {Quant. Inf. Comput.}\ }\textbf {\bibinfo {volume}
  {6}},\ \bibinfo {pages} {97} (\bibinfo {year} {2006})},\ \Eprint
  {https://arxiv.org/abs/quant-ph/0504218} {arXiv:quant-ph/0504218}
  \BibitemShut {NoStop}%
\bibitem [{\citenamefont {{Cross}}\ \emph {et~al.}(2009)\citenamefont
  {{Cross}}, \citenamefont {{DiVincenzo}},\ and\ \citenamefont
  {{Terhal}}}]{Cross07ComparativeCode}%
  \BibitemOpen
  \bibfield  {author} {\bibinfo {author} {\bibfnamefont {A.~W.}\ \bibnamefont
  {{Cross}}}, \bibinfo {author} {\bibfnamefont {D.~P.}\ \bibnamefont
  {{DiVincenzo}}},\ and\ \bibinfo {author} {\bibfnamefont {B.~M.}\ \bibnamefont
  {{Terhal}}},\ }\bibinfo {title} {\emph {A comparative code study for quantum
  fault-tolerance}},\ \href@noop {} {\bibfield  {journal} {\bibinfo  {journal}
  {Quant. Inf. Comp.}\ }\textbf {\bibinfo {volume} {9}},\ \bibinfo {pages}
  {0541} (\bibinfo {year} {2009})},\ \Eprint {https://arxiv.org/abs/0711.1556}
  {arXiv:0711.1556 [quant-ph]} \BibitemShut {NoStop}%
\bibitem [{\citenamefont {{Aliferis}}\ and\ \citenamefont
  {{Preskill}}(2009)}]{2009PhRvA..79a2332A}%
  \BibitemOpen
  \bibfield  {author} {\bibinfo {author} {\bibfnamefont {P.}~\bibnamefont
  {{Aliferis}}}\ and\ \bibinfo {author} {\bibfnamefont {J.}~\bibnamefont
  {{Preskill}}},\ }\bibinfo {title} {\emph {{Fibonacci scheme for
  fault-tolerant quantum computation}}},\ \href
  {https://doi.org/10.1103/PhysRevA.79.012332} {\bibfield  {journal} {\bibinfo
  {journal} {\pra}\ }\textbf {\bibinfo {volume} {79}},\ \bibinfo {eid} {012332}
  (\bibinfo {year} {2009})},\ \Eprint {https://arxiv.org/abs/0809.5063}
  {arXiv:0809.5063 [quant-ph]} \BibitemShut {NoStop}%
\bibitem [{\citenamefont {Kueng}\ \emph {et~al.}(2016)\citenamefont {Kueng},
  \citenamefont {Long}, \citenamefont {Doherty},\ and\ \citenamefont
  {Flammia}}]{KueLonFla15}%
  \BibitemOpen
  \bibfield  {author} {\bibinfo {author} {\bibfnamefont {R.}~\bibnamefont
  {Kueng}}, \bibinfo {author} {\bibfnamefont {D.~M.}\ \bibnamefont {Long}},
  \bibinfo {author} {\bibfnamefont {A.~C.}\ \bibnamefont {Doherty}},\ and\
  \bibinfo {author} {\bibfnamefont {S.~T.}\ \bibnamefont {Flammia}},\ }\bibinfo
  {title} {\emph {Comparing experiments to the fault-tolerance threshold}},\
  \href {https://doi.org/10.1103/PhysRevLett.117.170502} {\bibfield  {journal}
  {\bibinfo  {journal} {\prl}\ }\textbf {\bibinfo {volume} {117}},\ \bibinfo
  {pages} {170502} (\bibinfo {year} {2016})},\ \Eprint
  {https://arxiv.org/abs/1510.05653} {arXiv:1510.05653 [quant-ph]} \BibitemShut
  {NoStop}%
\bibitem [{\citenamefont {{Sanders}}\ \emph {et~al.}(2016)\citenamefont
  {{Sanders}}, \citenamefont {{Wallman}},\ and\ \citenamefont
  {{Sanders}}}]{Sanders2015}%
  \BibitemOpen
  \bibfield  {author} {\bibinfo {author} {\bibfnamefont {Y.~R.}\ \bibnamefont
  {{Sanders}}}, \bibinfo {author} {\bibfnamefont {J.~J.}\ \bibnamefont
  {{Wallman}}},\ and\ \bibinfo {author} {\bibfnamefont {B.~C.}\ \bibnamefont
  {{Sanders}}},\ }\bibinfo {title} {\emph {Bounding quantum gate error rate
  based on reported average fidelity}},\ \href
  {https://doi.org/10.1088/1367-2630/18/1/012002} {\bibfield  {journal}
  {\bibinfo  {journal} {New J. Phys.}\ }\textbf {\bibinfo {volume} {18}},\
  \bibinfo {eid} {012002} (\bibinfo {year} {2016})},\ \Eprint
  {https://arxiv.org/abs/1501.04932} {arXiv:1501.04932 [quant-ph]} \BibitemShut
  {NoStop}%
\bibitem [{\citenamefont {{Wallman}}(2015)}]{Wal15}%
  \BibitemOpen
  \bibfield  {author} {\bibinfo {author} {\bibfnamefont {J.~J.}\ \bibnamefont
  {{Wallman}}},\ }\href@noop {} {\bibinfo {title} {\emph {{Bounding
  experimental quantum error rates relative to fault-tolerant thresholds}}}},\
  \Eprint {https://arxiv.org/abs/1511.00727} {arXiv:1511.00727 [quant-ph]}
  (\bibinfo {year} {2015})\BibitemShut {NoStop}%
\bibitem [{\citenamefont {Shabani}\ \emph {et~al.}(2011)\citenamefont
  {Shabani}, \citenamefont {Kosut}, \citenamefont {Mohseni}, \citenamefont
  {Rabitz}, \citenamefont {Broome}, \citenamefont {Almeida}, \citenamefont
  {Fedrizzi},\ and\ \citenamefont {White}}]{ShaKosMoh11a}%
  \BibitemOpen
  \bibfield  {author} {\bibinfo {author} {\bibfnamefont {A.}~\bibnamefont
  {Shabani}}, \bibinfo {author} {\bibfnamefont {R.~L.}\ \bibnamefont {Kosut}},
  \bibinfo {author} {\bibfnamefont {M.}~\bibnamefont {Mohseni}}, \bibinfo
  {author} {\bibfnamefont {H.}~\bibnamefont {Rabitz}}, \bibinfo {author}
  {\bibfnamefont {M.~A.}\ \bibnamefont {Broome}}, \bibinfo {author}
  {\bibfnamefont {M.~P.}\ \bibnamefont {Almeida}}, \bibinfo {author}
  {\bibfnamefont {A.}~\bibnamefont {Fedrizzi}},\ and\ \bibinfo {author}
  {\bibfnamefont {A.~G.}\ \bibnamefont {White}},\ }\bibinfo {title} {\emph
  {Efficient measurement of quantum dynamics via compressive sensing}},\ \href
  {https://doi.org/10.1103/PhysRevLett.106.100401} {\bibfield  {journal}
  {\bibinfo  {journal} {\prl}\ }\textbf {\bibinfo {volume} {106}},\ \bibinfo
  {pages} {100401} (\bibinfo {year} {2011})},\ \Eprint
  {https://arxiv.org/abs/0910.5498} {arXiv:0910.5498 [quant-ph]} \BibitemShut
  {NoStop}%
\bibitem [{\citenamefont {Flammia}\ \emph {et~al.}(2012)\citenamefont
  {Flammia}, \citenamefont {Gross}, \citenamefont {Liu},\ and\ \citenamefont
  {Eisert}}]{FlaGroLiu12}%
  \BibitemOpen
  \bibfield  {author} {\bibinfo {author} {\bibfnamefont {S.~T.}\ \bibnamefont
  {Flammia}}, \bibinfo {author} {\bibfnamefont {D.}~\bibnamefont {Gross}},
  \bibinfo {author} {\bibfnamefont {Y.-K.}\ \bibnamefont {Liu}},\ and\ \bibinfo
  {author} {\bibfnamefont {J.}~\bibnamefont {Eisert}},\ }\bibinfo {title}
  {\emph {Quantum tomography via compressed sensing: error bounds, sample
  complexity and efficient estimators}},\ \href
  {https://doi.org/10.1088/1367-2630/14/9/095022} {\bibfield  {journal}
  {\bibinfo  {journal} {New J. Phys.}\ }\textbf {\bibinfo {volume} {14}},\
  \bibinfo {pages} {095022} (\bibinfo {year} {2012})},\ \Eprint
  {https://arxiv.org/abs/1205.2300} {arXiv:1205.2300 [quant-ph]} \BibitemShut
  {NoStop}%
\bibitem [{\citenamefont {Kliesch}\ \emph {et~al.}(2016)\citenamefont
  {Kliesch}, \citenamefont {Kueng}, \citenamefont {Eisert},\ and\ \citenamefont
  {Gross}}]{KliKueEis16}%
  \BibitemOpen
  \bibfield  {author} {\bibinfo {author} {\bibfnamefont {M.}~\bibnamefont
  {Kliesch}}, \bibinfo {author} {\bibfnamefont {R.}~\bibnamefont {Kueng}},
  \bibinfo {author} {\bibfnamefont {J.}~\bibnamefont {Eisert}},\ and\ \bibinfo
  {author} {\bibfnamefont {D.}~\bibnamefont {Gross}},\ }\bibinfo {title} {\emph
  {Improving compressed sensing with the diamond norm}},\ \href
  {https://doi.org/10.1109/TIT.2016.2606500} {\bibfield  {journal} {\bibinfo
  {journal} {IEEE Trans. Inf. Th.}\ }\textbf {\bibinfo {volume} {62}},\
  \bibinfo {pages} {7445 } (\bibinfo {year} {2016})},\ \Eprint
  {https://arxiv.org/abs/1511.01513} {arXiv:1511.01513 [cs.IT]} \BibitemShut
  {NoStop}%
\bibitem [{\citenamefont {{Kliesch}}\ \emph {et~al.}(2019)\citenamefont
  {{Kliesch}}, \citenamefont {{Kueng}}, \citenamefont {{Eisert}},\ and\
  \citenamefont {{Gross}}}]{KliKueEis19}%
  \BibitemOpen
  \bibfield  {author} {\bibinfo {author} {\bibfnamefont {M.}~\bibnamefont
  {{Kliesch}}}, \bibinfo {author} {\bibfnamefont {R.}~\bibnamefont {{Kueng}}},
  \bibinfo {author} {\bibfnamefont {J.}~\bibnamefont {{Eisert}}},\ and\
  \bibinfo {author} {\bibfnamefont {D.}~\bibnamefont {{Gross}}},\ }\bibinfo
  {title} {\emph {Guaranteed recovery of quantum processes from few
  measurements}},\ \href {https://doi.org/10.22331/q-2019-08-12-171} {\bibfield
   {journal} {\bibinfo  {journal} {{Quantum}}\ }\textbf {\bibinfo {volume}
  {3}},\ \bibinfo {pages} {171} (\bibinfo {year} {2019})},\ \Eprint
  {https://arxiv.org/abs/1701.03135} {arXiv:1701.03135 [quant-ph]} \BibitemShut
  {NoStop}%
\bibitem [{\citenamefont {Baldwin}\ \emph {et~al.}(2014)\citenamefont
  {Baldwin}, \citenamefont {Kalev},\ and\ \citenamefont
  {Deutsch}}]{BalKalDeu14}%
  \BibitemOpen
  \bibfield  {author} {\bibinfo {author} {\bibfnamefont {C.~H.}\ \bibnamefont
  {Baldwin}}, \bibinfo {author} {\bibfnamefont {A.}~\bibnamefont {Kalev}},\
  and\ \bibinfo {author} {\bibfnamefont {I.~H.}\ \bibnamefont {Deutsch}},\
  }\bibinfo {title} {\emph {Quantum process tomography of unitary and
  near-unitary maps}},\ \href {https://doi.org/10.1103/PhysRevA.90.012110}
  {\bibfield  {journal} {\bibinfo  {journal} {Phys. Rev. A}\ }\textbf {\bibinfo
  {volume} {90}},\ \bibinfo {pages} {012110} (\bibinfo {year} {2014})},\
  \Eprint {https://arxiv.org/abs/1404.2877} {arXiv:1404.2877 [quant-ph]}
  \BibitemShut {NoStop}%
\bibitem [{\citenamefont {Kueng}\ \emph {et~al.}(2015)\citenamefont {Kueng},
  \citenamefont {Rauhut},\ and\ \citenamefont {Terstiege}}]{KueRauTer15}%
  \BibitemOpen
  \bibfield  {author} {\bibinfo {author} {\bibfnamefont {R.}~\bibnamefont
  {Kueng}}, \bibinfo {author} {\bibfnamefont {H.}~\bibnamefont {Rauhut}},\ and\
  \bibinfo {author} {\bibfnamefont {U.}~\bibnamefont {Terstiege}},\ }\bibinfo
  {title} {\emph {Low rank matrix recovery from rank one measurements}},\
  \bibfield  {journal} {\bibinfo  {journal} {Appl. Comp. Harm. Anal.}\ }\href
  {https://doi.org/10.1016/j.acha.2015.07.007} {10.1016/j.acha.2015.07.007}
  (\bibinfo {year} {2015}),\ \Eprint {https://arxiv.org/abs/1410.6913}
  {arXiv:1410.6913 [cs.IT]} \BibitemShut {NoStop}%
\bibitem [{\citenamefont {{Rodionov}}\ \emph {et~al.}(2014)\citenamefont
  {{Rodionov}}, \citenamefont {{Veitia}}, \citenamefont {{Barends}},
  \citenamefont {{Kelly}}, \citenamefont {{Sank}}, \citenamefont {{Wenner}},
  \citenamefont {{Martinis}}, \citenamefont {{Kosut}},\ and\ \citenamefont
  {{Korotkov}}}]{RodVeiBar14}%
  \BibitemOpen
  \bibfield  {author} {\bibinfo {author} {\bibfnamefont {A.~V.}\ \bibnamefont
  {{Rodionov}}}, \bibinfo {author} {\bibfnamefont {A.}~\bibnamefont
  {{Veitia}}}, \bibinfo {author} {\bibfnamefont {R.}~\bibnamefont {{Barends}}},
  \bibinfo {author} {\bibfnamefont {J.}~\bibnamefont {{Kelly}}}, \bibinfo
  {author} {\bibfnamefont {D.}~\bibnamefont {{Sank}}}, \bibinfo {author}
  {\bibfnamefont {J.}~\bibnamefont {{Wenner}}}, \bibinfo {author}
  {\bibfnamefont {J.~M.}\ \bibnamefont {{Martinis}}}, \bibinfo {author}
  {\bibfnamefont {R.~L.}\ \bibnamefont {{Kosut}}},\ and\ \bibinfo {author}
  {\bibfnamefont {A.~N.}\ \bibnamefont {{Korotkov}}},\ }\bibinfo {title} {\emph
  {Compressed sensing quantum process tomography for superconducting quantum
  gates}},\ \href {https://doi.org/10.1103/PhysRevB.90.144504} {\bibfield
  {journal} {\bibinfo  {journal} {Phys. Rev. B}\ }\textbf {\bibinfo {volume}
  {90}},\ \bibinfo {eid} {144504} (\bibinfo {year} {2014})},\ \Eprint
  {https://arxiv.org/abs/1407.0761} {arXiv:1407.0761 [quant-ph]} \BibitemShut
  {NoStop}%
\bibitem [{\citenamefont {Riofrio}\ \emph {et~al.}(2017)\citenamefont
  {Riofrio}, \citenamefont {Gross}, \citenamefont {Flammia}, \citenamefont
  {Monz}, \citenamefont {Nigg}, \citenamefont {Blatt},\ and\ \citenamefont
  {Eisert}}]{RioGroFla16}%
  \BibitemOpen
  \bibfield  {author} {\bibinfo {author} {\bibfnamefont {C.~A.}\ \bibnamefont
  {Riofrio}}, \bibinfo {author} {\bibfnamefont {D.}~\bibnamefont {Gross}},
  \bibinfo {author} {\bibfnamefont {S.~T.}\ \bibnamefont {Flammia}}, \bibinfo
  {author} {\bibfnamefont {T.}~\bibnamefont {Monz}}, \bibinfo {author}
  {\bibfnamefont {D.}~\bibnamefont {Nigg}}, \bibinfo {author} {\bibfnamefont
  {R.}~\bibnamefont {Blatt}},\ and\ \bibinfo {author} {\bibfnamefont
  {J.}~\bibnamefont {Eisert}},\ }\bibinfo {title} {\emph {Experimental quantum
  compressed sensing for a seven-qubit system}},\ \href
  {https://doi.org/10.1038/ncomms15305} {\bibfield  {journal} {\bibinfo
  {journal} {Nat. Commun.}\ }\textbf {\bibinfo {volume} {8}},\ \bibinfo {pages}
  {15305} (\bibinfo {year} {2017})},\ \Eprint
  {https://arxiv.org/abs/1608.02263} {arXiv:1608.02263 [quant-ph]} \BibitemShut
  {NoStop}%
\bibitem [{\citenamefont {{Steffens}}\ \emph {et~al.}(2017)\citenamefont
  {{Steffens}}, \citenamefont {{Riofr{\'\i}o}}, \citenamefont {{McCutcheon}},
  \citenamefont {{Roth}}, \citenamefont {{Bell}}, \citenamefont {{McMillan}},
  \citenamefont {{Tame}}, \citenamefont {{Rarity}},\ and\ \citenamefont
  {{Eisert}}}]{SteRioCut16}%
  \BibitemOpen
  \bibfield  {author} {\bibinfo {author} {\bibfnamefont {A.}~\bibnamefont
  {{Steffens}}}, \bibinfo {author} {\bibfnamefont {C.~A.}\ \bibnamefont
  {{Riofr{\'\i}o}}}, \bibinfo {author} {\bibfnamefont {W.}~\bibnamefont
  {{McCutcheon}}}, \bibinfo {author} {\bibfnamefont {I.}~\bibnamefont
  {{Roth}}}, \bibinfo {author} {\bibfnamefont {B.~A.}\ \bibnamefont {{Bell}}},
  \bibinfo {author} {\bibfnamefont {A.}~\bibnamefont {{McMillan}}}, \bibinfo
  {author} {\bibfnamefont {M.~S.}\ \bibnamefont {{Tame}}}, \bibinfo {author}
  {\bibfnamefont {J.~G.}\ \bibnamefont {{Rarity}}},\ and\ \bibinfo {author}
  {\bibfnamefont {J.}~\bibnamefont {{Eisert}}},\ }\bibinfo {title} {\emph
  {{Experimentally exploring compressed sensing quantum tomography}}},\ \href
  {https://doi.org/10.1088/2058-9565/aa6ae2} {\bibfield  {journal} {\bibinfo
  {journal} {Quantum Sci. Technol.}\ }\textbf {\bibinfo {volume} {2}},\
  \bibinfo {pages} {025005} (\bibinfo {year} {2017})},\ \Eprint
  {https://arxiv.org/abs/1611.01189} {arXiv:1611.01189 [quant-ph]} \BibitemShut
  {NoStop}%
\bibitem [{\citenamefont {{Kyrillidis}}\ \emph {et~al.}(2018)\citenamefont
  {{Kyrillidis}}, \citenamefont {{Kalev}}, \citenamefont {{Park}},
  \citenamefont {{Bhojanapalli}}, \citenamefont {{Caramanis}},\ and\
  \citenamefont {{Sanghavi}}}]{KyrKalPar18}%
  \BibitemOpen
  \bibfield  {author} {\bibinfo {author} {\bibfnamefont {A.}~\bibnamefont
  {{Kyrillidis}}}, \bibinfo {author} {\bibfnamefont {A.}~\bibnamefont
  {{Kalev}}}, \bibinfo {author} {\bibfnamefont {D.}~\bibnamefont {{Park}}},
  \bibinfo {author} {\bibfnamefont {S.}~\bibnamefont {{Bhojanapalli}}},
  \bibinfo {author} {\bibfnamefont {C.}~\bibnamefont {{Caramanis}}},\ and\
  \bibinfo {author} {\bibfnamefont {S.}~\bibnamefont {{Sanghavi}}},\ }\bibinfo
  {title} {\emph {Provable compressed sensing quantum state tomography via
  non-convex methods}},\ \href {https://doi.org/10.1038/s41534-018-0080-4}
  {\bibfield  {journal} {\bibinfo  {journal} {npj Quant. Inf.}\ }\textbf
  {\bibinfo {volume} {4}},\ \bibinfo {eid} {36} (\bibinfo {year} {2018})},\
  \Eprint {https://arxiv.org/abs/1711.02524} {arXiv:1711.02524 [quant-ph]}
  \BibitemShut {NoStop}%
\bibitem [{\citenamefont {Roth}\ \emph {et~al.}(2020)\citenamefont {Roth},
  \citenamefont {Wilkens}, \citenamefont {Hangleiter},\ and\ \citenamefont
  {Eisert}}]{RothEtAl:2020:SDD}%
  \BibitemOpen
  \bibfield  {author} {\bibinfo {author} {\bibfnamefont {I.}~\bibnamefont
  {Roth}}, \bibinfo {author} {\bibfnamefont {J.}~\bibnamefont {Wilkens}},
  \bibinfo {author} {\bibfnamefont {D.}~\bibnamefont {Hangleiter}},\ and\
  \bibinfo {author} {\bibfnamefont {J.}~\bibnamefont {Eisert}},\ }\href@noop {}
  {\bibinfo {title} {\emph {Semi-device-dependent blind quantum tomography}}},\
  \Eprint {https://arxiv.org/abs/2006.03069} {arXiv:2006.03069 [quant-ph]}
  (\bibinfo {year} {2020})\BibitemShut {NoStop}%
\bibitem [{\citenamefont {{Gross}}\ \emph {et~al.}(2010)\citenamefont
  {{Gross}}, \citenamefont {{Liu}}, \citenamefont {{Flammia}}, \citenamefont
  {{Becker}},\ and\ \citenamefont {{Eisert}}}]{gross_quantum_2010}%
  \BibitemOpen
  \bibfield  {author} {\bibinfo {author} {\bibfnamefont {D.}~\bibnamefont
  {{Gross}}}, \bibinfo {author} {\bibfnamefont {Y.-K.}\ \bibnamefont {{Liu}}},
  \bibinfo {author} {\bibfnamefont {S.~T.}\ \bibnamefont {{Flammia}}}, \bibinfo
  {author} {\bibfnamefont {S.}~\bibnamefont {{Becker}}},\ and\ \bibinfo
  {author} {\bibfnamefont {J.}~\bibnamefont {{Eisert}}},\ }\bibinfo {title}
  {\emph {Quantum state tomography via compressed sensing}},\ \href
  {https://doi.org/10.1103/PhysRevLett.105.150401} {\bibfield  {journal}
  {\bibinfo  {journal} {\prl}\ }\textbf {\bibinfo {volume} {105}},\ \bibinfo
  {eid} {150401} (\bibinfo {year} {2010})},\ \Eprint
  {https://arxiv.org/abs/0909.3304} {arXiv:0909.3304 [quant-ph]} \BibitemShut
  {NoStop}%
\bibitem [{\citenamefont {Gross}(2011)}]{Gro11}%
  \BibitemOpen
  \bibfield  {author} {\bibinfo {author} {\bibfnamefont {D.}~\bibnamefont
  {Gross}},\ }\bibinfo {title} {\emph {Recovering low-rank matrices from few
  coefficients in any basis}},\ \href
  {https://doi.org/10.1109/TIT.2011.2104999} {\bibfield  {journal} {\bibinfo
  {journal} {IEEE Trans. Inf. Th.}\ }\textbf {\bibinfo {volume} {57}},\
  \bibinfo {pages} {1548} (\bibinfo {year} {2011})},\ \Eprint
  {https://arxiv.org/abs/0910.1879} {arXiv:0910.1879 [cs.IT]} \BibitemShut
  {NoStop}%
\bibitem [{\citenamefont {Foucart}\ and\ \citenamefont
  {Rauhut}(2013)}]{FouRau13}%
  \BibitemOpen
  \bibfield  {author} {\bibinfo {author} {\bibfnamefont {S.}~\bibnamefont
  {Foucart}}\ and\ \bibinfo {author} {\bibfnamefont {H.}~\bibnamefont
  {Rauhut}},\ }\href {https://link.springer.com/book/10.1007/978-0-8176-4948-7}
  {\emph {\bibinfo {title} {A mathematical introduction to compressive
  sensing}}}\ (\bibinfo  {publisher} {Springer},\ \bibinfo {year}
  {2013})\BibitemShut {NoStop}%
\bibitem [{\citenamefont {Nielsen}\ \emph {et~al.}(2020)\citenamefont
  {Nielsen}, \citenamefont {Blume-Kohout}, \citenamefont {Saldyt},
  \citenamefont {Gross}, \citenamefont {Scholten}, \citenamefont {Rudinger},
  \citenamefont {Proctor}, \citenamefont {Gamble},\ and\ \citenamefont
  {Russo}}]{Nielsen20pyGSTi}%
  \BibitemOpen
  \bibfield  {author} {\bibinfo {author} {\bibfnamefont {E.}~\bibnamefont
  {Nielsen}}, \bibinfo {author} {\bibfnamefont {R.}~\bibnamefont
  {Blume-Kohout}}, \bibinfo {author} {\bibfnamefont {L.}~\bibnamefont
  {Saldyt}}, \bibinfo {author} {\bibfnamefont {J.}~\bibnamefont {Gross}},
  \bibinfo {author} {\bibfnamefont {T.~L.}\ \bibnamefont {Scholten}}, \bibinfo
  {author} {\bibfnamefont {K.}~\bibnamefont {Rudinger}}, \bibinfo {author}
  {\bibfnamefont {T.}~\bibnamefont {Proctor}}, \bibinfo {author} {\bibfnamefont
  {J.~K.}\ \bibnamefont {Gamble}},\ and\ \bibinfo {author} {\bibfnamefont
  {A.}~\bibnamefont {Russo}},\ }\href {https://doi.org/10.5281/zenodo.4047102}
  {\bibinfo {title} {\emph {pygstio/pygsti: Version 0.9.9.3}}} (\bibinfo {year}
  {2020})\BibitemShut {NoStop}%
\bibitem [{\citenamefont {{Huang}}\ \emph {et~al.}(2020)\citenamefont
  {{Huang}}, \citenamefont {{Kueng}},\ and\ \citenamefont
  {{Preskill}}}]{Huang2020Predicting}%
  \BibitemOpen
  \bibfield  {author} {\bibinfo {author} {\bibfnamefont {H.-Y.}\ \bibnamefont
  {{Huang}}}, \bibinfo {author} {\bibfnamefont {R.}~\bibnamefont {{Kueng}}},\
  and\ \bibinfo {author} {\bibfnamefont {J.}~\bibnamefont {{Preskill}}},\
  }\bibinfo {title} {\emph {Predicting many properties of a quantum system from
  very few measurements}},\ \href {https://doi.org/10.1038/s41567-020-0932-7}
  {\bibfield  {journal} {\bibinfo  {journal} {Nature Physics}\ }\textbf
  {\bibinfo {volume} {16}},\ \bibinfo {pages} {1050–1057} (\bibinfo {year}
  {2020})},\ \Eprint {https://arxiv.org/abs/2002.08953} {arXiv:2002.08953
  [quant-ph]} \BibitemShut {NoStop}%
\bibitem [{\citenamefont {Abrudan}\ \emph {et~al.}(2009)\citenamefont
  {Abrudan}, \citenamefont {Eriksson},\ and\ \citenamefont
  {Koivunen}}]{Abrudan2009ConjugateGradient}%
  \BibitemOpen
  \bibfield  {author} {\bibinfo {author} {\bibfnamefont {T.}~\bibnamefont
  {Abrudan}}, \bibinfo {author} {\bibfnamefont {J.}~\bibnamefont {Eriksson}},\
  and\ \bibinfo {author} {\bibfnamefont {V.}~\bibnamefont {Koivunen}},\
  }\bibinfo {title} {\emph {Conjugate gradient algorithm for optimization under
  unitary matrix constraint}},\ \href
  {https://doi.org/10.1016/j.sigpro.2009.03.015} {\bibfield  {journal}
  {\bibinfo  {journal} {Signal Processing}\ }\textbf {\bibinfo {volume} {89}},\
  \bibinfo {pages} {1704 } (\bibinfo {year} {2009})}\BibitemShut {NoStop}%
\bibitem [{\citenamefont {Edelman}\ \emph {et~al.}(1998)\citenamefont
  {Edelman}, \citenamefont {Arias},\ and\ \citenamefont
  {Smith}}]{edelman1998geometry}%
  \BibitemOpen
  \bibfield  {author} {\bibinfo {author} {\bibfnamefont {A.}~\bibnamefont
  {Edelman}}, \bibinfo {author} {\bibfnamefont {T.~A.}\ \bibnamefont {Arias}},\
  and\ \bibinfo {author} {\bibfnamefont {S.~T.}\ \bibnamefont {Smith}},\
  }\bibinfo {title} {\emph {The geometry of algorithms with orthogonality
  constraints}},\ \href {https://doi.org/10.1137/S0895479895290954} {\bibfield
  {journal} {\bibinfo  {journal} {SIAM Journal on Matrix Analysis and
  Applications}\ }\textbf {\bibinfo {volume} {20}},\ \bibinfo {pages} {303}
  (\bibinfo {year} {1998})},\ \Eprint {https://arxiv.org/abs/physics/9806030}
  {arXiv:physics/9806030 [physics.comp-ph]} \BibitemShut {NoStop}%
\bibitem [{\citenamefont {Manton}(2002)}]{manton2002optimization}%
  \BibitemOpen
  \bibfield  {author} {\bibinfo {author} {\bibfnamefont {J.~H.}\ \bibnamefont
  {Manton}},\ }\bibinfo {title} {\emph {Optimization algorithms exploiting
  unitary constraints}},\ \href {https://doi.org/10.1109/78.984753} {\bibfield
  {journal} {\bibinfo  {journal} {IEEE Trans. Signal Process.}\ }\textbf
  {\bibinfo {volume} {50}},\ \bibinfo {pages} {635} (\bibinfo {year}
  {2002})}\BibitemShut {NoStop}%
\bibitem [{\citenamefont {Sun}\ \emph {et~al.}(2019)\citenamefont {Sun},
  \citenamefont {Flammarion},\ and\ \citenamefont {Fazel}}]{sun2019escaping}%
  \BibitemOpen
  \bibfield  {author} {\bibinfo {author} {\bibfnamefont {Y.}~\bibnamefont
  {Sun}}, \bibinfo {author} {\bibfnamefont {N.}~\bibnamefont {Flammarion}},\
  and\ \bibinfo {author} {\bibfnamefont {M.}~\bibnamefont {Fazel}},\ }\bibfield
   {title} {\bibinfo {title} {\emph {Escaping from saddle points on
  {Riemannian} manifolds}},\ }in\ \href
  {https://proceedings.neurips.cc/paper/2019/file/24e01830d213d75deb99c22b9cd91ddd-Paper.pdf}
  {\emph {\bibinfo {booktitle} {Advances in Neural Information Processing
  Systems}}},\ Vol.~\bibinfo {volume} {32}\ (\bibinfo {year} {2019})\ \Eprint
  {https://arxiv.org/abs/1906.07355} {arXiv:1906.07355 [math.OC]} \BibitemShut
  {NoStop}%
\bibitem [{\citenamefont {Bortoloti}\ \emph {et~al.}(2020)\citenamefont
  {Bortoloti}, \citenamefont {Fernandes}, \citenamefont {Ferreira},\ and\
  \citenamefont {Yuan}}]{bortoloti2020damped}%
  \BibitemOpen
  \bibfield  {author} {\bibinfo {author} {\bibfnamefont {M.~A. d.~A.}\
  \bibnamefont {Bortoloti}}, \bibinfo {author} {\bibfnamefont {T.~A.}\
  \bibnamefont {Fernandes}}, \bibinfo {author} {\bibfnamefont {O.~P.}\
  \bibnamefont {Ferreira}},\ and\ \bibinfo {author} {\bibfnamefont
  {J.}~\bibnamefont {Yuan}},\ }\bibinfo {title} {\emph {Damped {Newton}’s
  method on {Riemannian} manifolds}},\ \href
  {https://doi.org/10.1007/s10898-020-00885-0} {\bibfield  {journal} {\bibinfo
  {journal} {Journal of Global Optimization}\ }\textbf {\bibinfo {volume}
  {77}},\ \bibinfo {pages} {643} (\bibinfo {year} {2020})},\ \Eprint
  {https://arxiv.org/abs/1803.05126} {arXiv:1803.05126 [math.OC]} \BibitemShut
  {NoStop}%
\bibitem [{\citenamefont {{Wisdom}}\ \emph {et~al.}(2016)\citenamefont
  {{Wisdom}}, \citenamefont {{Powers}}, \citenamefont {{Hershey}},
  \citenamefont {{Le Roux}},\ and\ \citenamefont {{Atlas}}}]{wisdom2016full}%
  \BibitemOpen
  \bibfield  {author} {\bibinfo {author} {\bibfnamefont {S.}~\bibnamefont
  {{Wisdom}}}, \bibinfo {author} {\bibfnamefont {T.}~\bibnamefont {{Powers}}},
  \bibinfo {author} {\bibfnamefont {J.~R.}\ \bibnamefont {{Hershey}}}, \bibinfo
  {author} {\bibfnamefont {J.}~\bibnamefont {{Le Roux}}},\ and\ \bibinfo
  {author} {\bibfnamefont {L.}~\bibnamefont {{Atlas}}},\ }\bibfield  {title}
  {\bibinfo {title} {\emph {Full-capacity unitary recurrent neural networks}},\
  }in\ \href
  {https://proceedings.neurips.cc/paper/2016/file/d9ff90f4000eacd3a6c9cb27f78994cf-Paper.pdf}
  {\emph {\bibinfo {booktitle} {Adv. Neural Inf. Process. Syst.}}},\
  Vol.~\bibinfo {volume} {29}\ (\bibinfo {year} {2016})\ \Eprint
  {https://arxiv.org/abs/1611.00035} {arXiv:1611.00035 [stat.ML]} \BibitemShut
  {NoStop}%
\bibitem [{\citenamefont {Boumal}\ \emph {et~al.}(2019)\citenamefont {Boumal},
  \citenamefont {Absil},\ and\ \citenamefont {Cartis}}]{boumal2019global}%
  \BibitemOpen
  \bibfield  {author} {\bibinfo {author} {\bibfnamefont {N.}~\bibnamefont
  {Boumal}}, \bibinfo {author} {\bibfnamefont {P.-A.}\ \bibnamefont {Absil}},\
  and\ \bibinfo {author} {\bibfnamefont {C.}~\bibnamefont {Cartis}},\ }\bibinfo
  {title} {\emph {Global rates of convergence for nonconvex optimization on
  manifolds}},\ \href {https://doi.org/10.1093/imanum/drx080} {\bibfield
  {journal} {\bibinfo  {journal} {IMA Journal of Numerical Analysis}\ }\textbf
  {\bibinfo {volume} {39}},\ \bibinfo {pages} {1} (\bibinfo {year} {2019})},\
  \Eprint {https://arxiv.org/abs/1605.08101} {arXiv:1605.08101 [math.OC]}
  \BibitemShut {NoStop}%
\bibitem [{\citenamefont {Helmke}\ and\ \citenamefont
  {Moore}(2012)}]{helmke2012optimization}%
  \BibitemOpen
  \bibfield  {author} {\bibinfo {author} {\bibfnamefont {U.}~\bibnamefont
  {Helmke}}\ and\ \bibinfo {author} {\bibfnamefont {J.~B.}\ \bibnamefont
  {Moore}},\ }\href@noop {} {\emph {\bibinfo {title} {Optimization and
  dynamical systems}}}\ (\bibinfo  {publisher} {Springer Science \& Business
  Media},\ \bibinfo {year} {2012})\BibitemShut {NoStop}%
\bibitem [{\citenamefont {Manton}\ \emph {et~al.}(2003)\citenamefont {Manton},
  \citenamefont {Mahony},\ and\ \citenamefont {Hua}}]{1166684}%
  \BibitemOpen
  \bibfield  {author} {\bibinfo {author} {\bibfnamefont {J.}~\bibnamefont
  {Manton}}, \bibinfo {author} {\bibfnamefont {R.}~\bibnamefont {Mahony}},\
  and\ \bibinfo {author} {\bibfnamefont {Y.}~\bibnamefont {Hua}},\ }\bibinfo
  {title} {\emph {The geometry of weighted low-rank approximations}},\ \href
  {https://doi.org/10.1109/TSP.2002.807002} {\bibfield  {journal} {\bibinfo
  {journal} {IEEE Transactions on Signal Processing}\ }\textbf {\bibinfo
  {volume} {51}},\ \bibinfo {pages} {500} (\bibinfo {year} {2003})}\BibitemShut
  {NoStop}%
\bibitem [{\citenamefont {Eld{\'e}n}\ and\ \citenamefont
  {Park}(1999)}]{elden1999procrustes}%
  \BibitemOpen
  \bibfield  {author} {\bibinfo {author} {\bibfnamefont {L.}~\bibnamefont
  {Eld{\'e}n}}\ and\ \bibinfo {author} {\bibfnamefont {H.}~\bibnamefont
  {Park}},\ }\bibinfo {title} {\emph {A procrustes problem on the stiefel
  manifold}},\ \href {https://doi.org/https://doi.org/10.1007/s002110050432}
  {\bibfield  {journal} {\bibinfo  {journal} {Numerische Mathematik}\ }\textbf
  {\bibinfo {volume} {82}},\ \bibinfo {pages} {599} (\bibinfo {year}
  {1999})}\BibitemShut {NoStop}%
\bibitem [{\citenamefont {Bridges}\ and\ \citenamefont
  {Reich}(2001)}]{BRIDGES2001219}%
  \BibitemOpen
  \bibfield  {author} {\bibinfo {author} {\bibfnamefont {T.~J.}\ \bibnamefont
  {Bridges}}\ and\ \bibinfo {author} {\bibfnamefont {S.}~\bibnamefont
  {Reich}},\ }\bibinfo {title} {\emph {Computing lyapunov exponents on a
  stiefel manifold}},\ \href
  {https://doi.org/https://doi.org/10.1016/S0167-2789(01)00283-4} {\bibfield
  {journal} {\bibinfo  {journal} {Physica D: Nonlinear Phenomena}\ }\textbf
  {\bibinfo {volume} {156}},\ \bibinfo {pages} {219} (\bibinfo {year}
  {2001})}\BibitemShut {NoStop}%
\bibitem [{\citenamefont {Lotte}\ \emph {et~al.}(2018)\citenamefont {Lotte},
  \citenamefont {Bougrain}, \citenamefont {Cichocki}, \citenamefont {Clerc},
  \citenamefont {Congedo}, \citenamefont {Rakotomamonjy},\ and\ \citenamefont
  {Yger}}]{Lotte_2018}%
  \BibitemOpen
  \bibfield  {author} {\bibinfo {author} {\bibfnamefont {F.}~\bibnamefont
  {Lotte}}, \bibinfo {author} {\bibfnamefont {L.}~\bibnamefont {Bougrain}},
  \bibinfo {author} {\bibfnamefont {A.}~\bibnamefont {Cichocki}}, \bibinfo
  {author} {\bibfnamefont {M.}~\bibnamefont {Clerc}}, \bibinfo {author}
  {\bibfnamefont {M.}~\bibnamefont {Congedo}}, \bibinfo {author} {\bibfnamefont
  {A.}~\bibnamefont {Rakotomamonjy}},\ and\ \bibinfo {author} {\bibfnamefont
  {F.}~\bibnamefont {Yger}},\ }\bibinfo {title} {\emph {A review of
  classification algorithms for {EEG}-based brain{\textendash}computer
  interfaces: a 10 year update}},\ \href
  {https://doi.org/10.1088/1741-2552/aab2f2} {\bibfield  {journal} {\bibinfo
  {journal} {Journal of Neural Engineering}\ }\textbf {\bibinfo {volume}
  {15}},\ \bibinfo {pages} {031005} (\bibinfo {year} {2018})}\BibitemShut
  {NoStop}%
\bibitem [{\citenamefont {Chiumiento}\ and\ \citenamefont
  {Melgaard}(2012)}]{CHIUMIENTO20121866}%
  \BibitemOpen
  \bibfield  {author} {\bibinfo {author} {\bibfnamefont {E.}~\bibnamefont
  {Chiumiento}}\ and\ \bibinfo {author} {\bibfnamefont {M.}~\bibnamefont
  {Melgaard}},\ }\bibinfo {title} {\emph {Stiefel and grassmann manifolds in
  quantum chemistry}},\ \href
  {https://doi.org/https://doi.org/10.1016/j.geomphys.2012.04.005} {\bibfield
  {journal} {\bibinfo  {journal} {Journal of Geometry and Physics}\ }\textbf
  {\bibinfo {volume} {62}},\ \bibinfo {pages} {1866} (\bibinfo {year}
  {2012})}\BibitemShut {NoStop}%
\bibitem [{\citenamefont {Yu}\ \emph {et~al.}(2016)\citenamefont {Yu},
  \citenamefont {Shen}, \citenamefont {Zhang},\ and\ \citenamefont
  {Letaief}}]{7397861}%
  \BibitemOpen
  \bibfield  {author} {\bibinfo {author} {\bibfnamefont {X.}~\bibnamefont
  {Yu}}, \bibinfo {author} {\bibfnamefont {J.-C.}\ \bibnamefont {Shen}},
  \bibinfo {author} {\bibfnamefont {J.}~\bibnamefont {Zhang}},\ and\ \bibinfo
  {author} {\bibfnamefont {K.~B.}\ \bibnamefont {Letaief}},\ }\bibinfo {title}
  {\emph {Alternating minimization algorithms for hybrid precoding in
  millimeter wave mimo systems}},\ \href
  {https://doi.org/10.1109/JSTSP.2016.2523903} {\bibfield  {journal} {\bibinfo
  {journal} {IEEE Journal of Selected Topics in Signal Processing}\ }\textbf
  {\bibinfo {volume} {10}},\ \bibinfo {pages} {485} (\bibinfo {year}
  {2016})}\BibitemShut {NoStop}%
\bibitem [{\citenamefont {Liu}\ \emph {et~al.}(2018)\citenamefont {Liu},
  \citenamefont {Masouros}, \citenamefont {Li}, \citenamefont {Sun},\ and\
  \citenamefont {Hanzo}}]{8288677}%
  \BibitemOpen
  \bibfield  {author} {\bibinfo {author} {\bibfnamefont {F.}~\bibnamefont
  {Liu}}, \bibinfo {author} {\bibfnamefont {C.}~\bibnamefont {Masouros}},
  \bibinfo {author} {\bibfnamefont {A.}~\bibnamefont {Li}}, \bibinfo {author}
  {\bibfnamefont {H.}~\bibnamefont {Sun}},\ and\ \bibinfo {author}
  {\bibfnamefont {L.}~\bibnamefont {Hanzo}},\ }\bibinfo {title} {\emph {Mu-mimo
  communications with mimo radar: From co-existence to joint transmission}},\
  \href {https://doi.org/10.1109/TWC.2018.2803045} {\bibfield  {journal}
  {\bibinfo  {journal} {IEEE Transactions on Wireless Communications}\ }\textbf
  {\bibinfo {volume} {17}},\ \bibinfo {pages} {2755} (\bibinfo {year}
  {2018})}\BibitemShut {NoStop}%
\bibitem [{\citenamefont {{Luchnikov}}\ \emph {et~al.}(2021)\citenamefont
  {{Luchnikov}}, \citenamefont {{Krechetov}},\ and\ \citenamefont
  {{Filippov}}}]{luchnikov2020riemannian}%
  \BibitemOpen
  \bibfield  {author} {\bibinfo {author} {\bibfnamefont {I.~A.}\ \bibnamefont
  {{Luchnikov}}}, \bibinfo {author} {\bibfnamefont {M.~E.}\ \bibnamefont
  {{Krechetov}}},\ and\ \bibinfo {author} {\bibfnamefont {S.~N.}\ \bibnamefont
  {{Filippov}}},\ }\bibinfo {title} {\emph {{Riemannian} geometry and automatic
  differentiation for optimization problems of quantum physics and quantum
  technologies}},\ \href {https://doi.org/10.1088/1367-2630/ac0b02} {\bibfield
  {journal} {\bibinfo  {journal} {\njp}\ }\textbf {\bibinfo {volume} {23}},\
  \bibinfo {eid} {073006} (\bibinfo {year} {2021})},\ \Eprint
  {https://arxiv.org/abs/2007.01287} {arXiv:2007.01287 [quant-ph]} \BibitemShut
  {NoStop}%
\bibitem [{\citenamefont {{Hangleiter}}\ \emph {et~al.}()\citenamefont
  {{Hangleiter}}, \citenamefont {{Roth}}, \citenamefont {{Eisert}},\ and\
  \citenamefont {{Roushan}}}]{Hangleiter21PreciseHamiltonianIdentification}%
  \BibitemOpen
  \bibfield  {author} {\bibinfo {author} {\bibfnamefont {D.}~\bibnamefont
  {{Hangleiter}}}, \bibinfo {author} {\bibfnamefont {I.}~\bibnamefont
  {{Roth}}}, \bibinfo {author} {\bibfnamefont {J.}~\bibnamefont {{Eisert}}},\
  and\ \bibinfo {author} {\bibfnamefont {P.}~\bibnamefont {{Roushan}}},\
  }\href@noop {} {\bibinfo {title} {\emph {Precise {Hamiltonian} identification
  of a superconducting quantum processor}}},\ \Eprint
  {https://arxiv.org/abs/2108.08319} {arXiv:2108.08319 [quant-ph]} \BibitemShut
  {NoStop}%
\bibitem [{\citenamefont {Hangleiter}\ \emph {et~al.}(2020)\citenamefont
  {Hangleiter}, \citenamefont {Roth}, \citenamefont {Nagaj},\ and\
  \citenamefont {Eisert}}]{HangleiterEtAl:2020:Easing}%
  \BibitemOpen
  \bibfield  {author} {\bibinfo {author} {\bibfnamefont {D.}~\bibnamefont
  {Hangleiter}}, \bibinfo {author} {\bibfnamefont {I.}~\bibnamefont {Roth}},
  \bibinfo {author} {\bibfnamefont {D.}~\bibnamefont {Nagaj}},\ and\ \bibinfo
  {author} {\bibfnamefont {J.}~\bibnamefont {Eisert}},\ }\bibinfo {title}
  {\emph {Easing the {Monte Carlo} sign problem}},\ \href
  {https://www.science.org/doi/10.1126/sciadv.abb8341} {\bibfield  {journal}
  {\bibinfo  {journal} {Science Advances}\ }\textbf {\bibinfo {volume} {6}},\
  \bibinfo {pages} {eabb8341} (\bibinfo {year} {2020})},\ \Eprint
  {https://arxiv.org/abs/1906.02309} {arXiv:1906.02309 [quant-ph]} \BibitemShut
  {NoStop}%
\bibitem [{\citenamefont {Dauphin}\ \emph {et~al.}(2014)\citenamefont
  {Dauphin}, \citenamefont {Pascanu}, \citenamefont {Gulcehre}, \citenamefont
  {Cho}, \citenamefont {Ganguli},\ and\ \citenamefont
  {Bengio}}]{Dauphin2014IdentifyingAndAttacking}%
  \BibitemOpen
  \bibfield  {author} {\bibinfo {author} {\bibfnamefont {Y.~N.}\ \bibnamefont
  {Dauphin}}, \bibinfo {author} {\bibfnamefont {R.}~\bibnamefont {Pascanu}},
  \bibinfo {author} {\bibfnamefont {C.}~\bibnamefont {Gulcehre}}, \bibinfo
  {author} {\bibfnamefont {K.}~\bibnamefont {Cho}}, \bibinfo {author}
  {\bibfnamefont {S.}~\bibnamefont {Ganguli}},\ and\ \bibinfo {author}
  {\bibfnamefont {Y.}~\bibnamefont {Bengio}},\ }\bibfield  {title} {\bibinfo
  {title} {\emph {Identifying and attacking the saddle point problem in
  high-dimensional non-convex optimization}},\ }in\ \href
  {https://proceedings.neurips.cc/paper/2014/file/17e23e50bedc63b4095e3d8204ce063b-Paper.pdf}
  {\emph {\bibinfo {booktitle} {Advances in Neural Information Processing
  Systems}}},\ Vol.~\bibinfo {volume} {27}\ (\bibinfo {year} {2014})\ \Eprint
  {https://arxiv.org/abs/1406.2572} {arXiv:1406.2572 [cs.LG]} \BibitemShut
  {NoStop}%
\bibitem [{\citenamefont {S{\"u}{\ss}}(2018)}]{SuessPhDThesis}%
  \BibitemOpen
  \bibfield  {author} {\bibinfo {author} {\bibfnamefont {D.}~\bibnamefont
  {S{\"u}{\ss}}},\ }\emph {\bibinfo {title} {Due to, or in spite of? The effect
  of constraints on efficiency in quantum estimation problems}},\ \href
  {https://kups.ub.uni-koeln.de/9012/} {Ph.D. thesis},\ \bibinfo  {school}
  {University of Cologne} (\bibinfo {year} {2018})\BibitemShut {NoStop}%
\bibitem [{\citenamefont {Imaizumi}\ \emph {et~al.}(2017)\citenamefont
  {Imaizumi}, \citenamefont {Maehara},\ and\ \citenamefont
  {Hayashi}}]{ImaMaeHay17}%
  \BibitemOpen
  \bibfield  {author} {\bibinfo {author} {\bibfnamefont {M.}~\bibnamefont
  {Imaizumi}}, \bibinfo {author} {\bibfnamefont {T.}~\bibnamefont {Maehara}},\
  and\ \bibinfo {author} {\bibfnamefont {K.}~\bibnamefont {Hayashi}},\
  }\bibinfo {title} {\emph {On tensor train rank minimization : Statistical
  efficiency and scalable algorithm}},\ \href
  {http://papers.nips.cc/paper/6982-on-tensor-train-rank-minimization-statistical-efficiency-and-scalable-algorithm.pdf}
  {\bibfield  {journal} {\bibinfo  {journal} {Adv. Neural Inf. Process. Syst.
  30}\ } (\bibinfo {year} {2017})},\ \Eprint {https://arxiv.org/abs/1708.00132}
  {arXiv:1708.00132 [stat.ML]} \BibitemShut {NoStop}%
\bibitem [{\citenamefont {{Rauhut}}\ \emph {et~al.}(2017)\citenamefont
  {{Rauhut}}, \citenamefont {{Schneider}},\ and\ \citenamefont
  {{Stojanac}}}]{RauSchSto16}%
  \BibitemOpen
  \bibfield  {author} {\bibinfo {author} {\bibfnamefont {H.}~\bibnamefont
  {{Rauhut}}}, \bibinfo {author} {\bibfnamefont {R.}~\bibnamefont
  {{Schneider}}},\ and\ \bibinfo {author} {\bibfnamefont {Z.}~\bibnamefont
  {{Stojanac}}},\ }\bibinfo {title} {\emph {Low rank tensor recovery via
  iterative hard thresholding}},\ \href
  {https://doi.org/10.1016/j.laa.2017.02.028} {\bibfield  {journal} {\bibinfo
  {journal} {Linear Algebra and its Applications}\ }\textbf {\bibinfo {volume}
  {523}},\ \bibinfo {pages} {220} (\bibinfo {year} {2017})},\ \Eprint
  {https://arxiv.org/abs/1602.05217} {arXiv:1602.05217 [cs.IT]} \BibitemShut
  {NoStop}%
\bibitem [{\citenamefont {{Ghadermarzy}}\ \emph {et~al.}(2018)\citenamefont
  {{Ghadermarzy}}, \citenamefont {{Plan}},\ and\ \citenamefont
  {{Yilmaz}}}]{GhaPlaYil17}%
  \BibitemOpen
  \bibfield  {author} {\bibinfo {author} {\bibfnamefont {N.}~\bibnamefont
  {{Ghadermarzy}}}, \bibinfo {author} {\bibfnamefont {Y.}~\bibnamefont
  {{Plan}}},\ and\ \bibinfo {author} {\bibfnamefont {{\"O}.}~\bibnamefont
  {{Yilmaz}}},\ }\bibinfo {title} {\emph {Near-optimal sample complexity for
  convex tensor completion}},\ \href@noop {} {\bibfield  {journal} {\bibinfo
  {journal} {Information and Inference: A Journal of the IMA}\ } (\bibinfo
  {year} {2018})},\ \Eprint {https://arxiv.org/abs/1711.04965}
  {arXiv:1711.04965 [cs.LG]} \BibitemShut {NoStop}%
\bibitem [{\citenamefont {{Ashraphijuo}}\ and\ \citenamefont
  {{Wang}}()}]{Ashraphijuo17CharacterizationOfDeterministic}%
  \BibitemOpen
  \bibfield  {author} {\bibinfo {author} {\bibfnamefont {M.}~\bibnamefont
  {{Ashraphijuo}}}\ and\ \bibinfo {author} {\bibfnamefont {X.}~\bibnamefont
  {{Wang}}},\ }\bibinfo {title} {\emph {Characterization of deterministic and
  probabilistic sampling patterns for finite completability of low tensor-train
  rank tensor}},\ \href@noop {} {\ }\Eprint {https://arxiv.org/abs/1703.07698}
  {arXiv:1703.07698 [cs.LG]} \BibitemShut {NoStop}%
\bibitem [{\citenamefont {{Rauhut}}\ and\ \citenamefont
  {{Stojanac}}(2021)}]{RauSto15}%
  \BibitemOpen
  \bibfield  {author} {\bibinfo {author} {\bibfnamefont {H.}~\bibnamefont
  {{Rauhut}}}\ and\ \bibinfo {author} {\bibfnamefont {{\v Z}.}~\bibnamefont
  {{Stojanac}}},\ }\bibinfo {title} {\emph {Tensor theta norms and low rank
  recovery}},\ \href {https://doi.org/10.1007/s11075-020-01029-x} {\bibfield
  {journal} {\bibinfo  {journal} {Numer. Algor.}\ }\textbf {\bibinfo {volume}
  {88}},\ \bibinfo {pages} {25} (\bibinfo {year} {2021})},\ \Eprint
  {https://arxiv.org/abs/1505.05175} {arXiv:1505.05175 [cs.IT]} \BibitemShut
  {NoStop}%
\bibitem [{\citenamefont {Huang}\ \emph {et~al.}(2014)\citenamefont {Huang},
  \citenamefont {Mu}, \citenamefont {Goldfarb},\ and\ \citenamefont
  {Wright}}]{HuaMuGol14}%
  \BibitemOpen
  \bibfield  {author} {\bibinfo {author} {\bibfnamefont {B.}~\bibnamefont
  {Huang}}, \bibinfo {author} {\bibfnamefont {C.}~\bibnamefont {Mu}}, \bibinfo
  {author} {\bibfnamefont {D.}~\bibnamefont {Goldfarb}},\ and\ \bibinfo
  {author} {\bibfnamefont {J.}~\bibnamefont {Wright}},\ }\bibinfo {title}
  {\emph {Provable low-rank tensor recovery}},\ \href
  {http://www.optimization-online.org/DB_FILE/2014/02/4252.pdf} {\bibfield
  {journal} {\bibinfo  {journal} {Optimization-Online}\ }\textbf {\bibinfo
  {volume} {4252}},\ \bibinfo {pages} {455} (\bibinfo {year}
  {2014})}\BibitemShut {NoStop}%
\bibitem [{\citenamefont {Liu}\ \emph {et~al.}(2020)\citenamefont {Liu},
  \citenamefont {Shan},\ and\ \citenamefont {Chen}}]{liu2020tensor}%
  \BibitemOpen
  \bibfield  {author} {\bibinfo {author} {\bibfnamefont {C.}~\bibnamefont
  {Liu}}, \bibinfo {author} {\bibfnamefont {H.}~\bibnamefont {Shan}},\ and\
  \bibinfo {author} {\bibfnamefont {C.}~\bibnamefont {Chen}},\ }\bibinfo
  {title} {\emph {Tensor p-shrinkage nuclear norm for low-rank tensor
  completion}},\ \href
  {https://doi.org/https://doi.org/10.1016/j.neucom.2020.01.009} {\bibfield
  {journal} {\bibinfo  {journal} {Neurocomputing}\ }\textbf {\bibinfo {volume}
  {387}},\ \bibinfo {pages} {255} (\bibinfo {year} {2020})},\ \Eprint
  {https://arxiv.org/abs/1907.04092} {arXiv:1907.04092 [cs.LG]} \BibitemShut
  {NoStop}%
\bibitem [{\citenamefont {{Boixo}}\ \emph {et~al.}(2018)\citenamefont
  {{Boixo}}, \citenamefont {{Isakov}}, \citenamefont {{Smelyanskiy}},
  \citenamefont {{Babbush}}, \citenamefont {{Ding}}, \citenamefont {{Jiang}},
  \citenamefont {{Bremner}}, \citenamefont {{Martinis}},\ and\ \citenamefont
  {{Neven}}}]{BoiIsaSme16}%
  \BibitemOpen
  \bibfield  {author} {\bibinfo {author} {\bibfnamefont {S.}~\bibnamefont
  {{Boixo}}}, \bibinfo {author} {\bibfnamefont {S.~V.}\ \bibnamefont
  {{Isakov}}}, \bibinfo {author} {\bibfnamefont {V.~N.}\ \bibnamefont
  {{Smelyanskiy}}}, \bibinfo {author} {\bibfnamefont {R.}~\bibnamefont
  {{Babbush}}}, \bibinfo {author} {\bibfnamefont {N.}~\bibnamefont {{Ding}}},
  \bibinfo {author} {\bibfnamefont {Z.}~\bibnamefont {{Jiang}}}, \bibinfo
  {author} {\bibfnamefont {M.~J.}\ \bibnamefont {{Bremner}}}, \bibinfo {author}
  {\bibfnamefont {J.~M.}\ \bibnamefont {{Martinis}}},\ and\ \bibinfo {author}
  {\bibfnamefont {H.}~\bibnamefont {{Neven}}},\ }\bibinfo {title} {\emph
  {Characterizing quantum supremacy in near-term devices}},\ \href
  {https://doi.org/10.1038/s41567-018-0124-x} {\bibfield  {journal} {\bibinfo
  {journal} {Nature Physics}\ }\textbf {\bibinfo {volume} {14}},\ \bibinfo
  {pages} {595} (\bibinfo {year} {2018})},\ \Eprint
  {https://arxiv.org/abs/1608.00263} {arXiv:1608.00263 [quant-ph]} \BibitemShut
  {NoStop}%
\bibitem [{\citenamefont {Helsen}\ \emph {et~al.}(2021)\citenamefont {Helsen},
  \citenamefont {Ioannou}, \citenamefont {Roth}, \citenamefont {Kitzinger},
  \citenamefont {Onorati}, \citenamefont {Werner},\ and\ \citenamefont
  {Eisert}}]{HelsenEtAl:2021:Estimating}%
  \BibitemOpen
  \bibfield  {author} {\bibinfo {author} {\bibfnamefont {J.}~\bibnamefont
  {Helsen}}, \bibinfo {author} {\bibfnamefont {M.}~\bibnamefont {Ioannou}},
  \bibinfo {author} {\bibfnamefont {I.}~\bibnamefont {Roth}}, \bibinfo {author}
  {\bibfnamefont {J.}~\bibnamefont {Kitzinger}}, \bibinfo {author}
  {\bibfnamefont {E.}~\bibnamefont {Onorati}}, \bibinfo {author} {\bibfnamefont
  {A.~H.}\ \bibnamefont {Werner}},\ and\ \bibinfo {author} {\bibfnamefont
  {J.}~\bibnamefont {Eisert}},\ }\href@noop {} {\bibinfo {title} {\emph
  {Estimating gate-set properties from random sequences}}},\ \Eprint
  {https://arxiv.org/abs/2110.13178} {arXiv:2110.13178 [quant-ph]}  (\bibinfo
  {year} {2021})\BibitemShut {NoStop}%
\bibitem [{\citenamefont {Gu}\ \emph {et~al.}(2021)\citenamefont {Gu},
  \citenamefont {Mishra}, \citenamefont {Englert},\ and\ \citenamefont
  {Ng}}]{Gu2021RandomizedLinearGate}%
  \BibitemOpen
  \bibfield  {author} {\bibinfo {author} {\bibfnamefont {Y.}~\bibnamefont
  {Gu}}, \bibinfo {author} {\bibfnamefont {R.}~\bibnamefont {Mishra}}, \bibinfo
  {author} {\bibfnamefont {B.-G.}\ \bibnamefont {Englert}},\ and\ \bibinfo
  {author} {\bibfnamefont {H.~K.}\ \bibnamefont {Ng}},\ }\bibinfo {title}
  {\emph {Randomized linear gate-set tomography}},\ \href
  {https://doi.org/10.1103/PRXQuantum.2.030328} {\bibfield  {journal} {\bibinfo
   {journal} {PRX Quantum}\ }\textbf {\bibinfo {volume} {2}},\ \bibinfo {pages}
  {030328} (\bibinfo {year} {2021})},\ \Eprint
  {https://arxiv.org/abs/2010.12235} {arXiv:2010.12235 [quant-ph]} \BibitemShut
  {NoStop}%
\bibitem [{\citenamefont {{Evans}}\ \emph {et~al.}(2022)\citenamefont
  {{Evans}}, \citenamefont {{Huang}}, \citenamefont {{Yoneda}}, \citenamefont
  {{Harper}}, \citenamefont {{Tanttu}}, \citenamefont {{Chan}}, \citenamefont
  {{Hudson}}, \citenamefont {{Itoh}}, \citenamefont {{Saraiva}}, \citenamefont
  {{Yang}}, \citenamefont {{Dzurak}},\ and\ \citenamefont
  {{Bartlett}}}]{2021arXiv210714473E}%
  \BibitemOpen
  \bibfield  {author} {\bibinfo {author} {\bibfnamefont {T.~J.}\ \bibnamefont
  {{Evans}}}, \bibinfo {author} {\bibfnamefont {W.}~\bibnamefont {{Huang}}},
  \bibinfo {author} {\bibfnamefont {J.}~\bibnamefont {{Yoneda}}}, \bibinfo
  {author} {\bibfnamefont {R.}~\bibnamefont {{Harper}}}, \bibinfo {author}
  {\bibfnamefont {T.}~\bibnamefont {{Tanttu}}}, \bibinfo {author}
  {\bibfnamefont {K.~W.}\ \bibnamefont {{Chan}}}, \bibinfo {author}
  {\bibfnamefont {F.~E.}\ \bibnamefont {{Hudson}}}, \bibinfo {author}
  {\bibfnamefont {K.~M.}\ \bibnamefont {{Itoh}}}, \bibinfo {author}
  {\bibfnamefont {A.}~\bibnamefont {{Saraiva}}}, \bibinfo {author}
  {\bibfnamefont {C.~H.}\ \bibnamefont {{Yang}}}, \bibinfo {author}
  {\bibfnamefont {A.~S.}\ \bibnamefont {{Dzurak}}},\ and\ \bibinfo {author}
  {\bibfnamefont {S.~D.}\ \bibnamefont {{Bartlett}}},\ }\bibinfo {title} {\emph
  {Fast {Bayesian} tomography of a two-qubit gate set in silicon}},\ \href
  {https://doi.org/10.1103/PhysRevApplied.17.024068} {\bibfield  {journal}
  {\bibinfo  {journal} {Phys. Rev. Applied}\ }\textbf {\bibinfo {volume}
  {17}},\ \bibinfo {eid} {024068} (\bibinfo {year} {2022})},\ \Eprint
  {https://arxiv.org/abs/2107.14473} {arXiv:2107.14473 [quant-ph]} \BibitemShut
  {NoStop}%
\bibitem [{\citenamefont {Montangero}(2018)}]{montangero2018introduction}%
  \BibitemOpen
  \bibfield  {author} {\bibinfo {author} {\bibfnamefont {S.}~\bibnamefont
  {Montangero}},\ }\href {https://doi.org/10.1007/978-3-030-01409-4} {\emph
  {\bibinfo {title} {Introduction to Tensor Network Methods}}}\ (\bibinfo
  {publisher} {Springer},\ \bibinfo {year} {2018})\BibitemShut {NoStop}%
\bibitem [{\citenamefont {Fannes}\ \emph {et~al.}(1992)\citenamefont {Fannes},
  \citenamefont {Nachtergaele},\ and\ \citenamefont {Werner}}]{FanNacWer92}%
  \BibitemOpen
  \bibfield  {author} {\bibinfo {author} {\bibfnamefont {M.}~\bibnamefont
  {Fannes}}, \bibinfo {author} {\bibfnamefont {B.}~\bibnamefont
  {Nachtergaele}},\ and\ \bibinfo {author} {\bibfnamefont {R.}~\bibnamefont
  {Werner}},\ }\bibinfo {title} {\emph {Finitely correlated states on quantum
  spin chains}},\ \href {https://doi.org/10.1007/BF02099178} {\bibfield
  {journal} {\bibinfo  {journal} {Commun. Math. Phys.}\ }\textbf {\bibinfo
  {volume} {144}},\ \bibinfo {pages} {443} (\bibinfo {year}
  {1992})}\BibitemShut {NoStop}%
\bibitem [{\citenamefont {Rommer}\ and\ \citenamefont
  {\"Ostlund}(1997)}]{Rommer97ClassOfAnsatz}%
  \BibitemOpen
  \bibfield  {author} {\bibinfo {author} {\bibfnamefont {S.}~\bibnamefont
  {Rommer}}\ and\ \bibinfo {author} {\bibfnamefont {S.}~\bibnamefont
  {\"Ostlund}},\ }\bibinfo {title} {\emph {Class of ansatz wave functions for
  one-dimensional spin systems and their relation to the density matrix
  renormalization group}},\ \href {https://doi.org/10.1103/PhysRevB.55.2164}
  {\bibfield  {journal} {\bibinfo  {journal} {\prb}\ }\textbf {\bibinfo
  {volume} {55}},\ \bibinfo {pages} {2164} (\bibinfo {year}
  {1997})}\BibitemShut {NoStop}%
\bibitem [{\citenamefont {{Verstraete}}\ \emph {et~al.}(2004)\citenamefont
  {{Verstraete}}, \citenamefont {{Porras}},\ and\ \citenamefont
  {{Cirac}}}]{Verstraete04DMRG}%
  \BibitemOpen
  \bibfield  {author} {\bibinfo {author} {\bibfnamefont {F.}~\bibnamefont
  {{Verstraete}}}, \bibinfo {author} {\bibfnamefont {D.}~\bibnamefont
  {{Porras}}},\ and\ \bibinfo {author} {\bibfnamefont {J.~I.}\ \bibnamefont
  {{Cirac}}},\ }\bibinfo {title} {\emph {Density matrix renormalization group
  and periodic boundary conditions: A quantum information perspective}},\ \href
  {https://doi.org/10.1103/PhysRevLett.93.227205} {\bibfield  {journal}
  {\bibinfo  {journal} {\prl}\ }\textbf {\bibinfo {volume} {93}},\ \bibinfo
  {eid} {227205} (\bibinfo {year} {2004})},\ \Eprint
  {https://arxiv.org/abs/cond-mat/0404706} {arXiv:cond-mat/0404706}
  \BibitemShut {NoStop}%
\bibitem [{\citenamefont {Oseledets}\ and\ \citenamefont
  {Tyrtyshnikov}(2009)}]{Oseledets09Breaking}%
  \BibitemOpen
  \bibfield  {author} {\bibinfo {author} {\bibfnamefont {I.~V.}\ \bibnamefont
  {Oseledets}}\ and\ \bibinfo {author} {\bibfnamefont {E.~E.}\ \bibnamefont
  {Tyrtyshnikov}},\ }\bibinfo {title} {\emph {Breaking the curse of
  dimensionality, or how to use {SVD} in many dimensions}},\ \href
  {https://doi.org/10.1137/090748330} {\bibfield  {journal} {\bibinfo
  {journal} {SIAM Journal on Scientific Computing}\ }\textbf {\bibinfo {volume}
  {31}},\ \bibinfo {pages} {3744} (\bibinfo {year} {2009})}\BibitemShut
  {NoStop}%
\bibitem [{\citenamefont {Kraus}(1983)}]{KrausEtAl:1983:States}%
  \BibitemOpen
  \bibfield  {author} {\bibinfo {author} {\bibfnamefont {K.}~\bibnamefont
  {Kraus}},\ }\href {https://doi.org/10.1007/3-540-12732-1} {\emph {\bibinfo
  {title} {{States, effects and operations: fundamental notions of quantum
  theory}}}},\ \bibinfo {series} {Lecture notes in physics}, Vol.\ \bibinfo
  {volume} {190}\ (\bibinfo  {publisher} {Springer},\ \bibinfo {address}
  {Berlin},\ \bibinfo {year} {1983})\BibitemShut {NoStop}%
\bibitem [{\citenamefont {{Lin}}\ \emph {et~al.}(2019)\citenamefont {{Lin}},
  \citenamefont {{Buonacorsi}}, \citenamefont {{Laflamme}},\ and\ \citenamefont
  {{Wallman}}}]{Lin19OnTheFreedom}%
  \BibitemOpen
  \bibfield  {author} {\bibinfo {author} {\bibfnamefont {J.}~\bibnamefont
  {{Lin}}}, \bibinfo {author} {\bibfnamefont {B.}~\bibnamefont {{Buonacorsi}}},
  \bibinfo {author} {\bibfnamefont {R.}~\bibnamefont {{Laflamme}}},\ and\
  \bibinfo {author} {\bibfnamefont {J.~J.}\ \bibnamefont {{Wallman}}},\
  }\bibinfo {title} {\emph {On the freedom in representing quantum
  operations}},\ \href {https://doi.org/10.1088/1367-2630/ab075a} {\bibfield
  {journal} {\bibinfo  {journal} {\njp}\ }\textbf {\bibinfo {volume} {21}},\
  \bibinfo {eid} {023006} (\bibinfo {year} {2019})},\ \Eprint
  {https://arxiv.org/abs/1810.05631} {arXiv:1810.05631 [quant-ph]} \BibitemShut
  {NoStop}%
\bibitem [{\citenamefont {Rudnicki}\ \emph {et~al.}(2018)\citenamefont
  {Rudnicki}, \citenamefont {Puchala},\ and\ \citenamefont
  {Zyczkowski}}]{Rudnicki2018gaugeinvariant}%
  \BibitemOpen
  \bibfield  {author} {\bibinfo {author} {\bibfnamefont {L.}~\bibnamefont
  {Rudnicki}}, \bibinfo {author} {\bibfnamefont {Z.}~\bibnamefont {Puchala}},\
  and\ \bibinfo {author} {\bibfnamefont {K.}~\bibnamefont {Zyczkowski}},\
  }\bibinfo {title} {\emph {Gauge invariant information concerning quantum
  channels}},\ \href {https://doi.org/10.22331/q-2018-04-11-60} {\bibfield
  {journal} {\bibinfo  {journal} {{Quantum}}\ }\textbf {\bibinfo {volume}
  {2}},\ \bibinfo {pages} {60} (\bibinfo {year} {2018})},\ \Eprint
  {https://arxiv.org/abs/1707.06926} {arXiv:1707.06926 [quant-ph]} \BibitemShut
  {NoStop}%
\bibitem [{\citenamefont {{Proctor}}\ \emph {et~al.}(2017)\citenamefont
  {{Proctor}}, \citenamefont {{Rudinger}}, \citenamefont {{Young}},
  \citenamefont {{Sarovar}},\ and\ \citenamefont
  {{Blume-Kohout}}}]{proctor2017WhatRandomizedBenchmarking}%
  \BibitemOpen
  \bibfield  {author} {\bibinfo {author} {\bibfnamefont {T.}~\bibnamefont
  {{Proctor}}}, \bibinfo {author} {\bibfnamefont {K.}~\bibnamefont
  {{Rudinger}}}, \bibinfo {author} {\bibfnamefont {K.}~\bibnamefont {{Young}}},
  \bibinfo {author} {\bibfnamefont {M.}~\bibnamefont {{Sarovar}}},\ and\
  \bibinfo {author} {\bibfnamefont {R.}~\bibnamefont {{Blume-Kohout}}},\
  }\bibinfo {title} {\emph {What randomized benchmarking actually measures}},\
  \href {https://doi.org/10.1103/PhysRevLett.119.130502} {\bibfield  {journal}
  {\bibinfo  {journal} {\prl}\ }\textbf {\bibinfo {volume} {119}},\ \bibinfo
  {eid} {130502} (\bibinfo {year} {2017})},\ \Eprint
  {https://arxiv.org/abs/1702.01853} {arXiv:1702.01853 [quant-ph]} \BibitemShut
  {NoStop}%
\bibitem [{\citenamefont {{Schollw{\"o}ck}}(2011)}]{Sch11}%
  \BibitemOpen
  \bibfield  {author} {\bibinfo {author} {\bibfnamefont {U.}~\bibnamefont
  {{Schollw{\"o}ck}}},\ }\bibinfo {title} {\emph {The density-matrix
  renormalization group in the age of matrix product states}},\ \href
  {https://doi.org/10.1016/j.aop.2010.09.012} {\bibfield  {journal} {\bibinfo
  {journal} {Ann. Phys.}\ }\textbf {\bibinfo {volume} {326}},\ \bibinfo {pages}
  {96} (\bibinfo {year} {2011})},\ \Eprint {https://arxiv.org/abs/1008.3477}
  {arXiv:1008.3477 [cond-mat.str-el]} \BibitemShut {NoStop}%
\bibitem [{\citenamefont {Schneider}(1965)}]{schneider1965positive}%
  \BibitemOpen
  \bibfield  {author} {\bibinfo {author} {\bibfnamefont {H.}~\bibnamefont
  {Schneider}},\ }\bibinfo {title} {\emph {Positive operators and an inertia
  theorem}},\ \href
  {https://link.springer.com/content/pdf/10.1007/BF01397969.pdf} {\bibfield
  {journal} {\bibinfo  {journal} {Numerische Mathematik}\ }\textbf {\bibinfo
  {volume} {7}},\ \bibinfo {pages} {11} (\bibinfo {year} {1965})}\BibitemShut
  {NoStop}%
\bibitem [{\citenamefont {Absil}\ \emph {et~al.}(2009)\citenamefont {Absil},
  \citenamefont {Mahony},\ and\ \citenamefont {Sepulchre}}]{Absil09}%
  \BibitemOpen
  \bibfield  {author} {\bibinfo {author} {\bibfnamefont {P.}~\bibnamefont
  {Absil}}, \bibinfo {author} {\bibfnamefont {R.}~\bibnamefont {Mahony}},\ and\
  \bibinfo {author} {\bibfnamefont {R.}~\bibnamefont {Sepulchre}},\ }\href
  {https://doi.org/10.1515/9781400830244} {\emph {\bibinfo {title}
  {Optimization algorithms on matrix manifolds}}}\ (\bibinfo  {publisher}
  {Princeton University Press},\ \bibinfo {year} {2009})\BibitemShut {NoStop}%
\bibitem [{\citenamefont {Grasedyck}\ \emph {et~al.}(2015)\citenamefont
  {Grasedyck}, \citenamefont {Kluge},\ and\ \citenamefont
  {Krämer}}]{Grasedyck_2015_Variants}%
  \BibitemOpen
  \bibfield  {author} {\bibinfo {author} {\bibfnamefont {L.}~\bibnamefont
  {Grasedyck}}, \bibinfo {author} {\bibfnamefont {M.}~\bibnamefont {Kluge}},\
  and\ \bibinfo {author} {\bibfnamefont {S.}~\bibnamefont {Krämer}},\
  }\bibinfo {title} {\emph {Variants of alternating least squares tensor
  completion in the tensor train format}},\ \href
  {https://doi.org/10.1137/130942401} {\bibfield  {journal} {\bibinfo
  {journal} {SIAM J. Sci. Comput.}\ }\textbf {\bibinfo {volume} {37}},\
  \bibinfo {pages} {A2424} (\bibinfo {year} {2015})},\ \Eprint
  {https://arxiv.org/abs/1509.00311} {arXiv:1509.00311 [math.NA]} \BibitemShut
  {NoStop}%
\bibitem [{\citenamefont {{Wang}}\ \emph {et~al.}()\citenamefont {{Wang}},
  \citenamefont {{Aggarwal}},\ and\ \citenamefont
  {{Aeron}}}]{2016arXiv160905587W}%
  \BibitemOpen
  \bibfield  {author} {\bibinfo {author} {\bibfnamefont {W.}~\bibnamefont
  {{Wang}}}, \bibinfo {author} {\bibfnamefont {V.}~\bibnamefont {{Aggarwal}}},\
  and\ \bibinfo {author} {\bibfnamefont {S.}~\bibnamefont {{Aeron}}},\
  }\href@noop {} {\bibinfo {title} {\emph {Tensor completion by alternating
  minimization under the tensor train ({TT}) model}}},\ \Eprint
  {https://arxiv.org/abs/1609.05587} {arXiv:1609.05587 [cs.NA]} \BibitemShut
  {NoStop}%
\bibitem [{\citenamefont {Jin}\ \emph {et~al.}(2021)\citenamefont {Jin},
  \citenamefont {Netrapalli}, \citenamefont {Ge}, \citenamefont {Kakade},\ and\
  \citenamefont {Jordan}}]{jin2021nonconvex}%
  \BibitemOpen
  \bibfield  {author} {\bibinfo {author} {\bibfnamefont {C.}~\bibnamefont
  {Jin}}, \bibinfo {author} {\bibfnamefont {P.}~\bibnamefont {Netrapalli}},
  \bibinfo {author} {\bibfnamefont {R.}~\bibnamefont {Ge}}, \bibinfo {author}
  {\bibfnamefont {S.~M.}\ \bibnamefont {Kakade}},\ and\ \bibinfo {author}
  {\bibfnamefont {M.~I.}\ \bibnamefont {Jordan}},\ }\bibinfo {title} {\emph {On
  nonconvex optimization for machine learning: Gradients, stochasticity, and
  saddle points}},\ \href {https://doi.org/10.1145/3418526} {\bibfield
  {journal} {\bibinfo  {journal} {J. ACM}\ }\textbf {\bibinfo {volume} {68}},\
  \bibinfo {pages} {1} (\bibinfo {year} {2021})},\ \Eprint
  {https://arxiv.org/abs/1902.04811} {arXiv:1902.04811 [cs.LG]} \BibitemShut
  {NoStop}%
\bibitem [{\citenamefont {Brieger}\ \emph {et~al.}(2021)\citenamefont
  {Brieger}, \citenamefont {Roth},\ and\ \citenamefont {Kliesch}}]{mGST2021}%
  \BibitemOpen
  \bibfield  {author} {\bibinfo {author} {\bibfnamefont {R.}~\bibnamefont
  {Brieger}}, \bibinfo {author} {\bibfnamefont {I.}~\bibnamefont {Roth}},\ and\
  \bibinfo {author} {\bibfnamefont {M.}~\bibnamefont {Kliesch}},\ }\href@noop
  {} {\bibinfo {title} {\emph {Python implementation of m{GST}, a compressive
  gate set tomography algorithm}}},\ \bibinfo {howpublished}
  {\url{https://github.com/rabrie/mGST}} (\bibinfo {year} {2021})\BibitemShut
  {NoStop}%
\bibitem [{\citenamefont {{Garc{\'\i}a-P{\'e}rez}}\ \emph
  {et~al.}(2020)\citenamefont {{Garc{\'\i}a-P{\'e}rez}}, \citenamefont
  {{Rossi}},\ and\ \citenamefont {{Maniscalco}}}]{2020npjQI...6....1G}%
  \BibitemOpen
  \bibfield  {author} {\bibinfo {author} {\bibfnamefont {G.}~\bibnamefont
  {{Garc{\'\i}a-P{\'e}rez}}}, \bibinfo {author} {\bibfnamefont {M.~A.~C.}\
  \bibnamefont {{Rossi}}},\ and\ \bibinfo {author} {\bibfnamefont
  {S.}~\bibnamefont {{Maniscalco}}},\ }\bibinfo {title} {\emph {{IBM Q
  Experience as a versatile experimental testbed for simulating open quantum
  systems}}},\ \href {https://doi.org/10.1038/s41534-019-0235-y} {\bibfield
  {journal} {\bibinfo  {journal} {npj Quantum Information}\ }\textbf {\bibinfo
  {volume} {6}},\ \bibinfo {eid} {1} (\bibinfo {year} {2020})},\ \Eprint
  {https://arxiv.org/abs/1906.07099} {arXiv:1906.07099 [quant-ph]} \BibitemShut
  {NoStop}%
\bibitem [{\citenamefont {{Vershynin}}(2012)}]{Ver12}%
  \BibitemOpen
  \bibfield  {author} {\bibinfo {author} {\bibfnamefont {R.}~\bibnamefont
  {{Vershynin}}},\ }\bibinfo {title} {\emph {Introduction to the non-asymptotic
  analysis of random matrices}},\ in\ \href@noop {} {\emph {\bibinfo
  {booktitle} {Compressed Sensing: Theory and Applications}}}\ (\bibinfo
  {publisher} {Cambridge University Press},\ \bibinfo {year} {2012})\ pp.\
  \bibinfo {pages} {210--268},\ \Eprint {https://arxiv.org/abs/1011.3027}
  {arXiv:1011.3027 [math.PR]} \BibitemShut {NoStop}%
\bibitem [{\citenamefont {Grasedyck}\ and\ \citenamefont
  {Kr{\"a}mer}(2019)}]{GrasedyckKraemer:2019:Stable}%
  \BibitemOpen
  \bibfield  {author} {\bibinfo {author} {\bibfnamefont {L.}~\bibnamefont
  {Grasedyck}}\ and\ \bibinfo {author} {\bibfnamefont {S.}~\bibnamefont
  {Kr{\"a}mer}},\ }\bibinfo {title} {\emph {Stable als approximation in the
  {TT}-format for rank-adaptive tensor completion}},\ \href
  {https://doi.org/10.1007/s00211-019-01072-4} {\bibfield  {journal} {\bibinfo
  {journal} {Numerische Mathematik}\ }\textbf {\bibinfo {volume} {143}},\
  \bibinfo {pages} {855} (\bibinfo {year} {2019})},\ \Eprint
  {https://arxiv.org/abs/1701.08045} {arXiv:1701.08045 [math.NA]} \BibitemShut
  {NoStop}%
\bibitem [{\citenamefont {Goe{\ss}mann}\ \emph {et~al.}(2020)\citenamefont
  {Goe{\ss}mann}, \citenamefont {G{\"o}tte}, \citenamefont {Roth},
  \citenamefont {Sweke}, \citenamefont {Kutyniok},\ and\ \citenamefont
  {Eisert}}]{GoessmannEtAl:2020:Tensor}%
  \BibitemOpen
  \bibfield  {author} {\bibinfo {author} {\bibfnamefont {A.}~\bibnamefont
  {Goe{\ss}mann}}, \bibinfo {author} {\bibfnamefont {M.}~\bibnamefont
  {G{\"o}tte}}, \bibinfo {author} {\bibfnamefont {I.}~\bibnamefont {Roth}},
  \bibinfo {author} {\bibfnamefont {R.}~\bibnamefont {Sweke}}, \bibinfo
  {author} {\bibfnamefont {G.}~\bibnamefont {Kutyniok}},\ and\ \bibinfo
  {author} {\bibfnamefont {J.}~\bibnamefont {Eisert}},\ }\bibfield  {title}
  {\bibinfo {title} {\emph {Tensor network approaches for learning non-linear
  dynamical laws}},\ }in\ \href@noop {} {\emph {\bibinfo {booktitle} {{NeurIPS
  2020 -- First Workshop on Quantum Tensor Networks in Machine Learning}}}}\
  (\bibinfo {year} {2020})\ \Eprint {https://arxiv.org/abs/2002.1238}
  {arXiv:2002.1238 [math.NA]} \BibitemShut {NoStop}%
\bibitem [{\citenamefont {{Cai}}\ \emph {et~al.}(2022)\citenamefont {{Cai}},
  \citenamefont {{Babbush}}, \citenamefont {{Benjamin}}, \citenamefont
  {{Endo}}, \citenamefont {{Huggins}}, \citenamefont {{Li}}, \citenamefont
  {{McClean}},\ and\ \citenamefont {{O'Brien}}}]{2022arXiv221000921C}%
  \BibitemOpen
  \bibfield  {author} {\bibinfo {author} {\bibfnamefont {Z.}~\bibnamefont
  {{Cai}}}, \bibinfo {author} {\bibfnamefont {R.}~\bibnamefont {{Babbush}}},
  \bibinfo {author} {\bibfnamefont {S.~C.}\ \bibnamefont {{Benjamin}}},
  \bibinfo {author} {\bibfnamefont {S.}~\bibnamefont {{Endo}}}, \bibinfo
  {author} {\bibfnamefont {W.~J.}\ \bibnamefont {{Huggins}}}, \bibinfo {author}
  {\bibfnamefont {Y.}~\bibnamefont {{Li}}}, \bibinfo {author} {\bibfnamefont
  {J.~R.}\ \bibnamefont {{McClean}}},\ and\ \bibinfo {author} {\bibfnamefont
  {T.~E.}\ \bibnamefont {{O'Brien}}},\ }\bibinfo {title} {\emph {{Quantum Error
  Mitigation}}},\ \href@noop {} {\bibfield  {journal} {\bibinfo  {journal}
  {arXiv e-prints}\ } (\bibinfo {year} {2022})},\ \Eprint
  {https://arxiv.org/abs/2210.00921} {arXiv:2210.00921 [quant-ph]} \BibitemShut
  {NoStop}%
\bibitem [{\citenamefont {{Endo}}\ \emph {et~al.}(2021)\citenamefont {{Endo}},
  \citenamefont {{Cai}}, \citenamefont {{Benjamin}},\ and\ \citenamefont
  {{Yuan}}}]{Endo21HybridQuantumClassical}%
  \BibitemOpen
  \bibfield  {author} {\bibinfo {author} {\bibfnamefont {S.}~\bibnamefont
  {{Endo}}}, \bibinfo {author} {\bibfnamefont {Z.}~\bibnamefont {{Cai}}},
  \bibinfo {author} {\bibfnamefont {S.~C.}\ \bibnamefont {{Benjamin}}},\ and\
  \bibinfo {author} {\bibfnamefont {X.}~\bibnamefont {{Yuan}}},\ }\bibinfo
  {title} {\emph {{Hybrid Quantum-Classical Algorithms and Quantum Error
  Mitigation}}},\ \href {https://doi.org/10.7566/JPSJ.90.032001} {\bibfield
  {journal} {\bibinfo  {journal} {Journal of the Physical Society of Japan}\
  }\textbf {\bibinfo {volume} {90}},\ \bibinfo {eid} {032001} (\bibinfo {year}
  {2021})},\ \Eprint {https://arxiv.org/abs/2011.01382} {arXiv:2011.01382
  [quant-ph]} \BibitemShut {NoStop}%
\bibitem [{\citenamefont {Koh}\ and\ \citenamefont
  {Grewal}(2022)}]{KohGrewal:2022:ClassicalShadows}%
  \BibitemOpen
  \bibfield  {author} {\bibinfo {author} {\bibfnamefont {D.~E.}\ \bibnamefont
  {Koh}}\ and\ \bibinfo {author} {\bibfnamefont {S.}~\bibnamefont {Grewal}},\
  }\bibinfo {title} {\emph {Classical {S}hadows {W}ith {N}oise}},\ \href
  {https://doi.org/10.22331/q-2022-08-16-776} {\bibfield  {journal} {\bibinfo
  {journal} {{Quantum}}\ }\textbf {\bibinfo {volume} {6}},\ \bibinfo {pages}
  {776} (\bibinfo {year} {2022})}\BibitemShut {NoStop}%
\bibitem [{\citenamefont {{Chen}}\ \emph {et~al.}(2021)\citenamefont {{Chen}},
  \citenamefont {{Yu}}, \citenamefont {{Zeng}},\ and\ \citenamefont
  {{Flammia}}}]{2021PRXQ....2c0348C}%
  \BibitemOpen
  \bibfield  {author} {\bibinfo {author} {\bibfnamefont {S.}~\bibnamefont
  {{Chen}}}, \bibinfo {author} {\bibfnamefont {W.}~\bibnamefont {{Yu}}},
  \bibinfo {author} {\bibfnamefont {P.}~\bibnamefont {{Zeng}}},\ and\ \bibinfo
  {author} {\bibfnamefont {S.~T.}\ \bibnamefont {{Flammia}}},\ }\bibinfo
  {title} {\emph {{Robust Shadow Estimation}}},\ \href
  {https://doi.org/10.1103/PRXQuantum.2.030348} {\bibfield  {journal} {\bibinfo
   {journal} {PRX Quantum}\ }\textbf {\bibinfo {volume} {2}},\ \bibinfo {eid}
  {030348} (\bibinfo {year} {2021})},\ \Eprint
  {https://arxiv.org/abs/2011.09636} {arXiv:2011.09636 [quant-ph]} \BibitemShut
  {NoStop}%
\bibitem [{\citenamefont {Hjrungnes}(2011)}]{hjorungnes2011complex}%
  \BibitemOpen
  \bibfield  {author} {\bibinfo {author} {\bibfnamefont {A.}~\bibnamefont
  {Hjrungnes}},\ }\href {https://dl.acm.org/doi/book/10.5555/2011870} {\emph
  {\bibinfo {title} {Complex-Valued Matrix Derivatives: With Applications in
  Signal Processing and Communications}}},\ \bibinfo {edition} {1st}\ ed.\
  (\bibinfo  {publisher} {Cambridge University Press},\ \bibinfo {address}
  {USA},\ \bibinfo {year} {2011})\BibitemShut {NoStop}%
\bibitem [{\citenamefont {Van Den~Bos}(1994)}]{van1994complex}%
  \BibitemOpen
  \bibfield  {author} {\bibinfo {author} {\bibfnamefont {A.}~\bibnamefont {Van
  Den~Bos}},\ }\bibinfo {title} {\emph {Complex gradient and {Hessian}}},\
  \href {https://doi.org/10.1049/ip-vis:19941555} {\bibfield  {journal}
  {\bibinfo  {journal} {IEE Proceedings-Vision, Image and Signal Processing}\
  }\textbf {\bibinfo {volume} {141}},\ \bibinfo {pages} {380} (\bibinfo {year}
  {1994})}\BibitemShut {NoStop}%
\bibitem [{\citenamefont {Waldron}(2018)}]{waldron2018introduction}%
  \BibitemOpen
  \bibfield  {author} {\bibinfo {author} {\bibfnamefont {S.~F.}\ \bibnamefont
  {Waldron}},\ }\href@noop {} {\emph {\bibinfo {title} {An introduction to
  finite tight frames}}}\ (\bibinfo  {publisher} {Springer},\ \bibinfo {year}
  {2018})\BibitemShut {NoStop}%
\end{thebibliography}
\end{document}